\title[ ]{  Criteria  for    eigenvalues embedded into the absolutely continuous spectrum  of   perturbed Stark type operators   }
\author{Wencai Liu}
\address[Wencai Liu]{Department of Mathematics, University of California, Irvine, California 92697-3875, USA}\email{liuwencai1226@gmail.com}
\theoremstyle{plain}
\newtheorem{theorem}{Theorem}[section]
\newtheorem{corollary}[theorem]{Corollary}
\newtheorem{lemma}[theorem]{Lemma}
\newtheorem{proposition}[theorem]{Proposition}
\newcommand{\R}{\mathbb{R}}
\newcommand{\Z}{\mathbb{Z}}
\theoremstyle{definition}
\newtheorem{remark}[theorem]{Remark}
\begin{document}


\begin{abstract}
In this paper, we consider  the  perturbed Stark operator
\begin{equation*}
  Hu=H_0u+qu=-u^{\prime\prime}-xu+qu,
\end{equation*}
where $q$ is  the  power-decaying    perturbation.
The    criteria for   $q$ such that $H=H_0+q$ has  at most one  eigenvalue (finitely many, infinitely many eigenvalues) are obtained.
 All the results are quantitative and  are  generalized to the  perturbed Stark type operator.

\end{abstract}
\maketitle
\section{Introduction}

The Stark operator $H_0$ is a self-adjoint operator on $L^2(\R)$ given by the potential $v(x)=-x$:
\begin{equation}\label{Gu}
    H_0u=-u^{\prime\prime}-xu.
\end{equation}
The operator describes a charged quantum particle in a constant electric field.
The Stark effect (named after   Johannes Stark \footnote{It was independently discovered   by the  physicist Antonino Lo Surdo.})
 originates from the interaction between a charge distribution (atom or molecule) and an external electric field.
  In many cases, the particle is also subjected  to an additional electric potential $q$.  For example, the hydrogen Stark effect is governed by an additional
  Coulomb potential.
Stark effect is a important subject in quantum theory, classical electrostatics or other physical  literatures  \cite{solem1997variations,epstein1926stark,courtney1995classical,py1}.
In mathematics, it  also attracts a lot of attentions, see \cite{Kor,Kor1,Kor2,Graf1,Her2,Her1,Her3,Her4,Her5,Ya1,Ya2,Jen1,Jen2,Jen3}.

In this paper,
 we will investigate a class of more general operators, which is called   the   {\it Stark type} operator.
Given    $v_{\alpha}(x)=-x^{\alpha}$  for $x\geq 0$ with  $0<\alpha<2$,  let $\widetilde{v}_{\alpha}(x)$ be an  extension of $v_{\alpha}(x)$ to $\R$ such that $\lim_{x\to -\infty}\widetilde{v}_{\alpha}(x)=\infty$.
Let $H_0^{\alpha}=-D^2 +v_{\alpha}$, which is defined  on $\R^+$ with some boundary condition at $x=0$,  and  $\widetilde{H}_0^{\alpha}=-D^2 +\widetilde{v}_{\alpha}$, which is defined on $\R$.
We call   $H_0^{\alpha}$($\widetilde{H}_0^{\alpha}$) the   Stark type operator.

 The perturbed     Stark  type operator is given by an additional potential:
\begin{equation}\label{Gpwstark}
  H^{\alpha}u=H_0^{\alpha}u+qu( \text{ or } \widetilde{H}^{\alpha}u=\widetilde{H}_0^{\alpha}u+qu),
\end{equation}
where $H_0^{\alpha}=-D^2+v_{\alpha}$ (or $\widetilde{H}_0^{\alpha}=-D^2+\widetilde{v}_{\alpha}$) and  $q$ is the decaying  perturbation.

It is well known that   $\sigma _{\rm ess}(H^{\alpha}_0)=\sigma _{\rm ac}(H^{\alpha}_0)=\R$ and $H^{\alpha}_0$ does not have any eigenvalue.
We are interested in the criteria    of perturbation such that  the associated perturbed Stark type operator has single embedded eigenvalue, finitely many embedded  eigenvalues or infinitely many embedded eigenvalues.
We refer the readers to  \cite{Her1,Her2} and references therein for embedded eigenvalues (resonances) of  operators with  Stark effect.
 In the following,
we always assume $\lim_{x\to\infty}|q(x)|=0$ and $\lim_{x\to-\infty}|q(x)|=0$.
%

For the single embedded eigenvalue problem,
Vakulenko showed if $q(x)=\frac{O(1)}{1+|x|^{\frac{1}{2}+\epsilon}}$ for some $\epsilon>0$, then the perturbed Stark operator $Hu=-u^{\prime\prime}-xu+qu$ has no eigenvalues
in $L^2(\R)$\cite{vakulenko1986nonexistence}.  Naboko and Pushnitskii proved that the perturbed    Stark type operator $H^{\alpha}_0+q$ on $\R^+$ has no eigenvalues if
$|q(x)|\leq \frac{C}{1+x^{1-\frac{\alpha}{2}}}$  with $C< 1-\frac{\alpha}{2}$ \cite{naboko1}. Is the bound $1-\frac{\alpha}{2}$ sharp? If it is not, what is the sharp bound?

Before answer  those questions, we want to mention   the problem of embedded eigenvalues for the  perturbed free    Schr\"odinger  operator $-D^2+V$.
Let $a=\limsup_{x\to \infty} x|V(x)|$.
By a result of Kato \cite{kato}, there is no eigenvalue $E$ with $E>a^2$, which holds  for Schr\"odinger operators in any dimension.
From the classical Wigner-von Neumann  type functions
\begin{equation*}
    V(x)=\frac{c}{1+x}\sin(  kx+\phi),
\end{equation*}
we know that one can not do better than $\frac{a^2}{4}$.
For one dimensional case, Atkinson and Everitt  \cite{atk} obtained the optimal bound $\frac{4a^2}{\pi ^2}$, that is
there is no eigenvalue in  $(\frac{4a^2}{\pi ^2},\infty)$ and that there are examples
with eigenvalues approaching  this bound.  We refer the readers to Simon's paper  for  full  history \cite{simon2017tosio} and  a short note \cite{liu} for a complete proof.

It is natural to ask what will happen at the transition line $\frac{4a^2}{\pi ^2}$.  Transition line is always hard to deal with since it can not be addressed in both sides.

 The  first purpose of this paper is to obtain the sharp spectral transition for existence of  eigenvalues  for      perturbed   Stark type operators
 and also explore what happens in the transition lines for both Schr\"odinger operators and Stark type operators. We should mention that some sharp results for  preservation of   the absolutely continuous spectrum are obtained  \cite{christ2003absolutely,kiselev2000absolutely,killipimrn}.


\begin{theorem}\label{Mainthm1}
Suppose the potential $q$  satisfies
\begin{equation*}
   \limsup_{x\to \infty}{x}^{1-\frac{\alpha}{2}}|q(x)|<\frac{2-\alpha}{4}\pi.
 \end{equation*}
 Then $ -u^{\prime\prime}-x^{\alpha}u+qu=Eu$ admits no $L^2(\R^+)$ solutions for any $E\in \R$.
\end{theorem}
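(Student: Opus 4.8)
The plan is to use a modified Prüfer-variable / WKB approach adapted to the Stark-type operator, since near infinity the potential $-x^\alpha$ makes solutions oscillate with a rapidly growing local frequency $k(x) \sim x^{\alpha/2}$. First I would bring the equation $-u'' - x^\alpha u + qu = Eu$ into an oscillatory normal form by introducing the classical momentum $k(x) = \sqrt{x^\alpha + E}$ (defined for $x$ large, where $x^\alpha + E > 0$) and the Liouville-Green change of variables $\xi = \int_{x_0}^x k(t)\,dt$, together with the amplitude substitution $u = k(x)^{-1/2} w$. This transforms the equation into $-\ddot w + \tilde q(\xi) w = w$-type form, i.e. $\ddot w + w = (\text{small})\, w$, where the new perturbation collects $q/k^2$ plus the curvature terms $k^{-1/2}(k^{-1/2})''$ coming from the amplitude change. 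The key quantitative point is that $k(x)^2 \sim x^\alpha$, so $\int^\infty |q(x)|/k(x)\,dx$ behaves like $\int^\infty x^{-\alpha/2}|q(x)|\,dx$, and in the new variable the effective decay rate of the perturbation, measured against $\xi$, is governed by $x^{1-\alpha/2}|q(x)|$ — which is exactly the quantity appearing in the hypothesis. The curvature terms $k^{-1/2}(k^{-1/2})''$ decay like $x^{-2}\cdot x^{\alpha} \cdot x^{-\alpha} $... more precisely like $x^{-2}$ relative to $k^2$ and are absolutely integrable in $\xi$ since $d\xi \sim x^{\alpha/2}dx$ and $1 - \alpha/2 + \alpha/2 > 1$; hence they are harmless and only $q$ matters for the borderline constant.

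**Main steps.**

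(1) Reduce to large $x$: any $L^2(\R^+)$ eigenfunction is determined on $[x_0,\infty)$ and it suffices to rule out a nonzero decaying solution there. (2) Perform the Liouville-Green transformation above to reach $\ddot w + w = V_{\mathrm{eff}}(\xi) w$ on a half-line in $\xi$, with $\limsup_{\xi\to\infty} \xi |V_{\mathrm{eff}}(\xi)| \le \limsup_{x\to\infty} x^{1-\alpha/2}|q(x)| \cdot c_\alpha + 0$, where one checks $c_\alpha$ is exactly $\tfrac{2}{2-\alpha}$ so that the product with the threshold matches; concretely $\xi = \int^x \sqrt{t^\alpha+E}\,dt \sim \frac{2}{2+\alpha}x^{1+\alpha/2}$ while $x^{1-\alpha/2}|q|$ translated gives $\xi|V_{\mathrm{eff}}(\xi)| \to \frac{2}{2-\alpha}\limsup x^{1-\alpha/2}|q(x)|$. (3) Apply the sharp Atkinson–Everitt bound for one-dimensional Schrödinger operators (as quoted in the introduction via \cite{atk,liu}): if $\limsup_{\xi\to\infty}\xi|V_{\mathrm{eff}}(\xi)| < \tfrac{\pi}{2}$ (the $a$ with $\tfrac{4a^2}{\pi^2} < 1$ condition rearranged for the normalized energy $E=1$), then $\ddot w + w = V_{\mathrm{eff}} w$ has no $L^2$ solution. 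The hypothesis $\limsup x^{1-\alpha/2}|q(x)| < \tfrac{2-\alpha}{4}\pi$ becomes, after multiplication by $\tfrac{2}{2-\alpha}$, precisely $\limsup \xi|V_{\mathrm{eff}}(\xi)| < \tfrac{\pi}{2}$. (4) Finally, transfer the non-existence back: since $u = k^{-1/2}w$ and $k^{-1/2}\sim x^{-\alpha/4}$ is bounded below polynomially, $u \in L^2$ forces $w \in L^2$ against a polynomial weight, and one must check $w \notin L^2$ in the unweighted sense still gives a contradiction — this is where a small amount of care with the weight $x^{-\alpha/2}\,dx$ versus $d\xi$ is needed, but since $w$ not decaying in the Atkinson–Everitt sense means $\liminf |w(\xi)| > 0$ along a sequence, and the measure change is locally comparable, $u \notin L^2$ follows.

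**The main obstacle.**

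The delicate point is tracking the \emph{exact} constant through the Liouville-Green transformation so that the threshold $\tfrac{2-\alpha}{4}\pi$ matches the sharp one-dimensional constant $\tfrac{4a^2}{\pi^2}$ under the normalization $E=1$ after the substitution — in particular verifying that $E\neq 0$ (and $E$ of either sign) does not shift the asymptotics of $\int^x\sqrt{t^\alpha+E}\,dt$ at the relevant order, and that the amplitude-curvature contribution genuinely drops out of the $\limsup$. A secondary technical issue is uniformity: the perturbation $V_{\mathrm{eff}}$ in the $\xi$ variable is not a clean $\frac{c}{1+\xi}\sin(\xi+\phi)$, so one should invoke the version of the Atkinson–Everitt theorem that only requires the $\limsup$ bound on $\xi|V_{\mathrm{eff}}(\xi)|$ plus mild regularity (bounded variation on unit intervals, say), which the transformed potential inherits from decay and smoothness of $q$ away from the turning point; alternatively, reprove the needed half-line non-existence directly with Prüfer angles, which is the route I would actually take to keep the argument self-contained.
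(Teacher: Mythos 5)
There is a genuine gap, and it sits exactly where you flagged ``a small amount of care with the weight'': after the Liouville--Green substitution $u=k^{-1/2}w$, $d\xi=k\,dx$, the condition $u\in L^2(\R^+,dx)$ is equivalent to $w\in L^2\bigl(p(\xi)\,d\xi\bigr)$ with the \emph{decaying} weight $p(\xi)=k^{-2}\sim \xi^{-\frac{2\alpha}{2+\alpha}}$, and this space strictly contains $L^2(d\xi)$. The Atkinson--Everitt theorem only rules out $w\in L^2(d\xi)$; it gives no pointwise lower bound like $\liminf|w|>0$ along a sequence, and even such a bound along a sequence would not force divergence of the weighted integral. So your step (4) does not close: non-existence in the unweighted sense does not imply non-existence in $L^2(p\,d\xi)$. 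What is actually needed (and what the paper does) is a quantitative lower bound on the Pr\"ufer amplitude, $\log R(\xi)\ge O(1)-(d+\epsilon)\int_{\xi_0}^{\xi}\frac{|\sin 2\theta|}{t}\,dt$ together with the oscillatory-integral estimate $\int_{\xi_0}^{\xi}\frac{|\sin2\theta(t)|}{t}\,dt=\frac{2}{\pi}\ln\frac{\xi}{\xi_0}+O(1)$ (Lemma \ref{Keyle1}), yielding $R(\xi)\gtrsim \xi^{-\frac{2}{\pi}(d+\epsilon)}$ and hence $R^2p\gtrsim \xi^{-1}\notin L^1$ precisely when $d$ is below the stated threshold. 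Your closing remark that you would ``reprove the needed half-line non-existence directly with Pr\"ufer angles'' points at the right fix, but as proposed the argument proves less than the theorem asserts.

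There is also an arithmetic error in the constant bookkeeping which masks the problem. Since $\xi\sim\frac{2}{2+\alpha}x^{1+\frac{\alpha}{2}}$ and $V_{\mathrm{eff}}\sim q/x^{\alpha}$, one gets $\xi|V_{\mathrm{eff}}(\xi)|\to \frac{2}{2+\alpha}\limsup x^{1-\frac{\alpha}{2}}|q(x)|$, not $\frac{2}{2-\alpha}$ times it. With the correct factor, the hypothesis $\limsup x^{1-\frac{\alpha}{2}}|q|<\frac{2-\alpha}{4}\pi$ corresponds to $\limsup\xi|V_{\mathrm{eff}}|<\frac{(2-\alpha)\pi}{2(2+\alpha)}$, which is \emph{strictly below} the unweighted Atkinson--Everitt threshold $\frac{\pi}{2}$; the apparent exact matching in your step (3) comes only from the wrong factor. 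This mismatch is not cosmetic: if the unweighted threshold $\frac{\pi}{2}$ were the right criterion, the theorem would hold under the weaker hypothesis $\limsup x^{1-\frac{\alpha}{2}}|q|<\frac{2+\alpha}{4}\pi$, contradicting the sharpness established in Theorem \ref{Mainthm2}. The smaller constant $\frac{(2-\alpha)\pi}{2(2+\alpha)}$ is exactly the threshold for the \emph{weighted} problem, i.e.\ it comes from balancing $R^2(\xi)\gtrsim\xi^{-\frac{4}{\pi}d}$ against $p(\xi)\sim\xi^{-\frac{2\alpha}{2+\alpha}}$, which again shows that the weighted $L^2$ analysis is the essential content of the proof, not a technical afterthought.
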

The following result shows that the bound $\frac{2-\alpha}{4}\pi$  is optimal and can be achieved.
\begin{theorem}\label{Mainthm2}
For any $E\in \R$, $a\geq\frac{2-\alpha}{4}\pi$ and $\theta\in [0,\pi]$, there exist potentials $q(x)$ on $\R^+$ such that
\begin{equation*}
   \limsup_{x\to \infty}{x}^{1-\frac{\alpha}{2}}|q(x)|=a,
 \end{equation*}
 and  eigen-equation  $ -u^{\prime\prime}-x^{\alpha}u+qu=Eu$  has  an  $L^2(\R^+)$ solution with the boundary condition $\frac{u^{\prime}(0)}{u(0)}=\tan\theta$.

\end{theorem}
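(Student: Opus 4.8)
The plan is to run a Wigner-von Neumann type construction which produces $q$ together with an $L^{2}$ solution of $-u''-x^{\alpha}u+qu=Eu$. Fix $E\in\R$, $a\ge\frac{2-\alpha}{4}\pi$ and $\theta\in[0,\pi]$. First I would pick $x_{0}$ large, set $q\equiv 0$ on $[0,x_{0}]$, and solve the (regular) initial value problem $-u''-x^{\alpha}u=Eu$ on $[0,x_{0}]$ with $u'(0)/u(0)=\tan\theta$ (Dirichlet data $u(0)=0$, $u'(0)=1$ when $\theta=\pi/2$); choosing $x_{0}$ large and generic so that $u(x_{0})\neq 0$, it then suffices to build $q$ on $[x_{0},\infty)$ together with an $L^{2}(x_{0},\infty)$ solution continuing the Cauchy data $(u(x_{0}),u'(x_{0}))$. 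Since $q$ vanishes on $[0,x_{0}]$, the value of $\limsup_{x\to\infty}x^{1-\alpha/2}|q(x)|$ is governed entirely by the tail, and the two pieces glue to a global $C^{1}$ solution lying in $L^{2}(\R^{+})$.

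On $[x_{0},\infty)$ I would use modified Pr\"ufer variables adapted to the oscillatory background: write $u=p^{-1/2}R\sin\vartheta$, $u'=p^{1/2}R\cos\vartheta$ with $p(x)=\sqrt{x^{\alpha}+E}$ (positive for large $x$), fixing $\vartheta(x_{0}),R(x_{0})$ from the data at $x_{0}$. A direct computation turns the eigenequation into
\begin{equation*}
\vartheta'=p-\frac{q}{p}\sin^{2}\vartheta+\frac{p'}{2p}\sin 2\vartheta,
\qquad
(\log R)'=\frac{q}{2p}\sin 2\vartheta-\frac{p'}{2p}\cos 2\vartheta .
\end{equation*}
Two observations are decisive. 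First, $\dfrac{1}{p(x)\,x^{1-\alpha/2}}=\dfrac{1}{x}\bigl(1+O(x^{-\alpha})\bigr)$, so a perturbation of size $x^{-(1-\alpha/2)}$ enters $(\log R)'$ at the critical rate $1/x$. Second, since $\vartheta'\approx p\to\infty$ the two terms carrying $p'/p$ are oscillatory integrals; integrating by parts they contribute only $O(1)$ to $\log R$ and may be ignored.

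The core of the argument is the resonant, feedback choice
\[
q(x)=-\,b(x)\,x^{-(1-\alpha/2)}\,\operatorname{sgn}\!\bigl(\sin 2\vartheta(x)\bigr),
\]
where $b>0$ is to be chosen; because $q$ is defined through $\vartheta$ rather than through $u$ it is automatically bounded by $b(x)\,x^{-(1-\alpha/2)}$, so no singularities arise at zeros of $\sin\vartheta$. This $q$ gives $(\log R)'=-\frac{b(x)}{2x}\,|\sin 2\vartheta|\,(1+o(1))+(\text{oscillatory})$, and since $\vartheta$ sweeps many periods on every interval $[X,2X]$, $|\sin 2\vartheta|$ equidistributes with mean $\tfrac{2}{\pi}$, whence $\log R(x)=-\frac{1}{\pi}\int^{x}\frac{b(s)}{s}\,ds+O(1)$. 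For $a>\frac{2-\alpha}{4}\pi$ take $b\equiv a$: then $R(x)\asymp x^{-a/\pi}$, so $\|u\|_{L^{2}(x_{0},\infty)}^{2}\le C\int^{\infty}x^{-2a/\pi-\alpha/2}\,dx<\infty$ precisely because $\frac{2a}{\pi}+\frac{\alpha}{2}>1$, while $x^{1-\alpha/2}|q(x)|\equiv a$ off the countable zero set of $\sin 2\vartheta$, so $\limsup_{x\to\infty}x^{1-\alpha/2}|q(x)|=a$. For the borderline value $a=\frac{2-\alpha}{4}\pi$ take instead $b(x)=a\bigl(1+\kappa/\log x\bigr)$ with $\kappa>\frac{2}{2-\alpha}$: then $b(x)\to a$ so still $\limsup_{x\to\infty}x^{1-\alpha/2}|q(x)|=a$, whereas now $R(x)\asymp x^{-a/\pi}(\log x)^{-a\kappa/\pi}$ and $\|u\|_{L^{2}(x_{0},\infty)}^{2}\le C\int^{\infty}x^{-1}(\log x)^{-2a\kappa/\pi}\,dx<\infty$ because $\tfrac{2a\kappa}{\pi}>1$.

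The main obstacle is twofold: legitimizing the feedback definition of $q$, and making the averaging quantitative enough for the borderline case. For the former, note that for $x_{0}$ large one has $\vartheta'\ge p/2>0$, so $\vartheta$ meets the discontinuity set $\{\sin 2\vartheta=0\}$ transversally and the Pr\"ufer system has a unique absolutely continuous solution on $[x_{0},\infty)$ (alternatively, replace $\operatorname{sgn}$ by a fixed continuous odd step function: this lowers $\tfrac{2}{\pi}$ only slightly, which is immaterial when $a>\frac{2-\alpha}{4}\pi$ and is absorbed by enlarging $\kappa$ on the transition line). For the latter, expand $|\sin 2\vartheta|-\tfrac{2}{\pi}$ in Fourier modes $e^{4ik\vartheta}$ and integrate by parts against $\vartheta'\approx p$; since $p'/p^{2}$ and $b'/b$ are integrable against $x^{-1}$, the error in replacing $\int^{x}\frac{b(s)}{2s}|\sin 2\vartheta|\,ds$ by $\frac{1}{\pi}\int^{x}\frac{b(s)}{s}\,ds$ is $O(1)$, leaving ample room in the logarithmic tuning above. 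It then remains only to verify, by reversing the Pr\"ufer substitution, that $u=p^{-1/2}R\sin\vartheta$ genuinely solves $-u''-x^{\alpha}u+qu=Eu$ on $[x_{0},\infty)$, which closes the construction.
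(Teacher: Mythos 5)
Your construction is sound and proves the theorem, but it is implemented along a genuinely different route from the paper. The paper first applies the Liouville transformation $\xi=\int_0^x\sqrt{t^\alpha}\,dt$, so that the transformed equation has the fixed ``energy'' $1$ and the Pr\"ufer angle satisfies $\theta'=1+O(\xi^{-2/3})$; this lets it quote a single oscillatory-integral lemma with constant frequency (Lemma \ref{Keyle1}) to get $\int|\sin2\theta|\,d\xi/\xi=\frac2\pi\ln\xi+O(1)$, and it then makes the same resonant choice $V\propto-\xi^{-1}\mathrm{sgn}(\sin2\theta)$ that you make. You instead stay in the original variable with modified Pr\"ufer variables of varying frequency $p(x)=\sqrt{x^\alpha+E}$; your Pr\"ufer equations are correct, and the growing frequency makes the $p'/p$ and Fourier-mode error terms integrable, so the $\frac2\pi$-averaging with $O(1)$ error goes through (your exponent bookkeeping is right: $b\equiv a$ gives $R\asymp x^{-a/\pi}$ and $\int x^{-\alpha/2-2a/\pi}dx<\infty$ exactly when $a>\frac{2-\alpha}{4}\pi$). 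The more interesting divergence is the critical case $a=\frac{2-\alpha}{4}\pi$: the paper uses a piecewise-constant excess $d_0+\epsilon_n$, $\epsilon_n=1/n$, on rapidly growing blocks $[a_n,a_{n+1})$ and sums the block contributions, whereas you modulate continuously, $b(x)=a(1+\kappa/\log x)$ with $\kappa>\frac{2}{2-\alpha}$, yielding $R\asymp x^{-a/\pi}(\log x)^{-a\kappa/\pi}$ and $\int x^{-1}(\log x)^{-2a\kappa/\pi}dx<\infty$; this is a cleaner and more explicit treatment of the transition line, at the cost of having to check that the slowly varying amplitude does not spoil the $O(1)$ averaging error (it does not, since $b'(x)=O(x^{-1}(\log x)^{-2})$). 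Your handling of the discontinuous feedback (transversality of $\vartheta$ to $\{\sin2\vartheta=0\}$, piecewise-smooth global solution) matches the paper's.

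One caveat: your parenthetical alternative of replacing $\mathrm{sgn}$ by a \emph{fixed} continuous odd step function does not work on the transition line. Smoothing lowers the effective average from $\frac2\pi$ to $\frac2\pi-c$ with $c>0$ fixed, which is a power-law loss in $R$; at $a=\frac{2-\alpha}{4}\pi$ no choice of $\kappa$ (a logarithmic gain) can compensate it, so for the critical case you must keep the genuine sign function (or let the smoothing scale shrink with $x$). Since your main argument uses $\mathrm{sgn}$ with the transversality justification, this does not affect the proof, but the aside as stated is incorrect.
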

During the proof of Theorem \ref{Mainthm2}, we also proved the following   theorem, which covered the (missing) critical case for the Schr\"odinger operator.
\begin{theorem}\label{Schrcritical}
 For   each pair $(\lambda,a)$ such that $\lambda=  \frac{4a^2}{\pi^2}>0$ and any $\theta\in[0,\pi]$,  there exist      potentials  $V$
such that $\limsup_{x\to \infty}x|V(x)|=|a| $  and the  associated  Schr\"odinger equation   $-u^{\prime\prime}+Vu=\lambda u$  has an $L^2(\R^+)$ solution with the boundary condition $\frac{u^{\prime}(0)}{u(0)}=\tan\theta$.
 \end{theorem}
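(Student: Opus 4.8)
The plan is to prove Theorem~\ref{Schrcritical} by an inverse Wigner--von Neumann construction organized through modified Pr\"ufer variables, so that the resonance mechanism and the exact location of the critical threshold are visible. Write $\lambda=k^2$ with $k>0$; the hypothesis $\lambda=4a^2/\pi^2$ is precisely the equation $a=\tfrac{\pi}{2}k$, so it suffices to take $a>0$, whence $|a|=a$. For a solution $u$ of $-u''+Vu=k^2u$ set $u=R\sin\phi$, $u'=kR\cos\phi$, so that
\begin{equation*}
(\ln R)'=\frac{V}{2k}\sin 2\phi,\qquad \phi'=k-\frac{V}{2k}\bigl(1-\cos 2\phi\bigr);
\end{equation*}
then $u$ belongs to $L^2$ near $+\infty$ as soon as $\int^{\infty}R(x)^2\,dx<\infty$. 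Fix a large constant $x_0$ and a constant $c>\tfrac{\pi}{2}k$. On $[0,x_0]$ put $V\equiv 0$, so that $R$ is constant and $\phi(x)=\phi(0)+kx$; the boundary condition $u'(0)/u(0)=\tan\theta$ is the condition $k\cot\phi(0)=\tan\theta$, which for every $\theta\in[0,\pi]$ is solved by some $\phi(0)\in[0,\pi]$ (the Dirichlet case being $\phi(0)\in\{0,\pi\}$). On $[x_0,\infty)$ define $V$ \emph{self-consistently} by
\begin{equation*}
V(x)=-\Bigl(a+\frac{c}{\ln x}\Bigr)\frac{\operatorname{sgn}\bigl(\sin 2\phi(x)\bigr)}{x},
\end{equation*}
where $\phi$ is the solution of the resulting closed scalar ODE $\phi'=k-\tfrac{V}{2k}(1-\cos 2\phi)$ with $\phi(x_0)=\phi(0)+kx_0$, and $R$ is then obtained by quadrature from the first Pr\"ufer equation. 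Gluing the two pieces produces a bounded potential $V$ on $\R^+$ with $V\to 0$ and a global solution $u=R\sin\phi$ meeting the prescribed boundary condition; since $V\equiv 0$ on $[0,x_0]$, this initial piece affects neither $L^2$-integrability at infinity nor $\limsup_{x\to\infty}x|V(x)|$, so the boundary condition costs nothing.

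It remains to verify that $u\in L^2(\R^+)$ and that $\limsup_{x\to\infty}x|V(x)|=a$. The minus sign makes $V$ anti-resonant with $\sin 2\phi$, forcing decay of $R$, and $\operatorname{sgn}(\sin 2\phi)$ (up to sign) is the only profile of sup-norm $1$ realizing the \emph{maximal} overlap $\tfrac1\pi\int_0^\pi|\sin 2\phi|\,d\phi=\tfrac2\pi$; a sinusoidal, or any continuous profile of sup-norm $1$, has strictly smaller overlap and therefore fails at criticality. Averaging
\begin{equation*}
(\ln R)'=-\frac{1}{2k}\Bigl(a+\frac{c}{\ln x}\Bigr)\frac{|\sin 2\phi(x)|}{x}
\end{equation*}
over the nearly linear phase gives $\ln R(x)=-\tfrac{a}{\pi k}\ln x-\tfrac{c}{\pi k}\ln\ln x+O(1)$, so that $R(x)$ is of order $x^{-a/(\pi k)}(\ln x)^{-c/(\pi k)}$. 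Because $a=\tfrac{\pi}{2}k$ this power is exactly $-\tfrac12$: the maximal overlap $\tfrac2\pi$ lands the amplitude precisely on the boundary of $L^2$, and any smaller overlap would push the power above $-\tfrac12$, giving a non-$L^2$ amplitude beyond repair. With the square wave, $R(x)^2$ is of order $x^{-1}(\ln x)^{-2c/(\pi k)}$, which is integrable precisely because $2c/(\pi k)>1$; hence $u=R\sin\phi\in L^2(\R^+)$. Finally $x|V(x)|=a+\tfrac{c}{\ln x}$ on the full-measure set $\{\sin 2\phi(x)\neq 0\}$, and this tends to $a$ (from above), so $\limsup_{x\to\infty}x|V(x)|=a=|a|$, not merely $\le a$.

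To convert the averaging into proof, I would first show that the Pr\"ufer phase stays pinned to $kx$. From $\phi'-k=\tfrac{1}{2k}(a+\tfrac{c}{\ln x})\tfrac1x\operatorname{sgn}(\sin 2\phi)(1-\cos 2\phi)$ the right-hand side is only $O(1/x)$, but the periodic function $\operatorname{sgn}(\sin 2\phi)(1-\cos 2\phi)$ has zero mean, so integrating by parts (legitimate since $\phi'\approx k\neq 0$ for $x_0$ large) turns $\int^{x}(\phi'-k)$ into an absolutely convergent integral; hence $\phi(x)=kx+\phi_\infty+o(1)$ and the (anti)resonance between $V$ and $\sin 2\phi$ does not degrade as $x\to\infty$. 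With the phase controlled, the same integration by parts applied to
\begin{equation*}
(\ln R)'=-\frac{1}{2k}\Bigl(a+\frac{c}{\ln x}\Bigr)\frac1x\Bigl(\frac2\pi+\bigl(|\sin 2\phi|-\tfrac2\pi\bigr)\Bigr)
\end{equation*}
upgrades the heuristic to $R(x)=(C+o(1))\,x^{-a/(\pi k)}(\ln x)^{-c/(\pi k)}$ with $C>0$. Alternatively one may pass to the first-order system for the variation-of-parameters coefficients $c_1,c_2$ in $u=c_1\cos kx+c_2\sin kx$, whose coefficient matrix is $\tfrac{V(x)}{k}$ times a bounded $\tfrac{\pi}{k}$-periodic matrix, split off its secular (diagonal) part $\pm\tfrac{1}{\pi k}(a+\tfrac{c}{\ln x})\tfrac1x$, observe that the rest is the derivative of an $o(1)$ matrix plus an $L^1$ error, and invoke a Levinson/Hartman--Wintner-type asymptotic theorem to read off the decaying branch with the stated rate.

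The main obstacle is the criticality itself. At $a=\tfrac{\pi}{2}k$ the leading-order amplitude is the borderline non-$L^2$ function $x^{-1/2}$, so the construction succeeds only through the $\ln\ln x$ refinement; consequently every error term --- the drift of $\phi$ away from $kx$, the non-resonant oscillatory contributions, and the self-referential definition of $V$ --- must be controlled to $o(\ln\ln x)$ inside $\ln R$, not merely to $o(\ln x)$. The $O(1/x)$, non-$L^1$, decay of $V$ also places the problem just outside the reach of the classical Levinson theorem and forces the delicate oscillation/integration-by-parts estimates above; and one must check that the correction $\tfrac{c}{\ln x}$ genuinely buys $L^2$ decay while keeping $\limsup_{x\to\infty}x|V(x)|$ pinned exactly at $a$ rather than above it.
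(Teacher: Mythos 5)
Your construction is correct, and its engine is the same as the paper's: modified Pr\"ufer variables, an anti-resonant square-wave potential proportional to $-\operatorname{sgn}(\sin 2\phi)/x$, and the $\tfrac2\pi$ average of $|\sin 2\phi|$ against $dx/x$ when $\phi'=k+O(1/x)$ (this is exactly the paper's Lemma 3.1, which your half-period/integration-by-parts argument reproves, with the mild extension to the slowly varying weight $(a+c/\ln x)/x$). Where you genuinely diverge is in how criticality is handled. The paper obtains Theorem 1.3 as a byproduct of the critical-point case of Theorem 1.2: it keeps the amplitude \emph{piecewise constant}, $d=d_0+\epsilon_n$ with $\epsilon_n=1/n$ on rapidly growing blocks $[a_n,a_{n+1}]$, $a_n=2^{\frac4\pi n^3}$, and sums the per-block $L^2$ contributions $O(n\,2^{-n^2})$; the excess over the critical constant tends to $0$ only blockwise, and the $L^2$ property comes from a geometric summation after gluing. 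You instead fold the excess into one smooth term $c/\ln x$ with $c>\tfrac\pi2 k$, which gives directly $\ln R(x)=-\tfrac12\ln x-\tfrac{c}{\pi k}\ln\ln x+O(1)$, hence $R^2\sim x^{-1}(\ln x)^{-2c/(\pi k)}\in L^1$, while $x|V(x)|=a+c/\ln x\to a$ pins the $\limsup$ exactly at $a$ --- no gluing and no blockwise bookkeeping. Both proofs share the remaining technical points (Carath\'eodory solvability of the self-consistent ODE with $\operatorname{sgn}$ on the right-hand side, the phase drift being $O(1/x)$, and the fact that only an upper bound on $R$ is needed, so your two-sided asymptotics is a luxury). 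What each buys: your logarithmic-correction potential is cleaner and self-contained for the Schr\"odinger critical case; the paper's piecewise scheme is the one it reuses wholesale for the Stark-type operator (where, after the Liouville transformation, the energy is frozen at $1$) and for its multi-eigenvalue gluing machinery, which is why the paper organizes the critical case that way.
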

 \begin{remark}
 With some   modifications in our constructions, we can make the potentials in Theorems  \ref{Mainthm2} and \ref{Schrcritical} smooth.
 \end{remark}

 The proof of Theorems \ref{Mainthm1}, \ref{Mainthm2} and \ref{Schrcritical} is inspired by the Schr\"odinger case (see \cite{atk,eastham1982schrodinger}).
 The novelty here is that  instead of using  sign type functions $V(x)=\frac{c}{1+x}\text{ sgn } (  \sin (kx+\phi))$ during the constructions,
 we use sign type functions piecewisely, namely $V_n(x)=\frac{c_n}{1+x}\text{ sgn } (  \sin (kx+\phi))\chi_{[a_n,b_n]}$ and glue them together.
 This piecewise construction   allows us to address the transition line as we mentioned before.

For the sharp   transition of eigenvalues embedded into $(-2,2)$ for the discrete Schr\"{o}dinger operator, see \cite{jl3}.

Define $P\subset \R$ as
 \begin{equation*}
    P=\{E\in\R: -u^{\prime\prime}-x^{\alpha}u+qu=Eu \text{ has  an }  L^2(\R^+) \text{ solution}\} .
 \end{equation*}
$P$ is the collections of    the eigenvalues of $H_0^{\alpha}+q$ with all the possible boundary conditions at $0$.

The next question is that    what is the criterion   for   
finitely many embedded eigenvalues.  We obtain
\begin{theorem}\label{Maintheoremapr3}
Suppose potential $q$  satisfies
\begin{equation*}
   \limsup_{x\to \infty}{x}^{1-\frac{\alpha}{2}}|q(x)|=a.
 \end{equation*}
 Then we have $$\# P\leq \frac{2a^2}{(2-\alpha)^2}.$$
\end{theorem}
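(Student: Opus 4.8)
The plan is to combine a modified Prüfer/Liouville analysis of the Stark eigenvalue equation with a Bessel-type inequality. Fix any $N$ distinct energies $E_{1}<\dots<E_{N}$ in $P$ and pick $x_{0}>(\max_{i}|E_{i}|)^{1/\alpha}$, so that $k_{E}(x):=\sqrt{x^{\alpha}+E}$ is real and positive on $[x_{0},\infty)$ for $E\in\{E_{1},\dots,E_{N}\}$. For a nontrivial solution $u$ of $-u''-x^{\alpha}u+qu=Eu$ write $u=R\sin\theta$, $u'=k_{E}R\cos\theta$ on $[x_{0},\infty)$; a direct computation gives
$$\theta'=k_{E}-\frac{q}{k_{E}}\sin^{2}\theta+\frac{k_{E}'}{2k_{E}}\sin 2\theta,\qquad (\ln R)'=\frac{q}{2k_{E}}\sin 2\theta-\frac{k_{E}'}{2k_{E}}\bigl(1+\cos 2\theta\bigr),$$
so that, with $\widetilde R:=k_{E}^{1/2}R$, one has $(\ln\widetilde R)'=\frac{q}{2k_{E}}\sin 2\theta-\frac{k_{E}'}{2k_{E}}\cos 2\theta$. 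Since $k_{E}'/k_{E}=O(1/x)$ while $\theta'=k_{E}(1+o(1))\to\infty$, the oscillatory integral $\int^{\infty}\frac{k_{E}'}{2k_{E}}\cos 2\theta$ converges.

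Next I would pass to the Liouville variable $y=\frac{x^{1-\alpha/2}}{1-\alpha/2}$, so $dx=x^{\alpha/2}\,dy$ and $x^{\alpha/2}/k_{E}=1+o(1)$. Rewriting everything in $y$ gives
$$\ln\widetilde R(y)=\int_{y_{0}}^{y}\frac{\omega(z)}{z}\,\sin\bigl(2\theta_{E}(z)\bigr)\,dz+o(\log y),\qquad \omega(z):=\frac{x(z)^{1-\alpha/2}}{2-\alpha}\,q\bigl(x(z)\bigr),$$
where $\limsup_{z\to\infty}|\omega(z)|=\frac{1}{2-\alpha}\limsup_{x\to\infty}x^{1-\alpha/2}|q(x)|=\frac{a}{2-\alpha}=:\widetilde a$, and, as a function of $y$, $\frac{d\theta_{E}}{dy}=x(y)^{\alpha}+\frac{E}{2}+o(1)$. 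Because $k_{E}^{-1}\,dx=(1+o(1))\,dy$ and $\sin^{2}\theta$ averages to $\tfrac12$ against the slowly ($\log$-Lipschitz) varying amplitude, $u\in L^{2}(\R^{+})$ is equivalent to $\int^{\infty}\widetilde R(y)^{2}\,dy<\infty$. Since $z\mapsto\int_{y_{0}}^{z}\frac{\omega(t)}{t}\sin(2\theta_{E}(t))\,dt$ is $(\widetilde a+o(1))$-Lipschitz in $\log z$, an elementary argument shows that $u\in L^{2}$ forces
$$\nu(E):=\liminf_{y\to\infty}\ \frac{-1}{\log y}\int_{y_{0}}^{y}\frac{\omega(z)}{z}\,\sin\bigl(2\theta_{E}(z)\bigr)\,dz\ \ge\ \tfrac12 :$$
if $\nu(E)<\tfrac12$, the Lipschitz bound upgrades a sequence $y_{m}$ along which $-\int_{y_{0}}^{y_{m}}\frac{\omega}{z}\sin(2\theta_{E})<(\tfrac12-\delta)\log y_{m}$ to whole intervals $[y_{m}^{1-c},y_{m}^{1+c}]$ on which $\widetilde R(y)^{2}\gtrsim y^{-1+\delta/2}$, forcing $\int^{\infty}\widetilde R^{2}=\infty$. (For $N=1$ the extremal choice $\omega=-\tfrac{\pi}{4}\,\text{sgn}(\sin 2\theta_{E})$, for which $\langle|\sin 2\theta_{E}|\rangle=\tfrac2\pi$ and hence $\nu(E)=\tfrac12$, recovers exactly the threshold $\widetilde a=\tfrac{\pi}{4}$ of Theorem \ref{Mainthm1}.)

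With $\theta_{i}:=\theta_{E_{i}}$, the relation $\frac{d\theta_{i}}{dy}=x(y)^{\alpha}+\frac{E_{i}}{2}+o(1)$ gives $\frac{d}{dy}(2\theta_{i}-2\theta_{j})\to E_{i}-E_{j}\ne0$ for $i\ne j$, and $\frac{d}{dy}(2\theta_{i}+2\theta_{j})\to\infty$, $\frac{d}{dy}(4\theta_{i})\to\infty$; in the variable $\tau=\log y$ the $\tau$-derivatives of $(2\theta_{i}\pm 2\theta_{j})(e^{\tau})$ and $4\theta_{i}(e^{\tau})$ all have modulus tending to $\infty$ with controllable variation, so the corresponding cosine integrals over $[\tau_{1},T]$ are $o(T)$ (van der Corput / integration by parts). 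Hence, with $g(\tau):=\omega(e^{\tau})$ and $\psi_{i}(\tau):=\sin\bigl(2\theta_{i}(e^{\tau})\bigr)$, $\frac1T\int_{\tau_{1}}^{T}\psi_{i}\psi_{j}\,d\tau\to\tfrac12\delta_{ij}$. Now fix $\epsilon>0$ and $\tau_{1}$ with $|g|\le\widetilde a+\epsilon$ on $[\tau_{1},\infty)$; expanding $0\le\int_{\tau_{1}}^{T}\bigl(g-\sum_{i}c_{i}\psi_{i}\bigr)^{2}$ and minimizing over $(c_{i})$ yields $b_{T}^{\top}G_{T}^{-1}b_{T}\le\int_{\tau_{1}}^{T}g^{2}\le(\widetilde a+\epsilon)^{2}T$ with $(G_{T})_{ij}=\int_{\tau_{1}}^{T}\psi_{i}\psi_{j}$, $(b_{T})_{i}=\int_{\tau_{1}}^{T}g\psi_{i}$; since $\frac1T G_{T}\to\tfrac12 I$ this becomes $\sum_{i}\bigl(\frac1T\int_{\tau_{1}}^{T}g\psi_{i}\bigr)^{2}\le\frac{(\widetilde a+\epsilon)^{2}}{2}(1+o(1))$. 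Passing to a subsequence $T_{m}$ along which all $N$ averages converge and using that (by the previous paragraph) the negative of the $i$-th limit is $\ge\nu(E_{i})\ge\tfrac12$, we get $N\cdot\tfrac14\le\sum_{i}\nu(E_{i})^{2}\le\frac{(\widetilde a+\epsilon)^{2}}{2}$; letting $\epsilon\to0$ gives $N\le 2\widetilde a^{2}=\frac{2a^{2}}{(2-\alpha)^{2}}$, and since $N$ was arbitrary, $\#P\le\frac{2a^{2}}{(2-\alpha)^{2}}$.

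I expect the main obstacle to be the reduction in the first three paragraphs: carrying out the modified Prüfer/Liouville change of variables with enough precision that the self-generated phase correction $-\frac{q}{k_{E}}\sin^{2}\theta$ and the curvature term $k_{E}'/k_{E}$ contribute only $o(\log y)$ to $\ln\widetilde R$, and — more delicately — converting $\int^{\infty}\widetilde R^{2}\,dy<\infty$ into the sharp lower bound $\nu(E)\ge\tfrac12$. It is crucial that the constant be precisely $\tfrac12$, since this is exactly what renders the $N=1$ case consistent with the transition value $\tfrac{2-\alpha}{4}\pi$ of Theorem \ref{Mainthm1}. By contrast, the almost-orthogonality of the $\psi_{i}$ is routine oscillatory-integral bookkeeping, and the Gram/Bessel step is soft once the orthogonality and the bound $\nu(E_{i})\ge\tfrac12$ are in hand.
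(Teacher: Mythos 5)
Your argument is correct, and its skeleton is the same Kiselev--Last--Simon-style strategy as the paper's: a quantitative decay bound on the Pr\"ufer amplitude forced by each embedded eigenvalue, almost orthogonality of the functions $\sin 2\theta(\cdot,E_i)$ in logarithmic average, and a Bessel-type inequality to count. Where you genuinely deviate is in the implementation of the two technical inputs. First, you run a modified Pr\"ufer transform with $k_E=\sqrt{x^{\alpha}+E}$ directly on the original equation, so that in the variable $y\propto x^{1-\alpha/2}$ you get $\frac{d}{dy}(2\theta_i-2\theta_j)=E_i-E_j+O(y^{-\beta})$ at once; the paper instead works after the Liouville transformation with the plain Pr\"ufer transform, where the energy enters $\frac{d}{d\xi}(\theta_1-\theta_2)$ through the non-decaying oscillatory terms $\frac{E_i}{2c\xi^{2/3}}\cos 2\theta_i$, and Lemma \ref{Keyleboud} must first strip these off by a convergent oscillatory integral (the $\beta(\xi)$, $f(\xi)$ decomposition) before passing to $y=\xi^{1/3}$ --- the same variable you use. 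Your normalization avoids that extra step, though the oscillatory estimate of Lemma \ref{Keyle1} remains the engine in both versions, and it also handles general $\alpha$ directly where the paper treats $\alpha=1$ and rescales. Second, you obtain the decay input as a liminf over all $y$, namely $\nu(E_i)\ge\tfrac12$, via the log-Lipschitz upgrade from $\int\widetilde R^2\,dy<\infty$ (which itself needs the standard averaging of $\sin^2\theta$ over a phase period, playing the role of the paper's use of $\phi,\phi'\in L^2(p\,d\xi)$); the paper instead pigeonholes $\sum_i R(\cdot,E_i)\in L^2(p\,d\xi)$ to produce one common sequence $B_j$ with $R(B_j,E_i)\le B_j^{-1/6}$ for all $i$, which already suffices. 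You then do the Bessel step by hand through the Gram matrix, where the paper cites \cite[Lemma 4.4]{kiselev1998modified}; these are equivalent, and your constants match ($N\le 2\widetilde a^{\,2}$ with $\widetilde a=a/(2-\alpha)$, versus the paper's $N\le\frac92 M^2$ with $M\downarrow\frac23 a$ for $\alpha=1$), including the $N=1$ consistency with the threshold $\frac{2-\alpha}{4}\pi$. One small inaccuracy that does not affect the argument: the pointwise Lipschitz constant of $\ln\widetilde R$ in $\log y$ is $\widetilde a+\frac{\alpha}{2(2-\alpha)}+o(1)$ rather than $\widetilde a+o(1)$, because of the curvature term $\frac{k_E'}{2k_E}\cos 2\theta$ (only its oscillatory integral, not its absolute value, is $O(1)$); since the upgrade argument works with any fixed Lipschitz constant, the conclusion $\nu(E_i)\ge\tfrac12$ stands.
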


\begin{theorem}\label{Mainthm3}
For any  $\{ E_j\}_{j=1}^N\subset \R$ and $\{\theta_j\}_{j=1}^N\subset [0,\pi]$,
there exist  functions  $q\in C^{\infty}[0,+\infty)$   such that
\begin{equation}\label{Ggoalb}
 \limsup_{x\to \infty}   {x}^{1-\frac{\alpha}{2}} |q(x)|\leq (2-\alpha)e^{2\sqrt{\ln N}}N,
\end{equation}
and for each $E_j$, $j=1,2,\cdots,N$,
the eigen-equation $-u^{\prime\prime}-x^{\alpha}u+qu=E_ju$  has  an  $L^2(\R^+)$ solution with the boundary condition $\frac{u^{\prime}(0)}{u(0)}=\tan\theta_j$.
\end{theorem}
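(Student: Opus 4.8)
The plan is to build $q$ by laying down, on a fast-growing sequence of windows, one Wigner--von Neumann type sign-wave bump per energy, cycling through $j=1,\dots,N$; this reduces the whole analysis to the single-resonance situation already behind Theorems~\ref{Mainthm1} and \ref{Mainthm2}. (One may assume the $E_j$ pairwise distinct, since $H_0^{\alpha}$ is limit point at $+\infty$ and two equal energies would force the corresponding boundary conditions to coincide.) As a preparation I would pass to modified Pr\"ufer variables: for large $x$ set $k_E(x)=(x^{\alpha}+E-q(x))^{1/2}$ and write $u=R\sin\theta$, $u'=R\,k_E\cos\theta$, so that $-u''-x^{\alpha}u+qu=Eu$ becomes a first order system whose Liouville--Green part produces the \emph{universal} WKB amplitude $R\sim x^{-\alpha/4}$ --- which is \emph{not} in $L^{2}(\R^{+})$, since $\int^{\infty}x^{-\alpha/2}\,dx=\infty$ for $\alpha<2$ --- and whose only other ingredient is the resonant action of $q$. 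Writing $\Theta_E(x):=\int^{x}k_E$ for the accumulated phase, the arithmetic that drives everything is
\begin{equation*}
\Theta_E(x)-\Theta_{E'}(x)=\frac{E-E'}{2-\alpha}\,x^{1-\alpha/2}\,(1+o(1))\longrightarrow\infty,\qquad \Big|\tfrac{d}{dx}\big(\Theta_E-\Theta_{E'}\big)\Big|\asymp|E-E'|\,x^{-\alpha/2}\quad(E\neq E'),
\end{equation*}
from which all the non-self cross-interactions in the construction are off-resonant (their phases are $\asymp|E_i-E_j|\,x^{1-\alpha/2}$ or grow like a power of $x$) and contribute only convergent, hence bounded, integrals; the sole divergent resonance is that of the $E_j$-bump with $u_{E_j}$ itself.

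For the single-energy block: given $E$ and a phase $\phi$, the sign-wave $b_{E,\phi}(x)=\frac{c}{x^{1-\alpha/2}}\operatorname{sgn}\sin(2\Theta_E(x)+\phi)$ has $\limsup_{x\to\infty}x^{1-\alpha/2}|b_{E,\phi}(x)|=c$, and --- since the fundamental Fourier coefficient of $\operatorname{sgn}\sin$ is $\tfrac4\pi$ --- a careful Pr\"ufer computation in the style of \cite{atk,eastham1982schrodinger} shows that, for the favorable $\phi$, its resonant effect is $(\log R_E)'\approx-\frac{c}{\pi x}$, so $R_E\sim x^{-\alpha/4-c/\pi}$, which is in $L^{2}(\R^{+})$ as soon as $\frac{\alpha}{2}+\frac{2c}{\pi}>1$, i.e. $c>\frac{(2-\alpha)\pi}{4}$ (this recovers the thresholds of Theorems~\ref{Mainthm1}--\ref{Mainthm2}; the odd harmonics of $\operatorname{sgn}\sin$ are off-resonant, and smoothing $b_{E,\phi}$ on a scale $\ll k_E^{-1}$ changes its fundamental coefficient only by $o(1)$). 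I would then place a bump resonant with $E_{j(m)}$, $j(m)\equiv m\pmod N$, on the $m$-th window $[x_m,\lambda x_m]$ ($\lambda>1$ fixed), interpolating smoothly on short transition zones of summable relative length; on each window only one resonance is active, so $\limsup_{x\to\infty}x^{1-\alpha/2}|q(x)|=c$. Over one full cycle of $N$ windows each $R_{E_j}$ gains exactly one factor $\lambda^{-c/\pi}$ from its own window and, by the phase-separation estimate, only a summable multiplicative perturbation from the other $N-1$ windows and from all transition zones; summing the cycles gives $R_{E_j}(x)\sim x^{-\alpha/4-c/(\pi N)}$, which lies in $L^{2}(\R^{+})$ once $c>\frac{(2-\alpha)\pi}{4}N$. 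Taking $c$ just above that, absorbing the (finite, $N$-dependent) off-resonant contributions into constants, and mollifying at the cost of an arbitrarily small increase of $c$, one obtains $q\in C^{\infty}[0,\infty)$ with $\limsup_{x\to\infty}x^{1-\alpha/2}|q(x)|$ slightly above $\frac{(2-\alpha)\pi}{4}N$, which a fortiori lies below the right-hand side of \eqref{Ggoalb}.

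It remains to hit the prescribed boundary data. In the construction the decaying solution $u_{E_j}$ on $[x_0,\infty)$ may be launched from any preassigned Pr\"ufer angle at $x_0$, the phase $\phi_j$ of the first $E_j$-bump being chosen afterwards so that precisely that solution decays; what is left is to prescribe an auxiliary perturbation supported on the fixed compact interval $[0,x_0]$ that carries, for every $j$, the backward solution from its angle at $x_0$ to the ratio $\tan\theta_j$ at $0$. Taking this auxiliary perturbation to be a finite sum of mutually off-resonant oscillatory pieces, one matched to each $E_j$, one can move the $N$ ratios $u_{E_j}'(0)/u_{E_j}(0)$ nearly independently and close the argument by a continuity/degree argument, the relevant map into $(\R\cup\{\infty\})^{N}$ being onto; since this piece is compactly supported it changes neither $\limsup_{x\to\infty}x^{1-\alpha/2}|q(x)|$ nor the $L^{2}$ tails, and it too can be made $C^{\infty}$.

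The main obstacle is the $N$-uniform bookkeeping of the off-resonant errors: one must show that along the entire half-line the combined effect on each $u_{E_j}$ of all the foreign bumps, of the odd harmonics of the sign-waves, and of the transition zones stays bounded by a constant that does not erode the decay exponent $c/(\pi N)$, while simultaneously the boundary-angle map above is onto. No new mechanism beyond the single-resonance analysis of Theorems~\ref{Mainthm1}--\ref{Mainthm2} is needed; but carrying these $N$-dependent constants through --- in whatever recursion one organizes the $N$ energies (note the shape of the slack in \eqref{Ggoalb}: $\min_{t>0}e^{t}N^{1/t}=e^{2\sqrt{\ln N}}$) --- is the delicate part.
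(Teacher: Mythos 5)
Your overall architecture is the same as the paper's (one resonant bump per energy on successive windows, off-resonance of all cross terms coming from the slow phase separation $\Theta_{E_i}-\Theta_{E_j}\asymp|E_i-E_j|x^{1-\alpha/2}$, which is exactly the content of Lemma \ref{Keyleboud} after the Liouville change of variables), but the step that carries the whole theorem is asserted rather than proved. You define the bump with the \emph{explicit} WKB phase, $b_{E,\phi}=\frac{c}{x^{1-\alpha/2}}\operatorname{sgn}\sin(2\Theta_E+\phi)$, and claim that "for the favorable $\phi$" the resonant effect is $(\log R_E)'\approx-\frac{c}{\pi x}$. The solution's Pr\"ufer angle is not slaved to $\Theta_E$: the same resonant interaction that produces the decay also produces a drift of the relative phase $2\theta_E-(2\Theta_E+\phi)$ at rate comparable to $c/x$ (the $\sin^2\theta$, resp.\ $\cos 2\theta$, back-reaction in the $\theta$-equation, cf.\ \eqref{GPrufeTmar14}), i.e.\ a drift of order $c\log\lambda\sim N$ over a single window; worse, the aligned configuration is the \emph{unstable} equilibrium of the averaged phase dynamics (decaying solutions are non-generic), so choosing one phase $\phi$ per bump does not keep the solution in resonance, and the rate $-c/(\pi x)$ does not follow from the Fourier coefficient $4/\pi$ alone. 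The paper avoids precisely this by defining the potential self-consistently through the solution's own Pr\"ufer angle, solving the nonlinear equation \eqref{Gnonlinear} with $V=-\frac{4M}{1+\xi}\sin2\theta(\xi,E)$ (and $q\propto\operatorname{sgn}\sin2\theta$ in Theorem \ref{Mainthm2}), which makes the resonant term exactly negative with no phase to lock. Without either that self-consistent definition or a genuine phase-locking/instability analysis (re-alignment at each window plus control of the amplified errors), your central decay estimate is unproved; and the "$N$-uniform bookkeeping" you explicitly defer is not a routine afterthought --- it is the content of Theorem \ref{Twocase} and of the gluing estimates \eqref{eigenjapr}--\eqref{controlkr} together with the H\"older argument in Section \ref{Twoapp}.

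Two further points. First, your treatment of the boundary condition is both unnecessary and unsubstantiated: the claimed surjectivity of the map obtained from a compactly supported auxiliary perturbation ("moving the $N$ ratios nearly independently" by a degree argument) is nowhere proved, and one does not need it --- as in the paper one sets $q=0$ on an initial interval, launches each $u(\cdot,E_j)$ from the prescribed $\tan\theta_j$ at $x=0$, and feeds the resulting angles at the junction into the construction, since the single-piece construction accepts an arbitrary initial angle $\theta_0$. Second, your window scheme (fixed ratio $\lambda$, each energy active a $1/N$ fraction of logarithmic time, amplitude just above $\frac{(2-\alpha)\pi}{4}N$) is attractive: if the resonant estimate and the summable off-resonant bookkeeping were actually carried out, it would give a constant better than the right-hand side of \eqref{Ggoalb}, whereas the paper pays the factor $e^{2\sqrt{\ln N}}$ because its blocks grow with ratio $\sim N^{1+\epsilon}$ and the shift trick inflates the amplitude by $1+(N-1)T_{w+1}/J_w$. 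But as written the proposal is a plan whose two quantitative pillars --- the in-window decay for a sign-wave with prescribed phase, and the uniform control of the foreign-window and harmonic errors --- are missing.
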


Since $e^{2\sqrt{\ln N}}$ is asymptotically smaller than  $N^{\epsilon}$ with any $\epsilon>0$ as $N$ goes to infinity, we have
\begin{corollary}
For any  $\{ E_j\}_{j=1}^N\subset \R$ and $\{\theta_j\}_{j=1}^N\subset [0,\pi]$,
there exist  functions  $q\in C^{\infty}[0,+\infty)$   such that
\begin{equation*}
 \limsup_{x\to \infty}   {x}^{1-\frac{\alpha}{2}} |q(x)|\leq C(\epsilon)N^{1+\epsilon},
\end{equation*}
 and for each $E_j$, $j=1,2,\cdots,N$,
the eigen-equation $-u^{\prime\prime}-x^{\alpha}u+qu=E_ju$  has  an  $L^2(\R^+)$ solution with the boundary condition $\frac{u^{\prime}(0)}{u(0)}=\tan\theta_j$.
\end{corollary}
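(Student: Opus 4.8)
The plan is to obtain this as an immediate consequence of Theorem \ref{Mainthm3}. Given $\{E_j\}_{j=1}^N\subset\R$ and $\{\theta_j\}_{j=1}^N\subset[0,\pi]$, Theorem \ref{Mainthm3} already furnishes a function $q\in C^{\infty}[0,+\infty)$ such that, for each $j=1,2,\dots,N$, the equation $-u''-x^{\alpha}u+qu=E_ju$ has an $L^2(\R^+)$ solution meeting the boundary condition $u'(0)/u(0)=\tan\theta_j$, and such that
\begin{equation*}
\limsup_{x\to\infty}x^{1-\frac{\alpha}{2}}|q(x)|\le (2-\alpha)e^{2\sqrt{\ln N}}N.
\end{equation*}
So the only thing left to do is to bound $(2-\alpha)e^{2\sqrt{\ln N}}N$ by $C(\epsilon)N^{1+\epsilon}$ with a constant $C(\epsilon)$ independent of $N$ and of the data; the same $q$ then proves the corollary.

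For this elementary inequality I would compare exponents. Fix $\epsilon>0$ and set $t=\ln N\ge 0$. Maximising $f(t)=2\sqrt t-\epsilon t$ over $t\ge 0$ (the critical point is $\sqrt t=1/\epsilon$, so $t=1/\epsilon^2$) gives $f(t)\le f(1/\epsilon^2)=1/\epsilon$, and hence
\begin{equation*}
(2-\alpha)e^{2\sqrt{\ln N}}N=(2-\alpha)e^{2\sqrt{\ln N}-\epsilon\ln N}\,N^{1+\epsilon}\le (2-\alpha)e^{1/\epsilon}\,N^{1+\epsilon}.
\end{equation*}
Thus $C(\epsilon):=(2-\alpha)e^{1/\epsilon}$, which depends only on $\epsilon$ and $\alpha$, makes the desired bound hold, and combined with the spectral conclusions of Theorem \ref{Mainthm3} this gives the statement.

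There is essentially no obstacle here: the entire content sits in Theorem \ref{Mainthm3}, and the remaining step is just the classical fact that $\sqrt{\ln N}=o(\ln N)$, i.e. that $e^{2\sqrt{\ln N}}$ grows more slowly than any fixed positive power of $N$. The one point worth stating carefully is that $C(\epsilon)$ can be chosen uniformly in the data $\{E_j\},\{\theta_j\}$; this is automatic, since the upper bound furnished by Theorem \ref{Mainthm3} depends on $N$ alone.
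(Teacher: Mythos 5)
Your proposal is correct and is exactly the paper's argument: the corollary is deduced from Theorem \ref{Mainthm3} by observing that $e^{2\sqrt{\ln N}}$ grows slower than $N^{\epsilon}$, which you make explicit via the optimization $2\sqrt{t}-\epsilon t\le 1/\epsilon$, yielding $C(\epsilon)=(2-\alpha)e^{1/\epsilon}$ (and since $2-\alpha\le 2$ this constant can even be taken independent of $\alpha$).
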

\begin{remark}\label{Reop}
\begin{itemize}
\item In our forthcoming paper, we will prove that the bound in Theorem \ref{Maintheoremapr3}  is sharp \cite{liusharpstark}.
\item Theorems \ref{Maintheoremapr3} and  \ref{Mainthm3} implies $O(1)$ is the criterion  for finitely many  $L^2(\R^+)$ for the  perturbed   Stark type operator.
For Schr\"odinger operator, the answer is no. Suppose the  positive sequence $\{E_j\}$ satisfies $\sum_{j}\sqrt{E_j}<\infty$,  Simon  \cite{simdense}  constructed potential $V(x)=\frac{O(1)}{1+x}$ such that $-D^2+V$ has eigenvalues $\{E_j\}$.
\item In \cite{simdense}, the sum of  Wigner-von Neumann  type function  $\sum_{j=1}^N c\frac{\sin(\sqrt{E_j}x+\phi_j)}{1+x}$ is used to create positive eigenvalues $\{E_j\}$ for
the perturbed free Schr\"odinger operator. However, by Liouville transformation, the perturbed   Stark type operator always has ``eigenvalue" 1. So it is hard to use  Wigner-von Neumann  type functions directly to do the constructions in our situations.
\item  Our   proof  in Theorems \ref{Maintheoremapr3} and  \ref{Mainthm3} is  effective. Instead of $O(1)$, the explicit  bounds $(2-\alpha)e^{2\sqrt{\ln N}}N$ and  $\frac{2a^2}{(2-\alpha)^2}$   are  obtained.
\item Our constructions are very general. We only give the  parameters   specific  values in the last step.
\end{itemize}

\end{remark}

The proof of Theorem \ref{Maintheoremapr3} is motivated by \cite{kiselev1998modified}. However the technics are more difficult.
The key idea of  \cite{kiselev1998modified} is to show the almost orthogonality of $\frac{\theta(x,E_1)}{1+x}$ and $\frac{\theta(x,E_2)}{1+x}$ in Hilbert space $ L^2([0,B],(1+x)dx)$ for all large $B$, where $\theta(x,E_1)$ ($\theta(x,E_2)$) is the  Pr\"ufer angle with respect to energy $E_1$ ($E_2$). However, the perturbed  Stark (type) operator has its own difficulty.
By Liouville transformation, we can transfer the eigen-equation $Hu=Eu$ of the  perturbed   Stark (type) operator to the eigen-equation $(-D^2+V)u=u$ of   perturbed free Schr\"odinger operator.  This means that all the new eigen-equations of   perturbed free Schr\"odinger operator shares the common eigenvalue $1$ but with different potentials.
It is very challenged  to deal with the common eigenvalues  for the  perturbed free Schr\"odinger operator or common quasimomentum  for  the perturbed periodic  Schr\"odinger operator since the resonance phenomenon will show up. This is the reason why  the assumption  of nonresonance is needed, for example   \cite{ld1,krs}.
In this paper, we  overcame  the difficulty by two new ingredients.  Firstly, we give a general estimates for the oscillating functions, which is a generalization of Wigner-von Neumann type functions. See the comments right after Lemma \ref{Keyle1}.   We mention  that the possiblely embedded eigenvalues for such (or more general) potentials can be determined (e.g. \cite{Luk13,Luk14}).
Secondly, we  take    the second leading term of the evolution  of the Pr\"ufer angle (the first leading term is 1)  into consideration so  that the   resonant    phenomenon
can be well studied.   Moreover,  the two Theorems for  oscillatory integral and almost orthogonality are universal.  See Section \ref{UOA}.

For the construction part, let us say more.
For the perturbed   Stark (type) operator, under the rational independence assumption of  set $\{E_j\},
$ Naboko and Pushnitskii \cite{naboko1} proved Theorems \ref{Mainthm3} and \ref{Mainthm4} without quantitative bounds.
There are more results for the  free perturbed Schr\"odinger case $H=-D^2+V$.  Naboko \cite{nabdense} and Simon  \cite{simdense} constructed    potentials    for which the associated Schr\"odinger operator has  given eigenvalues with or without rational dependence assumption.
By Pr\"ufer transformation or generalized Pr\"ufer transformation, there are a lot of authors considering the (dense) eigenvalues embedded into the essential spectrum or
absolute continuous spectrum for the perturbed free Schr\"odinger operator, the  perturbed periodic Schr\"odiger operator or  the  discrete Schr\"odiger operator
\cite{krs,lukdcmv,kru,remling2000bounds}.

 Recently, by the combination of Pr\"ufer transformation  (generalized Pr\"ufer transformation) and piecewise potentials,
Jitomirskaya-Liu and Liu-Ong constructed asymptotically flat (hyperbolic) manifolds, perturbed periodic operators and perturbed Jacobi operators with finitely or countable many embedded eigenvalues \cite{ld1,jl,jl3}.
 Here, we  develop the piecewise potential technics in  \cite{ld1,jl,jl3} in several aspects.
 Firstly, we gave the  universal constructions in an  effective   way.
  We gave the  single piece  constructions and also obtained the effective bounds  in Section \ref{ESPC}.
In Section \ref{UGC}, the universal gluing constructions are given and the effective bounds are obtained too. 
 Secondly, as we mentioned before,  we   dealt   with the  resonant eigenvalues situations.
 Unlike the free perturbed or periodic perturbed Schr\"odinger operator, the perturbation is a very small since $q(x)=o(1)$ as $x\to \infty$, the Stark effect $x^{\alpha}$ is much larger than perturbation $q$ in this paper. It is even  hard to imagine  that under suitable constructions, the perturbation will dominate the evolution of the equation.  It turns to be that all the leading entries of all  dominations corresponding to energies are the same.
 So we need to tackle the second domination to distinguish the eigen-equations among different energies. This is the same difficulty in establishing the almost orthogonality among  Pr\"ufer angles.
After overcoming those difficulties,  we are able to prove  Theorem \ref{Mainthm3}. Moreover,  we can construct potentials with infinitely many eigenvalues. See Theorem \ref{Mainthm4} below.
We believe that our method has wide  applications in studying the Schr\"odinger operators.


\begin{theorem}\label{Mainthm4}

Let $h(x)>0$ be   any
function  on  $(0,\infty)$  with $  \lim_{x\to   \infty}h(x) = \infty$ and any sequence $\{\theta_j\}\subset [0,\pi]$.
Then for any given   $\{ E_j\}_{j=1}^{\infty}\subset \R$,  there exist  functions $q\in C^{\infty}[0,+\infty)$  such that
\begin{equation}\label{Ggoala}
    |q(x)|\leq \frac{h(x)}{1+{x}^{1-\frac{\alpha}{2}}}\quad \text{for } x>0,
\end{equation}
and for any $E_j$,
 the eigen-equation $-u^{\prime\prime}-x^{\alpha}u+qu=E_ju$  has  an  $L^2(\R^+)$ solution with boundary condition $\frac{u^{\prime}(0)}{u(0)}=\tan\theta_j$.
\end{theorem}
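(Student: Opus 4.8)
\emph{Reduction via the Liouville transformation.} To prove Theorem \ref{Mainthm4} I would first reduce it, exactly as for Theorems \ref{Maintheoremapr3} and \ref{Mainthm3}, to a problem about the perturbed free Schr\"odinger operator. Writing $t=\int_0^x\sqrt{s^{\alpha}+E_j-q}\,ds\sim\frac{2}{2+\alpha}x^{1+\frac{\alpha}{2}}$ and $u=(x^{\alpha}+E_j-q)^{-1/4}w$, the equation $-u''-x^{\alpha}u+qu=E_ju$ becomes $-\ddot w+Vw=w$, where the transformed potential satisfies $V(t)\approx q(x(t))\,x(t)^{-\alpha}$, so the admissible decay $|q(x)|\le h(x)/(1+x^{1-\frac{\alpha}{2}})$ corresponds precisely to $|V(t)|\le g(t)/(1+t)$ with $g(t)\to\infty$; moreover $u_j\in L^2(\R^+)$ translates into $\int^{\infty}R_j(t)^2\,x(t)^{-\alpha}\,dt<\infty$, where $R_j$ is the modified Pr\"ufer amplitude of $w$ at energy $1$ coming from the $E_j$‑dependent transform, and the boundary condition $u'(0)/u(0)=\tan\theta_j$ becomes prescribed initial Pr\"ufer data at $t=0$. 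As the introduction stresses, the essential feature (and difficulty) is that after this transformation every energy $E_j$ shares the common eigenvalue $1$: the leading order of the Pr\"ufer dynamics is the same for all $E_j$, and the energies are separated only through the second‑order term, whose phase drifts apart at the rate $|E_j-E_k|\,x^{1-\frac{\alpha}{2}}\to\infty$.

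\emph{Block decomposition and single‑piece construction.} Next I would fix a rapidly increasing sequence $0=t_0<t_1<t_2<\cdots\to\infty$ and an assignment $n\mapsto j(n)$ in which every index occurs infinitely often and the number $N(n)$ of distinct indices among $j(1),\dots,j(n)$ grows as slowly as we please. On the block $B_n=[t_{n-1},t_n]$ I place a mollified sign‑type potential $V_n(t)=c_n(1+t)^{-1}\,\mathrm{sgn}\!\left(\sin(\phi_{j(n)}(t)+\psi_n)\right)$, smoothed as in the Remark after Theorem \ref{Schrcritical}, with $\phi_{j(n)}$ the Pr\"ufer phase attached to $E_{j(n)}$ and $\psi_n,c_n$ chosen by the single‑piece constructions of Section \ref{ESPC} so that: (i) the amplitude $R_{j(n)}$ of the \emph{active} energy is contracted over $B_n$ by a definite factor $\rho_n<1$; (ii) for every \emph{inactive} energy $E_k$, $k\neq j(n)$, the amplitude $R_k$ changes over $B_n$ by a factor $1+\delta_{n,k}$ with $\delta_{n,k}$ controlled through the oscillatory‑integral and almost‑orthogonality estimates of Section \ref{UOA}, in particular Lemma \ref{Keyle1}; since those estimates gain from the separation of the second‑order phases, $\delta_{n,k}$ can be made arbitrarily small by taking $t_{n-1}$ large compared with $\min_k|E_{j(n)}-E_k|^{-1}$ and with the accumulated data of the earlier blocks. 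This is exactly where $g(t)\to\infty$ (equivalently $h\to\infty$) is indispensable: it allows $c_n\to\infty$, so that each block can still realize a definite contraction of the active amplitude while, for the growing family of active energies (needing amplitude of order $N(n)$, as in the bound of Theorem \ref{Mainthm3}), still respecting $|V_n(t)|\le g(t)(1+t)^{-1}$ once $t_{n-1}$ has been pushed far enough out; near $t=0$ one simply takes $V\equiv 0$.

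\emph{Verification of decay and boundary conditions.} Gluing the blocks with the transfer‑matrix bookkeeping of Section \ref{UGC}: for a fixed index $j$, listing the blocks $n_1<n_2<\cdots$ with $j(n_i)=j$ gives $R_j(t_{n_i})\le R_j(0)\prod_{\ell\le i}\rho_{n_\ell}\cdot\prod_{m:\,j(m)\neq j}(1+\delta_{m,j})$, and arranging $\sum_n\sup_k\delta_{n,k}<\infty$ makes the second product converge, while inside each block $R_j$ varies only within a factor $1+\delta_{n,j}$. Choosing the $\rho_n$ together with the ratios $t_n/t_{n-1}$ appropriately then forces $R_j(t)$ to decay fast enough that $\int^{\infty}R_j(t)^2\,x(t)^{-\alpha}\,dt<\infty$, i.e.\ $u_j\in L^2(\R^+)$, for every $j$ simultaneously; there is no conflict between the different energies because on each block only one energy is actively damped and all the others are merely kept under control. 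Running the construction forward from $t=0$ with initial Pr\"ufer data matching $u_j'(0)/u_j(0)=\tan\theta_j$ accommodates the boundary conditions — these are genuinely independent requirements on the single function $q$, one per $E_j$, and the block structure is built to meet all of them at once. Finally one checks that the mollified $q$ lies in $C^\infty[0,\infty)$ and that the per‑block bound $|q(x)|\le Cc_n(1+x^{1-\frac{\alpha}{2}})^{-1}$ on $B_n$, combined with the freedom to begin $B_n$ at arbitrarily large $x$, yields $|q(x)|\le h(x)(1+x^{1-\frac{\alpha}{2}})^{-1}$ for all $x>0$.

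\emph{Main obstacle.} I expect the hard part to be point (ii): controlling the amplitudes of all the inactive energies simultaneously. Since the leading Pr\"ufer term is energy‑independent, the smallness of $\delta_{n,k}$ must be extracted from the second‑order term by oscillatory‑integral estimates, which is delicate precisely when $t_{n-1}$ is not yet enormous compared with $|E_{j(n)}-E_k|^{-1}$, and the number of inactive energies that have to be controlled at once grows without bound. Balancing the slow growth of $c_n$ against the fast growth of $t_n$ and against the assignment $j(n)$, so that $\sum_n\sup_k\delta_{n,k}<\infty$ while each active amplitude still contracts and $|V_n|\le g(t)(1+t)^{-1}$, is the technical core of the proof and is where the hypothesis $h\to\infty$ is used essentially.
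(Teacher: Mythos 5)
Your plan follows essentially the same route as the paper: reduce by a Liouville-type transformation to a perturbed Schr\"odinger problem at the common energy $1$, build the potential block by block with one ``active'' energy contracted per block (the single-piece construction of Section \ref{ESPC}/Proposition \ref{Twocase1}), control all inactive energies via the oscillatory-integral and almost-orthogonality estimates of Section \ref{UOA}, and use $h\to\infty$ to absorb the growing constants as more energies enter, exactly the gluing bookkeeping of Section \ref{UGC} and the arguments of \cite{ld1,jl} that the paper invokes. The only differences (your energy-dependent change of variables versus the paper's fixed transform with $E$ kept inside $Q(\xi,E)$, and sign-type rather than $\sin 2\theta$-type pieces) are cosmetic and do not change the argument.
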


Now we consider  operators $H_0^{\alpha}=-D^2 +v_{\alpha}$ on $\R^+$ with some fixed boundary condition at $x=0$  (or operators  $\widetilde{H}_0^{\alpha}=-D^2 +\widetilde{v}_{\alpha}$ on $\R$).
In such setting, $E$ is an eigenvalues for $H_0^{\alpha}$ (or $\widetilde{H}_0^{\alpha}$) if and only if $H_0^{\alpha}u=Eu$ has an $L^2(\R^+)$ solution (or $L^2(\R)$).
Based on the previous Theorems and some addition arguments, we have plenty of Corollaries.
\begin{corollary}\label{cor1}
Suppose the potential $q$  satisfies
\begin{equation*}
   \limsup_{x\to \infty}{x}^{1-\frac{\alpha}{2}}|q(x)|<\frac{2-\alpha}{4}\pi.
 \end{equation*}
 Then $H^{\alpha}=H_0^{\alpha}+q$( $\widetilde{H}^{\alpha}=\widetilde{H}_0^{\alpha}+q$) admits no eigenvalues.
\end{corollary}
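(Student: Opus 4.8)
The plan is to deduce Corollary~\ref{cor1} directly from Theorem~\ref{Mainthm1}, by reducing both the half-line and the whole-line statements to the nonexistence of $L^2(\R^+)$ solutions of the eigen-equation $-u''-x^{\alpha}u+qu=Eu$. The hypothesis $\limsup_{x\to\infty}x^{1-\alpha/2}|q(x)|<\tfrac{2-\alpha}{4}\pi$ is precisely the hypothesis of Theorem~\ref{Mainthm1}, so that theorem is available throughout.

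First I would treat the half-line operator $H^{\alpha}=H_0^{\alpha}+q$ on $\R^+$ with a fixed boundary condition at $x=0$. Suppose for contradiction that some $E\in\R$ is an eigenvalue. Then there is a nonzero $u\in L^2(\R^+)$ with $H^{\alpha}u=Eu$, i.e.\ $-u''-x^{\alpha}u+qu=Eu$ on $\R^+$. But Theorem~\ref{Mainthm1} asserts that this eigen-equation has no $L^2(\R^+)$ solution for any $E\in\R$ — the statement carries no boundary condition at $0$, hence it rules out eigenfunctions for every self-adjoint realization at once — contradicting the existence of $u$. Therefore $H^{\alpha}$ has no eigenvalues.

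Next I would treat the whole-line operator $\widetilde{H}^{\alpha}=\widetilde{H}_0^{\alpha}+q$ on $\R$. Suppose $E\in\R$ is an eigenvalue, with nonzero eigenfunction $u\in L^2(\R)$. Restricting to $\R^+$ and using that $\widetilde{v}_{\alpha}$ agrees with $v_{\alpha}(x)=-x^{\alpha}$ there, the function $u|_{\R^+}$ solves $-u''-x^{\alpha}u+qu=Eu$ on $(0,\infty)$ and, being the restriction of an $L^2(\R)$ function, lies in $L^2(\R^+)$. It then remains only to check that $u|_{\R^+}\not\equiv 0$: if it vanished on $\R^+$, then $u$ and $u'$ would vanish at a point of $\R^+$, and uniqueness for the linear second-order ODE would force $u\equiv 0$ on all of $\R$, contradicting that $u$ is an eigenfunction. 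Hence $u|_{\R^+}$ is a nontrivial $L^2(\R^+)$ solution, again contradicting Theorem~\ref{Mainthm1}.

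I do not expect a genuine obstacle here; the statement is a routine consequence of Theorem~\ref{Mainthm1}. The only points deserving a little care are (i) noting that Theorem~\ref{Mainthm1} is boundary-condition-free at $0$, so it applies uniformly to every realization $H_0^{\alpha}$; and (ii) the unique-continuation step in the whole-line case, which guarantees that passing to the restriction to $\R^+$ does not annihilate the eigenfunction. The behavior $\widetilde{v}_{\alpha}(x)\to\infty$ as $x\to-\infty$ plays no role, since the entire argument uses only the positive half-line, where $\widetilde{v}_{\alpha}(x)=-x^{\alpha}$.
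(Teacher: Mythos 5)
Your proposal is correct and follows essentially the same route as the paper, which simply deduces Corollary \ref{cor1} from Theorem \ref{Mainthm1}; you merely spell out the routine details (restriction to $\R^+$ in the whole-line case plus ODE uniqueness to see the restricted eigenfunction is nontrivial) that the paper leaves implicit.
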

\begin{corollary}\label{cor2}
For any $E\in \R$, $a\geq\frac{2-\alpha}{4}\pi$, there exist potentials $q(x)$ such that
\begin{equation*}
   \limsup_{x\to \infty}{x}^{1-\frac{\alpha}{2}}|q(x)|=a,
 \end{equation*}
 and  $H^{\alpha}u=H_0^{\alpha}+q$( $\widetilde{H}^{\alpha}=\widetilde{H}_0^{\alpha}+q$)   has an  eigenvalue $E$.

\end{corollary}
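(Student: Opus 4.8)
The plan is to derive both assertions directly from Theorem~\ref{Mainthm2}; the only additional work is a bookkeeping of boundary conditions, and I expect no genuine analytic obstacle. Recall from the discussion above that $E$ is an eigenvalue of $H_0^\alpha+q$ exactly when $-u''-x^\alpha u+qu=Eu$ has an $L^2(\R^+)$ solution meeting the fixed boundary condition of $H_0^\alpha$ at $x=0$; since $x=0$ is a regular endpoint and $+\infty$ is limit point (here $\alpha<2$ is used), that condition is of separated form, which I write as $\frac{u'(0)}{u(0)}=\tan\theta_0$ for a suitable $\theta_0\in[0,\pi]$, where $\theta_0=\pi/2$ encodes $u(0)=0$. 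Note also that, since $\alpha<2$, every $q$ with $\limsup_{x\to\infty}x^{1-\alpha/2}|q(x)|=a<\infty$ satisfies $q(x)\to0$, hence is an admissible perturbation.

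For the half-line operator $H^\alpha=H_0^\alpha+q$ I would simply apply Theorem~\ref{Mainthm2} with the given $E$, $a$ and with $\theta=\theta_0$. It produces $q$ on $\R^+$ with $\limsup_{x\to\infty}x^{1-\alpha/2}|q(x)|=a$ and an $L^2(\R^+)$ solution $u$ of $-u''-x^\alpha u+qu=Eu$ with $\frac{u'(0)}{u(0)}=\tan\theta_0$. Since $v_\alpha(x)=-x^\alpha$ on $\R^+$, this $u$ is an eigenfunction of $H^\alpha$ for the eigenvalue $E$, as desired.

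For the whole-line operator $\widetilde H^\alpha=\widetilde H_0^\alpha+q$ the plan is to build the eigenfunction by gluing. Put $q\equiv0$ on $(-\infty,0)$. Since $\widetilde v_\alpha(x)\to\infty$ as $x\to-\infty$, the endpoint $-\infty$ is in the limit-point case, so $-u''+\widetilde v_\alpha u=Eu$ has a nonzero solution $u_-$, unique up to a scalar, that is square-integrable near $-\infty$. Define $\theta_0\in[0,\pi]$ by $\tan\theta_0=\frac{u_-'(0)}{u_-(0)}$ if $u_-(0)\ne0$ and $\theta_0=\pi/2$ if $u_-(0)=0$ (both cannot vanish, else $u_-\equiv0$). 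Feeding this $\theta_0$ into Theorem~\ref{Mainthm2} gives $q$ on $\R^+$ with $\limsup_{x\to\infty}x^{1-\alpha/2}|q(x)|=a$ and an $L^2(\R^+)$ solution $u_+$ of $-u''-x^\alpha u+qu=Eu$ with $\frac{u_+'(0)}{u_+(0)}=\tan\theta_0$; recall that $\widetilde v_\alpha=v_\alpha=-x^\alpha$ on $\R^+$. Since the logarithmic derivatives of $u_-$ and $u_+$ at $0$ coincide, one can choose nonzero constants $c_\pm$ so that $\psi:=c_-u_-$ on $(-\infty,0]$ and $\psi:=c_+u_+$ on $[0,\infty)$ is a $C^1$ function on $\R$. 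Then $\psi\in L^2(\R)$, $\psi$ solves $-\psi''+\widetilde v_\alpha\psi+q\psi=E\psi$ off $0$ and is matched $C^1$ at $0$ across the locally bounded coefficient $\widetilde v_\alpha+q$, so $\psi$ lies in the domain of $\widetilde H^\alpha$ and $E$ is an eigenvalue; moreover $q\equiv0$ near $-\infty$ while $q(x)\to0$ at $+\infty$, so $q$ is admissible.

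The only point needing care will be the degenerate case $u_-(0)=0$, which is absorbed by the Dirichlet choice $\theta_0=\pi/2$ (available in Theorem~\ref{Mainthm2}), together with the routine checks that the $L^2$-near-$-\infty$ solution exists and is unique up to scalars and that the glued $C^1$ function genuinely sits in the operator domain; both are standard for Schr\"odinger operators with locally bounded potential, the latter because $C^1$-matching forces $\psi'$ to be absolutely continuous with $\psi''$ given a.e.\ by the equation.
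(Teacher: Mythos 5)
Your proposal is correct and follows essentially the same route as the paper: apply Theorem \ref{Mainthm2} with the boundary angle dictated by $H_0^{\alpha}$ in the half-line case, and for $\widetilde{H}^{\alpha}$ set $q=0$ on $(-\infty,0)$, take the solution square-integrable near $-\infty$, and feed its logarithmic derivative at $0$ into Theorem \ref{Mainthm2} as the boundary condition. The only small imprecision is that the limit-point property at $-\infty$ gives uniqueness, not existence, of the $L^2$ solution; the existence (the ``routine check'' you deferred) is exactly what the paper's Lemma \ref{Keyle2} supplies, using that $\widetilde{v}_{\alpha}(x)\to\infty$ as $x\to-\infty$.
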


\begin{corollary}\label{cor3}
Suppose potential $q$  satisfies
\begin{equation*}
   \limsup_{x\to \infty}{x}^{1-\frac{\alpha}{2}}|q(x)|=a.
 \end{equation*}
 Then the total number of eigenvalues of   $H^{\alpha}=H_0^{\alpha}+q$( $\widetilde{H}^{\alpha}=\widetilde{H}_0^{\alpha}+q$)   is less than $\frac{2a^2}{(2-\alpha)^2}$.
\end{corollary}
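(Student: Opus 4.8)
The plan is to reduce Corollary \ref{cor3} to Theorem \ref{Maintheoremapr3}, which asserts $\#P\le \frac{2a^2}{(2-\alpha)^2}$, by showing that the eigenvalue set of $H^\alpha$ (resp.\ $\widetilde H^\alpha$) is contained in $P$. As recorded just before the corollary, in the present setting $E$ is an eigenvalue of $H_0^\alpha+q$ on $\R^+$ (with its fixed boundary condition at $0$) precisely when $-u''-x^\alpha u+qu=Eu$ has a nonzero $L^2(\R^+)$ solution; by the definition of $P$, such an $E$ lies in $P$. Moreover the effective potential $-x^\alpha+q$ is bounded below by $-C(1+x)^2$ for large $x$ (since $x^\alpha=o(x^2)$ and $q\to0$), so the classical limit-point criterion puts $H_0^\alpha+q$ in the limit-point case at $+\infty$; hence each eigenvalue is simple and ``the total number of eigenvalues'' is unambiguous. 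Combining, $\#\{\text{eigenvalues of }H^\alpha\}\le\#P\le\frac{2a^2}{(2-\alpha)^2}$.

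For the whole-line operator $\widetilde H^\alpha=\widetilde H_0^\alpha+q$ the extra ingredient is the behavior at $-\infty$: since $\widetilde v_\alpha(x)+q(x)\to+\infty$ as $x\to-\infty$, the operator is limit-point at $-\infty$ as well, so eigenvalues are again simple. If $u\in L^2(\R)$, $u\not\equiv0$, satisfies $\widetilde H^\alpha u=Eu$, then by uniqueness for second-order linear ODEs $u$ cannot vanish identically on any interval, so $u|_{[0,\infty)}\not\equiv0$; since $\widetilde v_\alpha=-x^\alpha$ on $[0,\infty)$ and square-integrability at $+\infty$ is inherited from $u\in L^2(\R)$, the restriction $u|_{[0,\infty)}$ is a nonzero $L^2(\R^+)$ solution of $-u''-x^\alpha u+qu=Eu$. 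Thus $E\in P$, and again $\#\{\text{eigenvalues of }\widetilde H^\alpha\}\le\#P\le\frac{2a^2}{(2-\alpha)^2}$.

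Finally, since the number of eigenvalues is a nonnegative integer it is at most $\big\lfloor\frac{2a^2}{(2-\alpha)^2}\big\rfloor$, which is strictly less than $\frac{2a^2}{(2-\alpha)^2}$ unless the latter is an integer; in that borderline case the strict inequality is already contained in the proof of Theorem \ref{Maintheoremapr3}. I expect no serious obstacle here: the only points needing care are the limit-point classification at the confining endpoint(s) (so that eigenvalues are simple and the count really is bounded by $\#P$) and the check that restricting a whole-line eigenfunction to $\R^+$ produces a genuinely nonzero, square-integrable half-line solution.
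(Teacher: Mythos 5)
Your proposal is correct and follows essentially the same route as the paper, whose proof of Corollary \ref{cor3} is simply the reduction to Theorem \ref{Maintheoremapr3}: every eigenvalue of $H^{\alpha}$ (or of $\widetilde{H}^{\alpha}$, after restricting the eigenfunction to $\R^+$) produces a nonzero $L^2(\R^+)$ solution of $-u''-x^{\alpha}u+qu=Eu$ and hence lies in $P$, so the eigenvalue count is bounded by $\#P$; your added checks (limit-point classification at the confining end, simplicity, nonvanishing of the restriction) are routine elaborations of this. The one caveat is your final parenthetical: the proof of Theorem \ref{Maintheoremapr3} actually yields the non-strict bound $N\le \frac{2a^2}{(2-\alpha)^2}$, so the ``less than'' in the corollary reflects a looseness already present in the paper's own wording rather than a strictness you can extract from that proof.
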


\begin{corollary}\label{cor4}
For any  $\{ E_j\}_{j=1}^N\subset \R$,
there exist  functions  $q\in C^{\infty}[0,+\infty)$ $(q\in C^{\infty}(\R))$ such that
\begin{equation*}
 \limsup_{x\to \infty}   {x}^{1-\frac{\alpha}{2}} |q(x)|\leq (2-\alpha)e^{2\sqrt{\ln N}}N,
\end{equation*}
 and  $H^{\alpha}=H_0^{\alpha}+q$ $(\widetilde{H}^{\alpha}=\widetilde{H}_0^{\alpha}+q)$   has eigenvalues
$\{ E_j\}_{j=1}^N$.
\end{corollary}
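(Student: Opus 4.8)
The plan is to derive both assertions of Corollary \ref{cor4} from Theorem \ref{Mainthm3}, the only work being the translation into operator language and, for the whole-line case, a gluing argument at $x=0$.

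\emph{The half-line operator.} Write the fixed self-adjoint boundary condition of $H_0^{\alpha}$ at $0$ in the form $u'(0)/u(0)=\tan\theta_0$ with $\theta_0\in[0,\pi]$ (the value $\theta_0=\pi/2$ encoding $u(0)=0$). I would apply Theorem \ref{Mainthm3} to the given energies $\{E_j\}_{j=1}^{N}$ together with the constant sequence $\theta_1=\cdots=\theta_N=\theta_0$. This produces $q\in C^{\infty}[0,\infty)$ with $\limsup_{x\to\infty}x^{1-\frac{\alpha}{2}}|q(x)|\le (2-\alpha)e^{2\sqrt{\ln N}}N$ such that, for each $j$, the equation $-u''-x^{\alpha}u+qu=E_ju$ has an $L^2(\R^+)$ solution satisfying $u'(0)/u(0)=\tan\theta_0$, i.e. satisfying the boundary condition defining $H_0^{\alpha}$. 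By the eigenvalue criterion stated just before Corollary \ref{cor1}, each $E_j$ is then an eigenvalue of $H^{\alpha}=H_0^{\alpha}+q$, which is the claim in this case.

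\emph{The whole-line operator.} Here the same criterion says that $E_j$ is an eigenvalue of $\widetilde H^{\alpha}=\widetilde H_0^{\alpha}+q$ as soon as the eigen-equation has an $L^2(\R)$ solution, with no boundary condition at $0$. First I would fix $q\equiv 0$ on $(-\infty,0]$. For each $j$, since $\widetilde v_{\alpha}(x)\to\infty$ as $x\to-\infty$, the equation $-w''+\widetilde v_{\alpha}w=E_jw$ has, up to a scalar, a unique nonzero $L^2((-\infty,0])$ solution $w_j$; let $\theta_j^{-}\in[0,\pi]$ be determined by $w_j'(0)/w_j(0)=\tan\theta_j^{-}$ (with $\theta_j^{-}=\pi/2$ meaning $w_j(0)=0$). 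Crucially, $\theta_j^{-}$ depends only on the fixed extension $\widetilde v_{\alpha}$ and on $E_j$, not on anything still to be built, so there is no circularity. Now apply Theorem \ref{Mainthm3} to $\{E_j\}$ and $\{\theta_j^{-}\}$, arranging in addition that the resulting $q\in C^{\infty}[0,\infty)$ vanishes on a neighbourhood $[0,\delta]$ of $0$; let $u_j\in L^2([0,\infty))$ be the solution of $-u''-x^{\alpha}u+qu=E_ju$ it provides, with $u_j'(0)/u_j(0)=\tan\theta_j^{-}$. The vectors $(u_j(0),u_j'(0))$ and $(w_j(0),w_j'(0))$ are nonzero and parallel, so after rescaling $u_j$ we may assume $(u_j(0),u_j'(0))=(w_j(0),w_j'(0))$; then the function $\Psi_j$ equal to $w_j$ on $(-\infty,0]$ and to $u_j$ on $[0,\infty)$ is $C^1$ at $0$, solves $-\Psi_j''+\widetilde v_{\alpha}\Psi_j+q\Psi_j=E_j\Psi_j$ on $\R$, and lies in $L^2(\R)$. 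Extending $q$ by $0$ onto $(-\infty,0]$ gives $q\in C^{\infty}(\R)$, since $q$ vanishes to infinite order at $0$ from both sides, and the asymptotic bound is unaffected because it only concerns $x\to\infty$. Hence $\{E_j\}_{j=1}^{N}\subset\sigma_p(\widetilde H^{\alpha})$.

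The single nontrivial point, the rest being bookkeeping, is the use of Theorem \ref{Mainthm3} with the extra requirement that $q\equiv 0$ on $[0,\delta]$ while still realizing the prescribed boundary angles $\theta_j$ at $0$; this is the step I would expect a referee to want spelled out. I expect it to be immediate from the construction: there the nontrivial part of $q$ is carried by a union of intervals $[a_n,b_n]$ with $a_n\to\infty$, so it suffices that $a_1>\delta$, and the angle $\theta_j$ at $0$ is transported to the Pr\"ufer data needed at $a_1$ by the unperturbed Stark flow on $[0,a_1]$, which is already part of the freedom built into the proof. If one prefers not to reopen that proof, one instead applies Theorem \ref{Mainthm3} on the translated half-line $[\delta,\infty)$ with boundary angles at $\delta$ equal to the images of $\theta_j^{-}$ under the free Stark flow from $0$ to $\delta$, and then pads $q$ by $0$ on $(-\infty,\delta]$; this gives exactly the same conclusion.
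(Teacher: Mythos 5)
Your proposal is correct and follows essentially the same route as the paper: the half-line case is Theorem \ref{Mainthm3} with all angles equal to the fixed boundary angle, and the whole-line case sets $q\equiv 0$ on $(-\infty,0]$, uses the decaying solution at $-\infty$ (the paper's Lemma \ref{Keyle2}) to fix the angles $\theta_{E_j}$ at $0$, and then applies Theorem \ref{Mainthm3} with these angles. The one point you flag---that $q$ from Theorem \ref{Mainthm3} can be taken to vanish near $0$ so the zero extension is $C^{\infty}(\R)$---is automatic in the paper's construction, which sets $q=0$ on an initial interval $[0,c]$ (i.e.\ $V(\xi)=0$ for $\xi\in[0,1]$).
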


\begin{corollary}\label{cor5}
Let $h(x)>0$ be   any
function  on  $(0,\infty)$  with $  \lim_{x\to   \infty}h(x) = \infty$.
Then for any given   $\{ E_j\}_{j=1}^{\infty}\subset \R$,  there exist  functions $q\in C^{\infty}[0,+\infty)$ $(q\in C^{\infty}(\R))$ such that
\begin{equation*}
    |q(x)|\leq \frac{h(x)}{1+{x}^{1-\frac{\alpha}{2}}}\quad \text{for } x>0,
\end{equation*}
  and   $H^{\alpha}=H_0^{\alpha}+q$ $(\widetilde{H}^{\alpha}=\widetilde{H}_0^{\alpha}+q)$   has eigenvalues
$\{ E_j\}_{j=1}^{\infty}$.
\end{corollary}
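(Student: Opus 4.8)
The plan is to deduce Corollary~\ref{cor5} from Theorem~\ref{Mainthm4} by a suitable choice of the boundary-condition angles $\theta_j$ and, in the whole-line case, by gluing the $L^2(\mathbb R^+)$ solution supplied by that theorem to the unique $L^2$-at-$-\infty$ solution of the left endpoint problem.

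For the half-line operator $H_0^\alpha$ on $\mathbb R^+$ equipped with its fixed boundary condition, say $\frac{u'(0)}{u(0)}=\tan\theta_0$, I would simply apply Theorem~\ref{Mainthm4} to the given sequence $\{E_j\}$, to the constant sequence $\theta_j=\theta_0$, and to the given $h$. The resulting $q\in C^\infty[0,\infty)$ obeys $|q(x)|\le \frac{h(x)}{1+x^{1-\alpha/2}}$, and for each $j$ the equation $-u''-x^\alpha u+qu=E_ju$ possesses an $L^2(\mathbb R^+)$ solution satisfying precisely the boundary condition that defines $H_0^\alpha$; hence each $E_j$ is an eigenvalue of $H^\alpha=H_0^\alpha+q$.

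For the whole-line operator $\widetilde H_0^\alpha=-D^2+\widetilde v_\alpha$ I would argue as follows. Since $\widetilde v_\alpha(x)\to+\infty$ as $x\to-\infty$, the operator is limit-point at $-\infty$, so for every real $E$ the equation $-u''+\widetilde v_\alpha u=Eu$ has, up to a multiplicative constant, a unique solution $u_-(\cdot,E)$ that is square-integrable near $-\infty$; write $m_-(E):=u_-'(0,E)/u_-(0,E)\in\mathbb R\cup\{\infty\}$ for the associated left Weyl datum, which depends only on $\widetilde v_\alpha$ and not on any perturbation placed on $\mathbb R^+$. For each $j$ I would choose $\theta_j\in[0,\pi]$ with $\tan\theta_j=m_-(E_j)$, interpreting $\theta_j=\pi/2$ when $m_-(E_j)=\infty$ (the Dirichlet case), and then apply Theorem~\ref{Mainthm4} to $\{E_j\}$, $\{\theta_j\}$ and $h$ to obtain $q\in C^\infty[0,\infty)$. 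Because the construction in Theorem~\ref{Mainthm4} glues pieces supported in $[a_1,\infty)$ with $a_1$ arbitrarily large, we may assume $q\equiv 0$ near $0$ (this only improves the upper bound on $|q|$), and extend $q$ by $0$ to the left half-line, which keeps $q\in C^\infty(\mathbb R)$ and is harmless since Corollary~\ref{cor5} only demands the bound for $x>0$. Now, for each $j$, the $L^2(\mathbb R^+)$ solution $u_{+,j}$ produced by Theorem~\ref{Mainthm4} satisfies $u_{+,j}'(0)/u_{+,j}(0)=\tan\theta_j=m_-(E_j)=u_-'(0,E_j)/u_-(0,E_j)$, so an appropriate scalar multiple of $u_-(\cdot,E_j)$ agrees with $u_{+,j}$ to first order at $x=0$; patching the two together yields a global $C^1$ (hence genuine) solution of $\widetilde H^\alpha u=E_j u$ that lies in $L^2(\mathbb R)$. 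As $\widetilde H_0^\alpha$, and therefore $\widetilde H^\alpha=\widetilde H_0^\alpha+q$ with $q$ bounded and decaying, is limit-point at both endpoints, any $L^2(\mathbb R)$ solution of the eigenequation is automatically an eigenfunction, so each $E_j$ is an eigenvalue of $\widetilde H^\alpha$.

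The smooth extension across $0$ and the self-adjointness/limit-point bookkeeping are routine. The one point that genuinely needs care is the construction of $m_-(E)$ for \emph{every} real $E$, i.e.\ verifying the limit-point property of $\widetilde v_\alpha$ at $-\infty$ (which follows from $\widetilde v_\alpha\to+\infty$ under mild regularity) and observing that the projective boundary data of $u_-(\cdot,E)$ at $0$ can always be matched by an angle $\theta_j\in[0,\pi]$ in Theorem~\ref{Mainthm4}; in particular the Dirichlet value $m_-(E)=\infty$ must be admissible, which is exactly why the natural parameter range for the $\theta_j$ is $[0,\pi]$ rather than $[0,\pi)$.
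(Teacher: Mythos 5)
Your proposal is correct and follows essentially the same route as the paper: the half-line case is Theorem \ref{Mainthm4} applied with the fixed boundary angle, and the whole-line case matches the boundary data at $0$ of the left $L^2(-\infty,0]$ solution (which the paper obtains from Lemma \ref{Keyle2}, rather than via Weyl limit-point language) and feeds those angles $\theta_{E_j}=\arctan\bigl(u'(0,E_j)/u(0,E_j)\bigr)$ into Theorem \ref{Mainthm4}, gluing at $x=0$ with $q\equiv 0$ on the left half-line exactly as in the paper's construction.
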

Our paper is organized in the following way.
In the first 8 Sections, we only give the proof of the case   $\alpha=1$ in all the theorems.
In Section \ref{Section:Small}, we will give some preparations.
In Section \ref{UOA}, we will set up universal oscillatory integral and also prove the almost orthogonality between different Pr\"ufer angles. In Section \ref{SE},
we will prove Theorems \ref{Mainthm1} and \ref{Mainthm2}.  In Section \ref{finitelymany}, we will prove  Theorem \ref{Maintheoremapr3}.
In Section \ref{ESPC}, we will give the construction of potentials for single piece and also the effective bounds.
In Section \ref{UGC}, we will give the general method to glue the piecewise functions together and also  the effective bounds.
In Section \ref{Twoapp}, as two applications, we will prove  Theorems \ref{Mainthm3}  and  \ref{Mainthm4}, and as well as all the Corollaries.
In  Section \ref{General}, we will point out the modifications  so that our arguments work for general $\alpha$ with $\alpha\in(0,2)$.

\section{Preparations}\label{Section:Small}

 Let $v $  be a positive function on $\R^+$   and consider the Schr\"odinger equation on $\R^+$,
 \begin{equation}\label{Gsch}
 -u^{\prime\prime}(x)-v(x)u(x)+q(x)u(x)=Eu(x).
 \end{equation}
The  Liouville transformation (see \cite{christ2003absolutely,naboko1}) is given by
\begin{equation}\label{GLiou}
  \xi(x)=\int_0^x\sqrt{v(t)} dt, \phi(\xi)=v(x(\xi))^{\frac{1}{4}}u(x(\xi)).
\end{equation}
We  define a weight function $p(\xi)$ by
\begin{equation}\label{GWei}
  p(\xi)= \frac{1}{v(x(\xi))}.
\end{equation}
We also define a potential   by
\begin{equation}\label{GPQ}
 Q(\xi,E)= -\frac{5}{16}\frac{|v^{\prime}(x(\xi))|^2}{v(x(\xi))^3}+\frac{1}{4}\frac{v^{\prime\prime}(x(\xi))}{v(x(\xi))^2} +\frac{q(x(\xi))-E}{v(x(\xi))}.
\end{equation}

In the following,  we will use the Liouville transformation to perform our proof.
We suppose to take  $v(x)=x^{\alpha}$ for some $0<\alpha<2$ in the following arguments.
As we aforementioned,  we   prove the case for $\alpha=1$ in the beginning, that is $v(x)=x$ for $x\geq 0$. 
 Let $c=(\frac{3}{2})^{\frac{2}{3}}$.
 Under such assumption, one has
 \begin{equation}\label{GLiou1}
  \xi=\frac{2}{3}x^{\frac{3}{2}}, \phi(\xi,E)=(\frac{3}{2})^{\frac{1}{6}}\xi^{\frac{1}{6}}u(x(\xi)),
 \end{equation}
 \begin{equation}\label{GWei1}
  p(\xi)= \frac{1}{c\xi^{\frac{2}{3}}},
\end{equation}
and
\begin{equation}\label{GPQ1}
 Q(\xi,E)= -\frac{5}{36\xi^2}+\frac{q(c\xi^{\frac{2}{3}})-E}{c\xi^{\frac{2}{3}}}.
\end{equation}
Notice that  the potential $Q(\xi,E)$ depends on $q$ and $E$.

Suppose $u\in L^2(\R^+)$  is a solution of \eqref{Gsch} with $v(x)=x$. It follows that
$\phi$ satisfies
\begin{equation}\label{Gschxi}
  -\frac{d^2\phi}{d\xi^2}+Q(\xi,E)\phi=\phi,
\end{equation}
and $\phi\in L^2(\R ^+,p(\xi)d\xi)$. This leads to
$\phi^{\prime}\in L^2(\R ^+,p(\xi)d\xi)$ \cite[Lemma 1]{naboko1}.

Let us introduce the Pr\"{u}fer tranformation.
Let
\begin{equation}\label{GPruf1}
 \phi(\xi,E)=R(\xi,E)\sin\theta(\xi,E),
\end{equation}
and
\begin{equation}\label{GPruf}
\frac{d \phi(\xi,E)}{d \xi}=R(\xi,E)\cos\theta(\xi,E).
\end{equation}
Thus we have
\begin{equation}\label{GPrufRmar14}
 \frac{d\log R(\xi,E)}{d\xi}=\frac{1}{2}Q(\xi,E)\sin2\theta(\xi,E)
\end{equation}
and
\begin{equation}\label{GPrufeTmar14}
  \frac{d\theta(\xi,E)}{d\xi}=1-Q(\xi,E)\sin^2\theta(\xi,E).
\end{equation}

We need one more lemma. See the Appendix for the proof.
\begin{lemma}\label{Keyle2}
Suppose $\lim _{x\to -\infty}\widetilde{q}(x)=\infty$. Let us consider equation
\begin{equation}\label{Gnesc}
  -y^{\prime\prime}+\widetilde{q}(x)y=0.
\end{equation}
Then for any $M>0$,  there is a solution of \eqref{Gnesc}  and $x_0<0$ such that
\begin{equation}\label{Gnesc1}
  |y(x)| \leq e^{-M|x|}
\end{equation}
for   $x<x_0$.
\end{lemma}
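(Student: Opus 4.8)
We need: if $\widetilde q(x)\to\infty$ as $x\to-\infty$, then for every $M>0$ the equation $-y'' + \widetilde q y = 0$ has a solution with $|y(x)|\le e^{-M|x|}$ for $x$ sufficiently negative. This is a subsolution/supersolution comparison statement — the potential $\widetilde q$ is eventually larger than $M^2$, so the equation is "more dispersive" than $-y''+M^2 y = 0$, which has the decaying solution $e^{Mx}$ (recall $x<0$). The technical point is only that a genuine $L^2$-at-$-\infty$ (recessive) solution exists and is forced to decay at least this fast.

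\textbf{Proof proposal.} The plan is to compare \eqref{Gnesc} with the constant-coefficient equation $-y''+M^2y=0$, whose recessive solution at $-\infty$ is $z(x)=e^{Mx}$ (and for $x<0$ one has $e^{Mx}=e^{-M|x|}$). Since $\widetilde q(x)\to\infty$, fix $x_1<0$ with $\widetilde q(x)\ge M^2$ for all $x\le x_1$. On $(-\infty,x_1]$ the function $z$ is then a positive \emph{supersolution}: $-z''+\widetilde q z=(\widetilde q-M^2)z\ge 0$. Hence it suffices to produce a solution $y$ of \eqref{Gnesc} with $0<y(x)\le e^{Mx}$ for $x\le x_1$; extending such a $y$ to a global solution on $\R$ by solving the initial value problem to the right of $x_1$ then gives \eqref{Gnesc1} with $x_0=x_1$.

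I would construct such a $y$ by an exhaustion (``shooting from $-\infty$''). For $n$ large enough that $-n<x_1$, let $y_n$ solve \eqref{Gnesc} on $[-n,x_1]$ with $y_n(-n)=0$, $y_n'(-n)=1$; since $y_n''=\widetilde q y_n$ and $\widetilde q>0$ on this interval, $y_n$ is convex wherever it is positive, so it stays positive and increasing on $(-n,x_1]$. Normalize $\widehat y_n:=e^{Mx_1}y_n/y_n(x_1)$, so that $\widehat y_n(x_1)=e^{Mx_1}$, $\widehat y_n(-n)=0$ and $\widehat y_n\ge 0$. Then $w:=z-\widehat y_n$ satisfies $-w''+\widetilde q w=(\widetilde q-M^2)z\ge 0$ on $(-n,x_1)$ with $w(-n)=e^{-Mn}>0$ and $w(x_1)=0$; the maximum principle for $-D^2+\widetilde q$ with $\widetilde q>0$ (at an interior negative minimum $c$ one has $w''(c)\ge0$ and $w(c)<0$, so $-w''(c)+\widetilde q(c)w(c)<0$, a contradiction) gives $w\ge 0$, i.e. $0\le\widehat y_n(x)\le e^{Mx}$ on $[-n,x_1]$.

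On every compact $[a,x_1]$ the bound $0\le\widehat y_n\le e^{Mx}$ together with the equation $\widehat y_n''=\widetilde q\widehat y_n$ yields uniform $C^1$ bounds, so by Arzel\`a--Ascoli a subsequence of $\widehat y_n$ converges in $C^1_{\mathrm{loc}}$ to a solution $y_*$ of \eqref{Gnesc} on $(-\infty,x_1]$ with $0\le y_*\le e^{Mx}$ and $y_*(x_1)=e^{Mx_1}\ne 0$; since $y_*\not\equiv0$ and $y_*\ge0$, uniqueness for the initial value problem forces $y_*>0$ on $(-\infty,x_1]$. Thus $|y_*(x)|=y_*(x)\le e^{Mx}=e^{-M|x|}$ for $x<x_0:=x_1$, as required. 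The only delicate point is the compactness step, and in particular ruling out the trivial limit $y_*\equiv0$; this is exactly why the normalization fixes the value of $\widehat y_n$ at the \emph{right} endpoint $x_1$. (An alternative, less self-contained route: $\widetilde q\to\infty$ puts \eqref{Gnesc} in the limit-point case at $-\infty$, so there is an $L^2$-at-$-\infty$ solution $y_*$; it is eventually positive since $\widetilde q>0$ makes the equation nonoscillatory, and $y_*,y_*'\to 0$ as $x\to-\infty$; then $W:=y_*'z-y_*z'$ obeys $W'=y_*z(\widetilde q-M^2)\ge 0$ on $(-\infty,x_1]$ with $W(x)\to 0$, hence $W\ge 0$, i.e. $(\log y_*)'\ge M$, and integrating yields $|y_*(x)|\le Ce^{-M|x|}$, which after rescaling $y_*$ gives \eqref{Gnesc1}.)
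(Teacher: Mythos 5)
Your argument is correct, but it takes a genuinely different route from the paper. The paper reflects the problem to $+\infty$ and then simply quotes Kato's comparison lemma (Lemma C of \cite{kato}): comparing $-y''+\widetilde q y=0$ with $y_0''-4M^2y_0=0$, $y_0=e^{-2Mx}$, it gets a positive solution whose ratio to $y_0$ is monotone, hence a bound $Ce^{-2M|x|}$, and the deliberate overshoot to decay rate $2M$ absorbs the constant $C$ so that $e^{-M|x|}$ holds for large $|x|$. You instead build the recessive solution by hand: shooting from the left endpoints $-n$ with zero data, normalizing at the fixed right endpoint $x_1$, comparing with the supersolution $z=e^{Mx}$ via the maximum principle for $-D^2+\widetilde q$ with $\widetilde q>0$, and passing to a $C^1_{\mathrm{loc}}$ limit by Arzel\`a--Ascoli; the normalization $\widehat y_n(x_1)=e^{Mx_1}$ is exactly what rules out the trivial limit and also lets you achieve the constant-free bound $|y_*(x)|\le e^{-M|x|}$ directly, with no need for the $2M$ trick. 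What the paper's route buys is brevity (one citation plus a two-line verification); what yours buys is a self-contained proof that does not invoke Kato's lemma and that exhibits the principal solution explicitly, at the cost of the compactness step. One caveat: your parenthetical alternative is looser than the main argument --- being in the limit-point case at $-\infty$ does not by itself guarantee an $L^2$-at-$-\infty$ solution at a real energy, and the claim $y_*,y_*'\to 0$ needed for the Wronskian monotonicity would require separate justification; but since your main construction does not use this aside, the proof stands.
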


\section{Oscillatory integral and Almost orthogonality}\label{UOA}
\begin{lemma}\label{Keyle1}
Let $\beta_1 >0, \beta_2>0$ and $\gamma \neq 0$ be constants. Suppose $\beta_1+\beta_2>1$ and $\beta_2>\frac{1}{2}$.
Suppose $\theta(x)$ is a solution of the following equation on $x>1$,
\begin{equation}\label{Gfourier9}
  \frac{d\theta (x)}{dx}=\gamma+\frac{O(1)}{x^{\beta_1}}
\end{equation}
Let $\beta=\min\{\beta_2,\beta_1+\beta_2-1,2\beta_2-1\}$.
Then for any $1<a<b$, we have
\begin{equation}\label{Gtheta1}
  \int_{a}^b \frac{\sin \theta(x)}{x^{\beta_2}}dx=O(\frac{1}{a^{\beta}}), \int_{a}^b \frac{\cos \theta(x)}{x^{\beta_2}}dx=O(\frac{1}{a^{\beta}})
\end{equation}
and
\begin{equation}\label{Gtheta2}
  \int_{a}^b\frac{|\sin 2\theta(x)|}{x }dx=\frac{2}{\pi}\ln \frac{b}{a}+\frac{O(1)}{a^{\beta_1}}.
\end{equation}

\end{lemma}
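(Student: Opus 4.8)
The plan is to treat the two claimed estimates by the standard mechanism of van der Corput / integration by parts, but paying careful attention to how the error term $O(1)/x^{\beta_1}$ in \eqref{Gfourier9} degrades the decay and to the exact constant $\frac{2}{\pi}$ in \eqref{Gtheta2}. First I would record the basic facts about $\theta$: integrating \eqref{Gfourier9} gives $\theta(x)=\gamma x + c_0 + \rho(x)$ where $\rho(x)=\int^x O(1)t^{-\beta_1}\,dt$, so $\rho'(x)=O(1)x^{-\beta_1}$ and $\rho$ is bounded if $\beta_1>1$, while in general $\rho(x)=O(x^{1-\beta_1})$ when $\beta_1<1$ (and $O(\log x)$ if $\beta_1=1$). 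The key quantitative point is that $\theta'(x)=\gamma+O(x^{-\beta_1})\to\gamma\neq 0$, so for $x$ large $\theta$ is monotone and $\theta'$ is bounded away from $0$; this lets me use $\theta$ itself as a change of variable.

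For the first pair of integrals in \eqref{Gtheta1} I would integrate by parts writing $\sin\theta(x) = \frac{1}{\theta'(x)}\frac{d}{dx}(-\cos\theta(x))$, so that
\[
\int_a^b \frac{\sin\theta(x)}{x^{\beta_2}}\,dx = \left[\frac{-\cos\theta(x)}{x^{\beta_2}\theta'(x)}\right]_a^b + \int_a^b \cos\theta(x)\,\frac{d}{dx}\!\left(\frac{1}{x^{\beta_2}\theta'(x)}\right)dx.
\]
The boundary term is $O(a^{-\beta_2})$ since $\theta'$ is bounded below. For the remaining integral, $\frac{d}{dx}(x^{-\beta_2}\theta'(x)^{-1})$ splits into a piece $O(x^{-\beta_2-1})$ (from differentiating $x^{-\beta_2}$) and a piece $-x^{-\beta_2}\theta''(x)\theta'(x)^{-2}$; here $\theta''(x)=\frac{d}{dx}O(x^{-\beta_1})$, which by differentiating \eqref{Gfourier9} is $O(x^{-\beta_1-1})$ — more honestly, I should avoid differentiating the $O(1)$ and instead integrate by parts only against the smooth factor $x^{-\beta_2}$ and carry $\theta'$ around, or else absorb $\theta'(x)^{-1}=\gamma^{-1}+O(x^{-\beta_1})$ and split $\sin\theta/x^{\beta_2} = \gamma^{-1}\sin\theta \cdot x^{-\beta_2}\theta'(x) + O(x^{-\beta_1-\beta_2})$. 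The first term is an exact derivative up to $x^{-\beta_2-1}$ corrections, giving $O(a^{-\beta_2})$; the $x^{-\beta_2-1}$ correction integrates to $O(a^{-\beta_2})$; and the error term integrates to $O(a^{1-\beta_1-\beta_2})$ when $\beta_1+\beta_2>1$. Collecting exponents gives $O(a^{-\beta})$ with $\beta=\min\{\beta_2,\beta_1+\beta_2-1\}$; the appearance of $2\beta_2-1$ in the definition of $\beta$ comes from a second iteration needed when $\beta_1$ is large, i.e. when the genuine obstruction is a Wigner–von Neumann type resonance and one must integrate by parts twice, the second boundary term being $O(a^{-2\beta_2+1})$ after using $\int_a^\infty x^{-\beta_2-1}\,dx$-type bounds. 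The $\cos\theta$ integral is identical with $\sin\leftrightarrow\cos$.

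For \eqref{Gtheta2} the point is the mean value of $|\sin 2\theta|$. Changing variables via $t=\theta(x)$, with $dx = \theta'(x)^{-1}\,dt = (\gamma^{-1}+O(x^{-\beta_1}))\,dt$ and $x=\gamma^{-1}t+O(1)+O(t^{1-\beta_1})$, I would write
\[
\int_a^b \frac{|\sin 2\theta(x)|}{x}\,dx = \int_{\theta(a)}^{\theta(b)} \frac{|\sin 2t|}{x(t)}\cdot\frac{dt}{\theta'(x(t))}.
\]
Since $\frac{1}{x\theta'(x)} = \frac{1}{x}(\gamma^{-1}+O(x^{-\beta_1})) = \frac{1}{\gamma x} + O(x^{-1-\beta_1})$ and $\gamma x(t) = t + O(1) + O(t^{1-\beta_1})$ so $\frac{1}{\gamma x(t)} = \frac 1 t + O(t^{-2}) + O(t^{-1-\beta_1})$, the integrand is $\frac{|\sin 2t|}{t} + O(t^{-1-\beta})$-type errors whose integral over $[\theta(a),\theta(b)]$ is $O(a^{-\beta_1})$ (using $\theta(a)\asymp a$). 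Then I use that $|\sin 2t|$ has period $\pi/2$ and average $\frac{1}{\pi/2}\int_0^{\pi/2}|\sin 2t|\,dt = \frac{2}{\pi}$, so $\int_{\theta(a)}^{\theta(b)} \frac{|\sin 2t|}{t}\,dt = \frac{2}{\pi}\log\frac{\theta(b)}{\theta(a)} + O(1/\theta(a))$ by comparing the oscillating function to its mean (Riemann–Lebesgue with a rate, or summation over full periods). Finally $\log\frac{\theta(b)}{\theta(a)} = \log\frac ba + O(1/a)$ since $\theta(x)=\gamma x(1+o(1))$ with the correction controlled by $\rho(x)/x = O(x^{-\beta_1})$, which feeds into the $O(a^{-\beta_1})$ error as claimed.

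I expect the main obstacle to be bookkeeping the exponent $\beta=\min\{\beta_2,\beta_1+\beta_2-1,2\beta_2-1\}$ sharply: one must be disciplined about never differentiating the raw $O(1)$ symbol (which need not be differentiable), instead always integrating by parts against the explicit power $x^{-\beta_2}$ while keeping $\theta'(x)$ intact and only expanding $\theta'(x)^{-1}=\gamma^{-1}+O(x^{-\beta_1})$ algebraically. The hypotheses $\beta_1+\beta_2>1$ and $\beta_2>\frac12$ are exactly what make the three candidate exponents positive, so each ensures the corresponding error integral converges; verifying that the worst of the three is what survives in each of the two displays is the delicate part, but it is purely a matter of tracking which integration-by-parts boundary term or which error integral dominates.
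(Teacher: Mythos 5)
Your argument is correct in substance but follows a genuinely different route from the paper. The paper does not integrate by parts at all: it locates the consecutive zeros $x_i$ of $\sin\theta$ (half-periods), shows $x_{i+1}-x_i=\pi/\gamma+O(x_i^{-\beta_1})$, computes $\int_{x_i}^{x_{i+1}}|\sin\theta|\,dx=2/\gamma+O(x_i^{-\beta_1})$, and then exploits the sign alternation between adjacent half-periods to sum the resulting telescoping-type differences; the three exponents $\beta_2$, $\beta_1+\beta_2-1$, $2\beta_2-1$ appear exactly as the three error sums in that summation, and \eqref{Gtheta2} is obtained by the same half-period counting (this is why $2\beta_2-1$ is in the statement at all -- it is an artifact of that bookkeeping, not of a ``second integration by parts'' as you speculate). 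Your route instead writes $\sin\theta\,x^{-\beta_2}=\gamma^{-1}\sin\theta\,\theta'(x)\,x^{-\beta_2}+O(x^{-\beta_1-\beta_2})$, integrates the exact-derivative part by parts against the smooth weight $x^{-\beta_2}$ (correctly never differentiating the $O(1)$ in \eqref{Gfourier9}), and gets $O(a^{-\beta_2})+O(a^{1-\beta_1-\beta_2})$, which is in fact a slightly stronger bound than \eqref{Gtheta1} since $\min\{\beta_2,\beta_1+\beta_2-1\}\ge\beta$ and $a>1$; for \eqref{Gtheta2} your change of variables $t=\theta(x)$ plus the mean value $\tfrac2\pi$ of $|\sin 2t|$ is a clean substitute for the paper's period counting. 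Two small points to tighten: (i) you must first restrict to $x\ge x_0$ where $\theta'\ge\gamma/2$ (so $\theta$ is monotone and the substitution is legitimate), absorbing the bounded initial range $a<x_0$ into the constant, exactly as the paper's ``assume $a$ large''; (ii) in \eqref{Gtheta2} your bookkeeping actually produces an error $O(a^{-\min\{1,\beta_1\}})$, because the constant of integration and the mean-value comparison each contribute $O(1/a)$ -- this matches the stated $O(a^{-\beta_1})$ only when $\beta_1\le 1$, which is the only regime in which the lemma is applied in the paper (and the only regime in which the stated error is attainable, since for $\theta=\gamma x+c_0$ the true error is genuinely of size $1/a$), so it is a harmless caveat rather than a defect of your method, but it should be said explicitly rather than folded silently into ``$O(a^{-\beta_1})$''.
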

\begin{proof}
We only give the proof of \eqref{Gtheta1}. The proof of \eqref{Gtheta2} is similar.
We can assume $a$ is large enough and $\gamma>0$.

 Let $i_0$ be the largest   integer such that $2\pi i_0<\theta(a)$.
 By \eqref{Gfourier9},
 there exist
  $x_0<x_1<x_2<\cdots<x_t<x_{t+1}$  such that $x$ lies in $[x_{t-1},x_t)$ and
 \begin{equation}\label{tildetheta}
    {\theta} (x_i)= 2\pi i_0+ i\pi
 \end{equation}
 for $i=1,2,\cdots,t,t+1$.

By \eqref{Gfourier9}, one has
 \begin{equation*}
    x_{i+1}-x_{i}=\frac{\pi}{ \gamma}+ \frac{O(1)}{x_i^{\beta_1}},
 \end{equation*}
and
 \begin{equation}\label{e2}
  x_i\geq x_0+\frac{i\pi}{2\gamma}.
 \end{equation}
 Similarly, for $x\in[x_i,x_{i+1})$, we have
\begin{equation*}
   {\theta} (x)=2\pi i_0+i \pi +\gamma(x-x_i)+\frac{O(1)}{  x_i^{\beta_1}}.
\end{equation*}
Thus, one has
 \begin{eqnarray}
   \nonumber && \int_{x_i}^{x_{i+1}}|\sin\theta(x) |dx\\
     &=&\int_{0}^{\frac{\pi}{\gamma}}\sin( \gamma x)dx+ \frac{O(1)}{1+x_i^{\beta_1}} =\frac{2}{\gamma}+ \frac{O(1)}{x_i^{\beta_1}}. \label{e1}
 \end{eqnarray}
Notice that $\sin\theta(x) $ changes the sign at $x_i$. The integral also has some cancellation between $(x_{i-1},x_i)$ and $(x_{i },x_{i+1})$.
Let $t^\prime\in\{t,t+1\}$ such that ${t}^\prime $ is odd.

By  \eqref{e1}, we obtain

 \begin{eqnarray}
  \nonumber\int_{a}^b\frac{\sin\theta(x)}{x^{\beta_2}}dx&=&\frac{O(1)}{a^{\beta_2}}+
   \int_{x_1}^{x_{t^\prime}}\frac{\sin\theta(x)}{x^{\beta_2}}dx \\
  \nonumber   &=&\frac{O(1)}{a^{\beta_2}}+O(1)\sum_{i=1}^{t+1}(\frac{1}{x_i^{\beta_2}}-\frac{1}{x_{i+1}^{\beta_2}})+\sum_{i=1}^{t+1}\frac{O(1)}{x_i^{\beta_1}}\frac{1}{x_i^{\beta_2}}\\
   &=&\frac{O(1)}{a^{\beta_2}}+O(1)\sum_{i=1}^{t+1}( \frac{1}{x_i^{2\beta_2}}+\frac{1}{x_i^{\beta_1+\beta_2}})\\
      &=&\frac{O(1)}{a^{\beta_2}}+\frac{O(1)}{a^{2\beta_2-1}}+\frac{O(1)}{a^{\beta_1+\beta_2-1}},\label{ellestimate}
 \end{eqnarray}
where the last equality holds by \eqref{e2}.  By the same argument, we have $\int_{a}^b \frac{\cos \theta(x)}{x^{\beta_2}}dx=O(\frac{1}{a^{\beta}})$. This  completes our proof.
\end{proof}

If we let $\beta_1=\infty$ in Lemma \ref{Keyle1},  \eqref{Gtheta1} reduces to the case of   the Wigner-von Neumann type  functions, which has been proved by plenty of authors. See  \cite{atkinson1954asymptotic,harris1975asymptotic}  for example.
The case that $\beta_2=1$ and $a=1$ in \eqref{Gtheta1} has been established in \cite{ld1}.

 Let
 \begin{equation}\label{Gvapr}
  \frac{q(c\xi^{\frac{2}{3}})}{c\xi^{\frac{2}{3}}}=V(\xi), \text{  for }\xi>0.
\end{equation}
\begin{lemma}\label{Keyleboud}
Suppose $V(\xi)$ in \eqref{Gvapr} satisfies $V(\xi)=\frac{O(1)}{1+\xi}$. Suppose $E_1\neq E_2$.
  Then the  following estimate holds for $\xi>\xi_0>1$
  \begin{equation*}
    \int_{\xi_0}^{\xi} \frac{\sin2\theta(x,E_1)
 \sin2\theta(x,E_2) }{  1+x} dx= \frac{O(1)}{\xi_0^{\frac{1}{3}}}.
 \end{equation*}
  \end{lemma}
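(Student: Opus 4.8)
The plan is to reduce the claim to the oscillatory-integral estimates of Lemma \ref{Keyle1} by expanding the product of sines into slowly varying combinations of $\cos$ and $\sin$ of sum and difference phases. First I would recall from \eqref{GPrufeTmar14} and \eqref{GPQ1} that for fixed energy $E$ the Pr\"ufer angle satisfies
\begin{equation*}
  \frac{d\theta(\xi,E)}{d\xi}=1-Q(\xi,E)\sin^2\theta(\xi,E)=1+\frac{O(1)}{\xi},
\end{equation*}
since $Q(\xi,E)=-\frac{5}{36\xi^2}+\frac{q(c\xi^{2/3})-E}{c\xi^{2/3}}=\frac{O(1)}{\xi}+\frac{O(1)}{\xi}=\frac{O(1)}{\xi}$ uses $V(\xi)=O(1)/(1+\xi)$ and the boundedness of $q$ together with $E$ fixed. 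So each $\theta(\cdot,E_j)$ is a solution of \eqref{Gfourier9} with $\gamma=1$ and $\beta_1=1$.

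Next I would use the product-to-sum identity
\begin{equation*}
  \sin 2\theta(x,E_1)\sin 2\theta(x,E_2)=\tfrac{1}{2}\cos\bigl(2\theta(x,E_1)-2\theta(x,E_2)\bigr)-\tfrac{1}{2}\cos\bigl(2\theta(x,E_1)+2\theta(x,E_2)\bigr),
\end{equation*}
and treat the two resulting integrals separately. For the \emph{sum} term, set $\psi_+(x)=2\theta(x,E_1)+2\theta(x,E_2)$; then $\psi_+'(x)=4+O(1)/x$, so $\psi_+$ satisfies \eqref{Gfourier9} with $\gamma=4\neq 0$, $\beta_1=1$, and Lemma \ref{Keyle1} applied with $\beta_2=1$ (so $\beta=\min\{1,1,1\}=1$, but note the $1/(1+x)$ weight versus $1/x^{\beta_2}$ only differs by a harmless constant factor for $x>\xi_0>1$) gives $\int_{\xi_0}^\xi \frac{\cos\psi_+(x)}{1+x}\,dx=O(1/\xi_0)$, which is better than the claimed bound. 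For the \emph{difference} term, set $\psi_-(x)=2\theta(x,E_1)-2\theta(x,E_2)$; the leading terms cancel, and from \eqref{GPrufeTmar14} one gets $\psi_-'(x)=-2Q(x,E_1)\sin^2\theta(x,E_1)+2Q(x,E_2)\sin^2\theta(x,E_2)$. The terms involving $q$ and the $-5/(36x^2)$ term are $O(1)/x$, but the terms $\frac{2E_1}{cx^{2/3}}\sin^2\theta(x,E_1)-\frac{2E_2}{cx^{2/3}}\sin^2\theta(x,E_2)$ are only $O(1)/x^{2/3}$; since $E_1\neq E_2$ these do not cancel. Hence $\psi_-'(x)=O(1)/x^{2/3}$, i.e. $\psi_-$ is \emph{not} of the form \eqref{Gfourier9} with nonzero $\gamma$ — this is the main obstacle.

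To handle the difference term I would integrate by parts directly on $\int_{\xi_0}^\xi \frac{\cos\psi_-(x)}{1+x}\,dx$, but since $\psi_-'$ may vanish this is delicate; instead the cleaner route is a substitution/change of variable. Writing $E_1-E_2=\delta\neq 0$, one sees $\psi_-(x)$ is, to leading order, a primitive of $\frac{2\delta}{cx^{2/3}}$ (up to $O(1)/x$ corrections whose contribution is $O(1/\xi_0)$ after integration against $1/(1+x)$), i.e. $\psi_-(x)=\frac{6\delta}{c}x^{1/3}\cdot(\text{const})+(\text{lower order})$, so $\psi_-$ itself oscillates like $\cos(c'x^{1/3})$. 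Then $\int_{\xi_0}^\xi \frac{\cos\psi_-(x)}{1+x}\,dx$ is a standard stationary-phase-free oscillatory integral: with $\psi_-'(x)\sim c''x^{-2/3}$ bounded away from zero for large $x$, integration by parts gives $\bigl[\frac{\sin\psi_-(x)}{(1+x)\psi_-'(x)}\bigr]_{\xi_0}^\xi+\int_{\xi_0}^\xi \sin\psi_-(x)\,\frac{d}{dx}\bigl(\frac{1}{(1+x)\psi_-'(x)}\bigr)dx$; the boundary term is $O(x^{2/3}/x)=O(x^{-1/3})$ evaluated at $\xi_0$, and $\frac{d}{dx}\bigl(\frac{1}{(1+x)\psi_-'(x)}\bigr)=O(x^{2/3}/x^2)+O(x^{5/3}/x^2\cdot x^{-1})=O(x^{-4/3})$ is absolutely integrable with tail $O(\xi_0^{-1/3})$. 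This yields exactly $\int_{\xi_0}^\xi \frac{\cos\psi_-(x)}{1+x}\,dx=O(\xi_0^{-1/3})$. Combining the sum and difference contributions gives the stated bound $O(1)/\xi_0^{1/3}$, and one should double-check that the $O(1)/x$ error terms in $\psi_\pm$, when carried through the integration by parts, only contribute at the $O(\xi_0^{-1/3})$ level or better — which they do, since they are dominated by the already-controlled pieces.
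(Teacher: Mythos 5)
Your treatment of the sum phase $\psi_+=2\theta(\cdot,E_1)+2\theta(\cdot,E_2)$ is essentially the paper's (apply Lemma \ref{Keyle1} with $\gamma=2$ after product-to-sum), up to a small bookkeeping slip: since $Q(\xi,E)=-\frac{5}{36\xi^2}+\frac{q(c\xi^{2/3})-E}{c\xi^{2/3}}$ contains the term $-E/(c\xi^{2/3})$, one only has $Q=O(\xi^{-2/3})$, not $O(1/\xi)$, so $\beta_1=\frac23$ and the sum term comes out $O(\xi_0^{-2/3})$ rather than $O(1/\xi_0)$ — still harmless.

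The genuine gap is in the difference term. When you replace $\frac{2E_i}{cx^{2/3}}\sin^2\theta(x,E_i)$ by its average $\frac{E_i}{cx^{2/3}}$, the error is the oscillatory quantity $-\frac{E_1}{cx^{2/3}}\cos2\theta(x,E_1)+\frac{E_2}{cx^{2/3}}\cos2\theta(x,E_2)$ (the paper's $\beta(\xi)$), which is of size $x^{-2/3}$ — the \emph{same} order as the retained drift $\frac{E_1-E_2}{cx^{2/3}}$ — not an ``$O(1)/x$ correction.'' Consequently $x^{2/3}\psi_-'(x)$ equals $\frac{2E_1\sin^2\theta_1-2E_2\sin^2\theta_2}{c}$ up to lower order, which is generically \emph{not} bounded away from zero (e.g.\ it vanishes whenever both angles are near multiples of $\pi$), so the hypothesis ``$\psi_-'(x)\sim c''x^{-2/3}$ bounded away from zero'' underlying your integration by parts fails. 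Moreover $\psi_-''$ involves $\frac{E_i}{cx^{2/3}}\sin2\theta_i\cdot\theta_i'=O(x^{-2/3})$, not $O(x^{-5/3})$, so even where $\psi_-'$ is comparable to $x^{-2/3}$ the term $\frac{d}{dx}\bigl(\frac{1}{(1+x)\psi_-'}\bigr)$ is only $O(x^{-1/3})$, whose integral diverges; the claimed tail bound $O(\xi_0^{-1/3})$ does not follow. This resonance-type difficulty is exactly what the paper's proof is built to handle: it splits $\theta(\cdot,E_1)-\theta(\cdot,E_2)=f+\int_1^{\cdot}\beta$, where $f'$ contains only the drift $\frac{E_1-E_2}{2c\xi^{2/3}}$ plus genuinely $O(1/\xi)$ terms; it shows $\int_\xi^\infty\beta=O(\xi^{-1/3})$ by applying Lemma \ref{Keyle1} to $\beta$ itself (its phase $2\theta_i$ has derivative $2+O(\xi^{-2/3})$), so the oscillatory part of the phase contributes only a constant plus $O(\xi^{-1/3})$; and it then applies Lemma \ref{Keyle1} to $f$ after the change of variables $y=\xi^{1/3}$, which turns the phase derivative into the nonzero constant $\frac{3(E_1-E_2)}{2c}+O(1/y)$. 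Without some such removal of the same-order oscillatory component of $\psi_-'$, your integration-by-parts step does not close.
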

  \begin{proof}
  It suffices to prove
   \begin{equation*}
    \int_{\xi_0}^{\infty} \frac{\sin2\theta(\xi,E_1))
 \sin2\theta(\xi,E_2) }{  1+\xi} d\xi=\frac{O(1)}{\xi_0^{\frac{1}{3}}}.
 \end{equation*}
 Observe that by basic trigonometry,
\begin{equation*}
2\sin2\theta(\xi,E_1)\sin2\theta(\xi,E_2)=\cos (2\theta(\xi,E_1)-2\theta(\xi,E_2))-\cos (2\theta(\xi,E_1)+2\theta(\xi,E_2)).
\end{equation*}
It  suffices to prove that
\begin{equation*}
    \int_{\xi_0}^{\infty} \frac{\cos (2\theta(\xi,E_1)-2\theta(\xi,E_2))}{  1+\xi} d\xi=\frac{O(1)}{\xi_0^{\frac{1}{3}}},\int_{\xi_0}^{\infty} \frac{\cos (2\theta(\xi,E_1)+2\theta(\xi,E_2))}{  1+\xi} d\xi=\frac{O(1)}{\xi_0^{\frac{1}{3}}}.
 \end{equation*}
 By \eqref{GPrufeTmar14}, one has
 \begin{equation}\label{GPrufeTmar132}
  \frac{d(\theta(\xi,E_1)+\theta(\xi,E_2))}{d\xi}=2-Q(\xi,E_1)\sin^2\theta(\xi,E_1)-Q(\xi,E_2)\sin^2\theta(\xi,E_2).
\end{equation}
By \eqref{GPQ1}, \eqref{Gtheta1} and \eqref{GPrufeTmar132}, we have
\begin{equation*}
  \int_{\xi_0}^{\infty} \frac{\cos (2\theta(\xi,E_1)+2\theta(\xi,E_2))}{  1+\xi} d\xi=\frac{O(1)}{\xi_0^{\frac{1}{3}}}.
\end{equation*}
Thus
we only need to prove
\begin{equation}\label{Gkeyformar131}
    \int_{\xi_0}^{\infty} \frac{\cos (2\theta(\xi,E_1)-2\theta(\xi,E_2))}{  1+\xi} d\xi=\frac{O(1)}{\xi_0^{\frac{1}{3}}}.
 \end{equation}
By \eqref{GPrufeTmar14}  again, one has
\begin{eqnarray}
   \frac{d(\theta(\xi,E_1)-\theta(\xi,E_2))}{d\xi} &=& (-\frac{5}{36\xi^2}-V(\xi))\sin^2\theta(\xi,E_2)-
   (-\frac{5}{36\xi^2}-V(\xi))\sin^2\theta(\xi,E_1)\nonumber\\
   &&+\frac{E_1}{c\xi^{\frac{2}{3}}}\sin^2\theta(\xi,E_1)-\frac{E_2}{c\xi^{\frac{2}{3}}}\sin^2\theta(\xi,E_2) \nonumber\\
   &=& (-\frac{5}{36\xi^2}-V(\xi))\sin^2\theta(\xi,E_2)-
   (-\frac{5}{36\xi^2}-V(\xi))\sin^2\theta(\xi,E_1) \nonumber\\
   &&-
  \frac{1}{2} \frac{E_1}{c\xi^{\frac{2}{3}}}\cos2\theta(\xi,E_1)+\frac{1}{2}\frac{E_2}{c\xi^{\frac{2}{3}}}\cos2\theta(\xi,E_2)
    + \frac{E_1-E_2}{2c\xi^{\frac{2}{3}}}.
\end{eqnarray}
 Define
 \begin{equation*}
   \beta(\xi)=\frac{E_1}{2c\xi^{\frac{2}{3}}}\cos2\theta(\xi,E_1)-\frac{E_2}{2c\xi^{\frac{2}{3}}}\cos2\theta(\xi,E_2).
 \end{equation*}
Let $f(1)=\theta(1,E_1)-\theta(1,E_2)$ and
\begin{equation*}
 \frac{d f(\xi)}{d\xi}=(-\frac{5}{36\xi^2}-V(\xi))\sin^2\theta(\xi,E_2)-
   (-\frac{5}{36\xi^2}-V(\xi))\sin^2\theta(\xi,E_1)+
   \frac{E_1-E_2}{2c\xi^{\frac{2}{3}}}.
\end{equation*}
Thus
\begin{equation*}
  f(\xi)-(\theta(\xi,E_1)-\theta(\xi,E_2))=\int_1^{\xi} \beta(x)dx.
\end{equation*}
 By  \eqref{Gtheta1}, we have   for some $\beta_0$,
 \begin{equation*}
   \int_1^{\infty} \beta(x)dx=\beta_0,  \int_{\xi}^{\infty} \beta(x)dx=\frac{O(1)}{1+\xi^{\frac{1}{3}}}
 \end{equation*}
 and then
 \begin{equation*}
   \int_1^{\xi} \beta(x)dx=\beta_0+\frac{O(1)}{1+\xi^{ \frac{1}{3}}}.
 \end{equation*}
 Thus
 \begin{equation*}
  \theta(\xi,E_1)-\theta(\xi,E_2)=f(\xi)+\frac{O(1)}{1+\xi^{ \frac{1}{3}}}-\beta_0.
 \end{equation*}
 In order to prove \eqref{Gkeyformar131}, it suffices to prove that
 \begin{equation}\label{Gkeyformar132}
    \int_{\xi_0}^{\infty} \frac{\cos 2f(\xi)}{  1+\xi} d\xi=\frac{O(1)}{\xi_0^{\frac{1}{3}}},\int_{\xi_0}^{\infty} \frac{\sin 2f(\xi)}{  1+\xi} d\xi=\frac{O(1)}{\xi_0^{\frac{1}{3}}}.
 \end{equation}
By  changing the variable $y=\xi^{\frac{1}{3}}$, one has
 \begin{equation}\label{Gmar15}
   \frac{d f(y)}{dy}=\frac{df}{d\xi}\frac{d\xi}{dy}=\frac{3}{2c}(E_1-E_2) +\frac{O(1)}{1+y}.
 \end{equation}
 By  \eqref{Gtheta1} and \eqref{Gmar15}, we have
 \begin{equation*}
  \int_{\xi_0^{\frac{1}{3}}}^{\infty} \frac{\cos 2f(y)}{  1+y} dy=\frac{O(1)}{\xi_0^{\frac{1}{3}}},\int_{{\xi_0^{\frac{1}{3}}}}^{\infty} \frac{\sin 2f(y)}{  1+y} dy=\frac{O(1)}{\xi_0^{\frac{1}{3}}}.
 \end{equation*}
 This implies  \eqref{Gkeyformar132}. We finish the proof.
  \end{proof}
\section{ Single embedded eigenvalue}\label{SE}

\begin{proof}[\bf Proof of Theorem \ref{Mainthm1}]\label{SE}
As mentioned before, we only consider $\alpha=1$.
By the assumption of Theorem \ref{Mainthm1},
 one has
 \begin{equation*}
   \limsup_{\xi\to \infty}\frac{1}{2}\xi |\frac{q(c\xi^\frac{2}{3})} {c\xi^{\frac{2}{3}}}|=d<\frac{\pi}{12}.
 \end{equation*}
Let $\epsilon$ be a small positive number. Then there exists some $\xi_0>0$ such that for all $\xi>\xi_0$,
\begin{equation*}
    \frac{1}{2}\xi |\frac{q(c\xi^\frac{2}{3})} {c\xi^{\frac{2}{3}}}|<d+\epsilon<\frac{\pi}{12}.
\end{equation*}

 By \eqref{GPrufRmar14} and Lemma \ref{Keyle1} ($a=\xi_0$ and $b=\xi$), one has for large $\xi_0$ and $\xi>\xi_0$,
 \begin{eqnarray*}
    \log R(\xi,E)-\log R(\xi_0,E) &\geq&  O(1)-(d+\epsilon)\int_{\xi_0}^{\xi}\frac{|\sin2\theta(t,E)|}{t}dt \\
    &\geq & O(1)- \frac{2}{\pi}(d+\epsilon)\ln \xi.
 \end{eqnarray*}
 Thus
 \begin{equation}\label{Gmar311N}
  R(\xi,E) \geq \frac{1}{C \xi^{ \frac{2}{\pi}(d+\epsilon)}}
 \end{equation}
 for large $\xi$.
Let us estimate the $L^2(\R^+)$ norm of $R(\xi,E)$. Direct computation implies that ($\epsilon$ is small enough)
\begin{eqnarray*}
    R^2(\xi,E)p(\xi) &\geq& \frac{1}{C \xi^{\frac{2}{3}}  \xi^{2 \frac{2}{\pi}(d+\epsilon)}} \\
  &\geq &  \frac{1}{C \xi}.
\end{eqnarray*}
Thus $ R(\xi,E)\notin L^2(\R ^+,p(\xi)d\xi)$. This contradicts   $ \phi\in L^2(\R ^+,p(\xi)d\xi)$ and then contradicts   $u\in L^2(\R^+)$.
\end{proof}

\begin{proof}[\bf Proof of Theorem   \ref{Mainthm2} for non-critical points]
Fix any $E\in\R$.
In the case of $H=-D^2+ {v}+q$, let $u$ be the solution of $Hu=Eu$ with the  boundary condition of $H$ at $x=0$.
In the case of $\widetilde{H}=-D^2+\widetilde{v}+q$, let $q=0$ for $x<0$ and let  $u$ be the solution of $\widetilde{H}u=Eu$ such that $u$ satisfies  \eqref{Gnesc1}.
So $u\in L^2(-\infty,0]$.

We employ the same notations in the proof of Theorem \ref{Mainthm1}.
 We define $q(x)$  for $x>0$ by
 \begin{equation}\label{Defapr20}
    \frac{1}{2}\frac{q(c\xi^{\frac{2}{3}})}{c\xi^{\frac{2}{3}}}=-\frac{d}{  {\xi}}{\rm sgn}(\sin2\theta(\xi,E)),
 \end{equation}
 where ${\rm sgn}(\cdot) $ is the sign function and $d>\frac{\pi}{12}$ is a constant, which will be determined later.

 Substitute  \eqref{Defapr20} into \eqref{GPrufeTmar14}, and
 solve the   nonlinear system for $\theta$ with the some proper boundary  condition $\theta(1,E)=\theta_0$.
It is not difficult to  see that \eqref{GPrufeTmar14}   has a unique piecewise
smooth global solution by a standard ODE existence and uniqueness theorem.

Thus $q$ is well defined and
\begin{equation}\label{e6}
    \frac{d\log R(\xi,E)}{d\xi}= (- \frac{5}{72\xi^2}-\frac{E}{2c\xi^{\frac{2}{3}}})\sin2\theta(\xi,E)-  d \frac{|\sin2\theta(\xi,E)|}{\xi}.
\end{equation}
By \eqref{e6} and Lemma \ref{Keyle1} ($a=\xi_0$ and $b=\xi$), one has for large $\xi_0$ and $\xi>\xi_0$,
 \begin{eqnarray*}
    \log R(\xi,E)-\log R(\xi_0,E) &\leq&  O(1)-d\int_{\xi_0}^{\xi}\frac{|\sin2\theta(t,E)|}{t}dt \\
    &\leq & O(1)-\frac{2}{\pi}d\ln \xi.
 \end{eqnarray*}
 Thus
 \begin{equation}\label{Gmar311}
  R(\xi,E) \leq \frac{C}{ \xi^{\frac{2}{\pi}d}}
 \end{equation}
 for large $\xi$.
 Thus for some small $\epsilon>0$,
\begin{eqnarray*}
    R^2(\xi,E)p(\xi) &\leq& \frac{C}{\xi^{\frac{2}{3}}  \xi^{ \frac{4}{\pi}d}} \\
  &\leq &  \frac{C}{ \xi^{1+\epsilon}},
\end{eqnarray*}
since $d>\frac{\pi}{12}$. This implies  $ R(\xi,E)\in L^2(\R ^+,p(\xi)d\xi)$ and then  $u\in L^2(\R^+)$.
In the case of $\widetilde{H}$, we also have $u\in L^2(\R)$.

For any $a >\frac{\pi}{4}$ in the Theorem \ref{Mainthm2}, let $d=\frac{a}{3}$.  By the definition of \eqref{Defapr20},  we have
\begin{equation*}
   \limsup_{x\to \infty}\sqrt{{x}}|q(x)|=a.
 \end{equation*}
We finish the proof.
\end{proof}
\begin{proof}[\bf Proof of Theorem \ref{Mainthm2} for  the critical point ]
In this case $a=\frac{\pi}{4}$. Let $d_0=\frac{\pi}{12}$.
We employ the same notations in the proof of non-critical case.
Let $\epsilon_n=\frac{1}{n}$ and  $a_n= 2^{\frac{4}{\pi}n^3}$.

 We define $q(x)$  for $x>0$ piecewisely.
 For $\xi\in[a_n,a_{n+1})$, we define
 \begin{equation}\label{Def}
    \frac{1}{2}\frac{q(c\xi^{\frac{2}{3}})}{c\xi^{\frac{2}{3}}}=-\frac{d_0+\epsilon_n}{  {\xi}}{\rm sgn}(\sin2\theta(\xi,E)).
 \end{equation}

Suppose $ q(c\xi)$ is defined for $\xi\in(0,a_n]$. Let $\theta_n=\theta(a_n,E).$
 Substitute  \eqref{Def} into \eqref{GPrufeTmar14}, and
 solve the   nonlinear system for $\theta$ with  the  boundary  condition $\theta(a_n,E)=\theta_n$.

Thus we have for $a_n\leq\xi\leq a_{n+1}$,
\begin{equation}\label{e6apr20}
    \frac{d\log R(\xi,E)}{d\xi}= (- \frac{5}{72\xi^2}-\frac{E}{2c\xi^{\frac{2}{3}}})\sin2\theta(\xi,E)-  (d_0+\epsilon_n) \frac{|\sin2\theta(\xi,E)|}{\xi}.
\end{equation}

By \eqref{e6apr20} and Lemma \ref{Keyle1} ($a=a_n$ and $b=a_{n+1}$), one has
 \begin{eqnarray*}
    \log R(a_{n+1},E)-\log R(a_n,E) &\leq& \frac{O(1)}{a_n^{\frac{1}{3}}}-\frac{2}{\pi}(d_0+\epsilon_n)\ln a_{n+1},
 \end{eqnarray*}
 and for $a_n\leq \xi \leq a_{n+1}$,
 \begin{equation*}
    \log R(\xi,E)-\log R(a_n,E) \leq  \frac{O(1)}{a_n^{\frac{1}{3}}}-\frac{2}{\pi}(d_0+\epsilon_n)\ln \xi.
 \end{equation*}
Thus, one has
\begin{equation*}
  R(a_n,E)=O(1).
\end{equation*}
 Moreover, for $a_n\leq \xi\leq a_{n+1}$, one has
\begin{eqnarray*}
    R^2(\xi,E)p(\xi) &\leq&  O(1)R^2(a_n,E)\frac{1}{\xi^{\frac{2}{3}}  \xi^{ \frac{4}{\pi}(d_0+\epsilon_n)}} \\
  &\leq &   \frac{O(1)}{ \xi^{1+\frac{4}{\pi}\epsilon_n}}.
\end{eqnarray*}
Direct computation shows that
\begin{equation*}
 \int_{a_n}^{a_{n+1}} R^2(\xi,E)p(\xi)d\xi\leq \frac{O(1)}{\epsilon_na_n^{\frac{\pi}{4}\epsilon_n}}=O(1)\frac{n}{2^{n^2}}.
\end{equation*}
 This implies  $ R(\xi,E)\in L^2(\R ^+,p(\xi)d\xi)$ and then  $u\in L^2(\R^+)$ ($u\in L^2(\R)$).
\end{proof}
\section{ Proof of Theorem \ref{Maintheoremapr3}}\label{finitelymany}
%

  \begin{lemma}\cite[Lemma 4.4]{kiselev1998modified}\label{Leapr7}
  Let $\{e_i\}_{i=1}^N$ be a set of unit vector in a Hilbert space $\mathcal{H}$  so that
  \begin{equation*}
    \alpha=N\sup_{j\neq k}| \langle e_k,e_j\rangle|<1.
  \end{equation*}
  Then
  \begin{equation}\label{Gapr71}
    \sum_{i=1}^N|\langle g,e_i\rangle|^2\leq (1+\alpha)||g||^2.
  \end{equation}
  \end{lemma}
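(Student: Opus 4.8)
The plan is to reduce everything to two applications of the Cauchy--Schwarz inequality together with the hypothesis that the off-diagonal Gram entries $\langle e_k,e_j\rangle$ are all at most $\alpha/N$ in absolute value. Fix $g\in\mathcal{H}$ and abbreviate $c_i=\langle g,e_i\rangle$ and $\Sigma=\sum_{i=1}^N|c_i|^2$; the goal is $\Sigma\le(1+\alpha)\|g\|^2$. If $\Sigma=0$ the inequality is trivial, so I may assume $\Sigma>0$.

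First I would rewrite $\Sigma$ as a single inner product. Using linearity of $\langle\cdot,\cdot\rangle$ in the first slot,
$$\Sigma=\sum_i c_i\langle e_i,g\rangle=\Bigl\langle\sum_i c_i e_i,\;g\Bigr\rangle\le\Bigl\|\sum_i c_i e_i\Bigr\|\,\|g\|$$
by Cauchy--Schwarz. Next I would bound $\bigl\|\sum_i c_i e_i\bigr\|^2$ by expanding and estimating the cross terms by their absolute values:
$$\Bigl\|\sum_i c_i e_i\Bigr\|^2=\sum_i|c_i|^2+\sum_{i\ne j}c_i\overline{c_j}\langle e_i,e_j\rangle\le\Sigma+\frac{\alpha}{N}\Bigl(\sum_i|c_i|\Bigr)^2\le\Sigma+\alpha\Sigma=(1+\alpha)\Sigma,$$
where the middle inequality uses $|\langle e_i,e_j\rangle|\le\alpha/N$ for $i\ne j$ and the last one uses $\bigl(\sum_i|c_i|\bigr)^2\le N\sum_i|c_i|^2$ (Cauchy--Schwarz again). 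Combining the two displays gives $\Sigma\le\sqrt{(1+\alpha)\Sigma}\,\|g\|$, and dividing through by $\sqrt{\Sigma}>0$ yields $\sqrt{\Sigma}\le\sqrt{1+\alpha}\,\|g\|$, which is \eqref{Gapr71}.

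The argument is short and presents no real obstacle; the only points demanding a little care are keeping the complex conjugates straight when expanding $\bigl\|\sum_i c_i e_i\bigr\|^2$ and isolating the case $\Sigma=0$ before dividing. One could alternatively phrase the estimate spectrally: \eqref{Gapr71} is equivalent to $\|G\|\le 1+\alpha$ for the Gram matrix $G=(\langle e_i,e_j\rangle)_{i,j}$, and $\|G\|\le 1+\|G-I\|\le 1+\max_i\sum_{j\ne i}|\langle e_i,e_j\rangle|\le 1+\alpha$ by Gershgorin's circle theorem. I would nonetheless write down the direct Cauchy--Schwarz version, since it invokes no matrix machinery and is entirely self-contained.
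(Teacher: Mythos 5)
Your proof is correct, and it is entirely self-contained. Note that the paper itself offers no argument for this statement: it is quoted verbatim from Kiselev--Last--Simon \cite[Lemma 4.4]{kiselev1998modified}, so there is no internal proof to compare against. Your two-step Cauchy--Schwarz argument (first $\Sigma\le\|\sum_i c_ie_i\|\,\|g\|$, then expanding the Gram form and bounding the off-diagonal terms by $\frac{\alpha}{N}(\sum_i|c_i|)^2\le\alpha\Sigma$) is the standard route and handles the conjugates and the degenerate case $\Sigma=0$ correctly; it even shows that the hypothesis $\alpha<1$ is not needed for the inequality \eqref{Gapr71} itself (that hypothesis matters only for how the lemma is applied). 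The Gershgorin reformulation you mention is an equally valid alternative, since $\sum_i|\langle g,e_i\rangle|^2=\langle Gc,c\rangle$-type identities reduce the claim to a norm bound on the Gram matrix, but the direct version you chose is the cleaner one to splice into an analysis paper.
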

  \begin{proof}[\bf Proof of Theorem \ref{Maintheoremapr3}]
Let
   \begin{equation*}
 V(\xi)= \frac{q(c\xi^{\frac{2}{3}})}{c\xi^{\frac{2}{3}}}.
\end{equation*}
By the assumption of Theorem \ref{Maintheoremapr3}, for any $M>\frac{2}{3}a$, we have
\begin{equation*}
 | V(\xi)|\leq \frac{M}{1+\xi}
\end{equation*}
for large $\xi$.
By shifting the operator, we can assume
\begin{equation}\label{Gapr76}
  | V(\xi)|\leq \frac{M}{1+\xi}
\end{equation}
for all $\xi>0$.
  Suppose we have $N$ eigenvalues and denote them by $E_1,E_2,\cdots E_N$.
 It implies that  for $i=1,2,\cdots,N$,
  \begin{equation*}
    R(\xi,E_i)\in L^2(\R ^+,p(\xi)d\xi),
  \end{equation*}
  and  then
  \begin{equation*}
    \sum_{i=1}^NR(\xi,E_i)\in L^2(\R ^+,p(\xi)d\xi).
  \end{equation*}
  Thus there exists $B_j\to \infty$ such that
  \begin{equation*}
    R^2(B_j,E_i)p(B_j)\leq  \frac{1}{100}B_j^{-1},
  \end{equation*}
   and then by \eqref{GWei1}, one has
   \begin{equation}\label{Gapr72}
      R(B_j,E_i)\leq B_j^{-\frac{1}{6}},
   \end{equation}
   for all $i=1,2,\cdots,N$.

  By \eqref{Gapr72} and \eqref{GPrufRmar14}, we have
  \begin{equation}\label{Gapr73}
\int_{1}^{B_j} \frac{1}{2}Q(\xi,E_i)\sin2\theta(\xi,E_i) d\xi=  \int_{1}^{B_j}\frac{d}{d\xi} \log R(\xi,E_i)d\xi\leq -\frac{1}{6}\log B_j+O(1).
  \end{equation}
By Lemma \ref{Keyle1}, one has
\begin{equation}\label{Gapr74}
  \int_{1}^{B_j}(-\frac{5}{36\xi^2}-\frac{E}{c\xi^{\frac{2}{3}}})\sin2\theta(\xi,E_i)d\xi=O(1).
\end{equation}
By \eqref{Gapr73} and \eqref{Gapr74}, we have
\begin{equation}\label{Gapr75}
  \int_{1}^{B_j}V(\xi)\sin2\theta(\xi,E_i) d\xi \leq -\frac{1}{3}\log B_j+O(1).
\end{equation}
Now consider the Hilbert spaces
\begin{equation*}
  \mathcal{H}_j=L^2([1,B_j],(1+\xi)d\xi).
\end{equation*}
In  $\mathcal{H}_j$, by \eqref{Gapr76} we have
\begin{equation}\label{Gapr77}
  ||V||_{ \mathcal{H}_j}^2\leq M^2\log (1+B_j).
\end{equation}
Let
\begin{equation*}
  e^j_{i}(\xi)=\frac{1}{\sqrt{A_i^j}}\frac{\sin 2\theta(\xi,E_i)}{1+\xi}\chi_{[1,B_j]}(\xi),
\end{equation*}
where $A_i^j$ is chosen such that $e_i^j$ is a unit vector in $\mathcal{H}_j$.
We have the following estimate,
\begin{eqnarray}
  A_i^j &=& \int_1^{B_j}\frac{\sin^2 2\theta(\xi,E_i)}{1+\xi}d\xi \nonumber\\
   &=& \int_1^{B_j}\frac{1}{2(1+\xi)}d\xi- \int_1^{B_j}\frac{\cos  4\theta(\xi,E_i)}{2(1+\xi)}d\xi\nonumber\\
   &=& \frac{1}{2}\log B_j+O(1),\label{Gapr79}
\end{eqnarray}
since $\int_1^{B_j}\frac{\cos  4\theta(\xi,E_i)}{2(1+\xi)}d\xi=O(1)$ by  Lemma \ref{Keyle1}.

By Lemma \ref{Keyleboud}, we have for $i\neq k$,
\begin{equation*}
  \int_1^{B_j}  \frac{\sin  2\theta(\xi,E_i)\sin  2\theta(\xi,E_k)}{1+\xi}d\xi=O(1).
\end{equation*}
It yields that
\begin{equation}\label{Gapr78}
  \langle e_i^j,e_k^j \rangle=\frac{O(1)}{\log B_j}.
\end{equation}
By \eqref{Gapr79} and \eqref{Gapr75}
\begin{equation}\label{Gapr710}
 \langle V,e^j_i \rangle_{\mathcal{H}_j}\leq -\frac{\sqrt{2}}{3}\sqrt{\log B_j}+O(1).
\end{equation}
By \eqref{Gapr71} and \eqref{Gapr78}, one has
\begin{equation}\label{Gapr711}
\sum_{i=1}^N  |\langle V,e^j_i\rangle_{\mathcal{H}_j}|^2\leq (1+\frac{O(1)}{\log B_j})||V||_{\mathcal{H}_j}.
\end{equation}
By \eqref{Gapr710} and \eqref{Gapr77},
we have
\begin{equation*}
  N\frac{2}{9}\log B_j\leq M^2 \log B_j+O(1).
\end{equation*}
Let $j\to \infty$, we get
\begin{equation*}
  N \leq \frac{9}{2}M^2,
\end{equation*}
for any $M>\frac{2}{3}a$. This implies
\begin{equation*}
  N \leq 2a^2.
\end{equation*}

  \end{proof}

\section{Effective single piece  constructions}\label{ESPC}

For the case of $\widetilde{H}$, we let $q(x)=0$ for $x<0$. In both cases, let
 \begin{equation}\label{Gv}
  \frac{q(c\xi^{\frac{2}{3}})}{c\xi^{\frac{2}{3}}}=V(\xi), \text{  for }\xi>0.
\end{equation}
Our goal is to construct $V(\xi)\approx \frac{1}{1+\xi}$ and then get   $q(x)\approx\frac{1}{1+\sqrt{x}}$ by solving \eqref{Gv}.

Denote  by
\begin{equation}\label{GPQ1mar13}
 Q(\xi,E)= -\frac{5}{36\xi^2}-\frac{E}{c\xi^{\frac{2}{3}}}+V(\xi).
\end{equation}
Suppose we construct function $q$ on $[0,x]$. We can define $u$ on $[0,x]$.
Let $u$  be  the  solution of $Hu=Eu$  on $[0,x]$ with some boundary condition at $0$. 

Under the  Liouville transformation,
$\phi$ satisfies
\begin{equation}\label{Gschximar13}
  -\phi^{\prime\prime}+Q(\xi,E)\phi=\phi.
\end{equation}

Recall that we have
\begin{equation}\label{GPrufR}
 \frac{d\log R(\xi,E)}{d\xi}=\frac{1}{2}Q(\xi,E)\sin2\theta(\xi,E)
\end{equation}
and
\begin{equation}\label{GPrufeTmar13}
  \frac{d\theta(\xi,E)}{d\xi}=1-Q(\xi,E)\sin^2\theta(\xi,E).
\end{equation}

\begin{theorem}\label{Twocase}
Fix $M>0$.
Let  $E\in \R$ and $ A=\{{E}_j\}_{j=1}^N$. Suppose $E\notin A$ and $\{E_j\}_{j=1}^N$ are distinct.
Suppose  $\theta_0\in[0,\pi]$. Let $\xi_1>\xi_0>b$.
Then there exist constant  $C(E, A)$ (independent of $b, \xi_0$ and $\xi_1$) and potential $ V(M,\xi,E,A,\xi_0,\xi_1,b,\theta_0)$  such that the following holds:

   \begin{description}
     \item[Potential]   for $\xi_0\leq \xi \leq \xi_1$, ${\rm supp}( V)\subset(\xi_0,\xi_1)$,  $  V\in C^{\infty}(\xi_0,\xi_1)$,  and
     \begin{equation}\label{thm141}
        | V(M,\xi,E,A,\xi_0,\xi_1,b,\theta_0)|\leq \frac{4M}{\xi-b}
     \end{equation}

     \item[Solution for $E$]Let $Q(\xi,E)$ be given by \eqref{GPQ1mar13}. Then the solution of $(-D^2+Q(\xi,E))\phi=\phi$ with boundary condition $\frac{\phi^\prime(\xi_0)}{\phi(\xi_0)}=\tan\theta_0$ satisfies
     \begin{equation}\label{thm142}
        R(\xi_1,E)\leq (1+ \frac{CM}{(\xi_0-b)^{\frac{1}{3}}})(\frac{\xi_1-b}{\xi_0-b})^{-M} R(\xi_0,E)
     \end{equation}
     and  for $\xi_0<\xi<\xi_1$,
      \begin{equation}\label{thm143}
        R(\xi,E)\leq  (1+ \frac{CM}{(\xi_0-b)^{\frac{1}{3}}})R(\xi_0,E).
     \end{equation}
      \item[Solution for ${E}_j$] Let $Q(\xi,E_j)$ be given by \eqref{GPQ1mar13}. Then the solution of $(-D^2+ Q(\xi,E_j))\phi=\phi$ with any boundary condition at $\xi_0$ satisfies
      for $\xi_0<\xi\leq \xi_1$,
      \begin{equation}\label{thm144}
        R(\xi,{E}_j)\leq  (1+ \frac{CM}{(\xi_0-b)^{\frac{1}{3}}}) R(\xi_0,{E}_j).
     \end{equation}
   \end{description}

  \end{theorem}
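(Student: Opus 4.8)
The plan is to take $V$ to be a single Wigner-von Neumann type bump tuned so that it is in \emph{exact resonance} with the Pr\"ufer oscillation $\sin 2\theta(\cdot,E)$, so that the amplitude $R(\cdot,E)$ is forced to decay like the prescribed power $\big(\tfrac{\xi-b}{\xi_0-b}\big)^{-M}$, while the same bump is non-resonant at every $E_j$ and therefore changes $R(\cdot,E_j)$ only by a multiplicative factor $1+O((\xi_0-b)^{-1/3})$. Concretely, fix a cut-off $\chi\in C^{\infty}(\R)$ with $0\le\chi\le1$, $\chi\equiv1$ on $[\xi_0+1,\xi_1-1]$ and ${\rm supp}\,\chi\subset(\xi_0,\xi_1)$ (assume $\xi_1\ge\xi_0+2$; the remaining case is immediate), and let $\theta(\cdot,E)$ solve the Pr\"ufer equation \eqref{GPrufeTmar13} with $Q=Q(\cdot,E)$ from \eqref{GPQ1mar13}, with initial angle determined by $\theta_0$, and with
\begin{equation*}
 V(\xi)=-\,\chi(\xi)\,\frac{4M}{\xi-b}\,\sin 2\theta(\xi,E).
\end{equation*}
Since $V$ is defined through $\theta(\cdot,E)$, this is a single ODE whose right-hand side is smooth in $(\xi,\theta)$ for $\xi>b$, so it has a unique smooth global solution; then $V\in C^{\infty}(\xi_0,\xi_1)$, ${\rm supp}\,V\subset(\xi_0,\xi_1)$, and $|V(\xi)|\le\frac{4M}{\xi-b}$, which is \eqref{thm141}. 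On $[\xi_0,\xi_1]$ one has $\frac{d\theta}{d\xi}=1+o(1)$ with a quantitative rate (from $|Q|\le\frac{5}{36\xi^2}+\frac{|E|}{c\xi^{2/3}}+\frac{4M}{\xi-b}$) good enough that Lemma \ref{Keyle1} applies --- uniformly in $b,\xi_0,\xi_1$ --- to all the oscillatory integrals below, the cut-off $\chi$ causing no trouble since it differs from $1$ only on two intervals of length one.

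For the behaviour at $E$, substitute this $V$ into \eqref{GPrufR}. The piece $\frac12\big(-\frac{5}{36t^2}-\frac{E}{ct^{2/3}}\big)\sin 2\theta(t,E)$ integrates to $O((\xi_0-b)^{-1/3})$ by Lemma \ref{Keyle1}, while $\frac12 V(t)\sin 2\theta(t,E)=-\frac{M\chi(t)}{t-b}\big(1-\cos 4\theta(t,E)\big)$: its $\cos 4\theta$ part again contributes $O((\xi_0-b)^{-1/3})$, and its constant part gives $-M\int_{\xi_0}^{\xi}\frac{\chi(t)}{t-b}\,dt=-M\log\frac{\xi-b}{\xi_0-b}+O((\xi_0-b)^{-1})$ (this is the point of choosing the constant $4M$). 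Hence
\begin{equation*}
 \log R(\xi,E)-\log R(\xi_0,E)\ \le\ CM(\xi_0-b)^{-1/3}-M\log\tfrac{\xi-b}{\xi_0-b}
\end{equation*}
uniformly for $\xi\in[\xi_0,\xi_1]$; exponentiating and using $e^{t}\le1+O(t)$ gives \eqref{thm143}, and taking $\xi=\xi_1$ gives \eqref{thm142}.

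For $E_j\ne E$ (here the hypothesis $E\notin A$ is used) the same computation with $Q(\cdot,E_j)$ produces the term $\frac12 V(t)\sin 2\theta(t,E_j)=-\frac{2M\chi(t)}{t-b}\sin 2\theta(t,E)\sin 2\theta(t,E_j)$, which is \emph{not} resonant, plus the harmless $\big(-\frac{5}{36t^2}-\frac{E_j}{ct^{2/3}}\big)$ contribution $O((\xi_0-b)^{-1/3})$. A product-to-sum identity turns the cross term into $\int\frac{\chi(t)\cos\big(2\theta(t,E)\pm2\theta(t,E_j)\big)}{t-b}\,dt$: the $+$ phase has derivative $2+o(1)$ and is $O((\xi_0-b)^{-1/3})$ by Lemma \ref{Keyle1}, and the $-$ phase is precisely the situation of Lemma \ref{Keyleboud} --- subtracting off the convergent oscillatory remainder and passing to $y=(t-b)^{1/3}$ leaves a phase of constant frequency $\frac{3}{2c}(E-E_j)\ne0$ with an $O(1/y)$ error, so Lemma \ref{Keyle1} applies once more. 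This gives $\log R(\xi,E_j)-\log R(\xi_0,E_j)\le CM(\xi_0-b)^{-1/3}$, which after exponentiating is \eqref{thm144}; all constants depend only on $M$ and on the pairwise gaps of the energies, never on $b,\xi_0,\xi_1$, which is the claimed uniformity. The hard part is exactly this last, non-resonant estimate: one must verify that freezing $V$, although it was manufactured out of $\theta(\cdot,E)$, does not create a resonance with $\theta(\cdot,E_j)$ --- i.e. that Lemma \ref{Keyleboud} survives both the change of weight from $1/(1+\xi)$ to $\chi(t)/(t-b)$ and the dependence of $V$ on another Pr\"ufer angle --- and that it does so with constants independent of the geometric parameters.
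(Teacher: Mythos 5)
Your construction is essentially the paper's: you define $V$ through the coupled nonlinear Pr\"ufer equation at energy $E$ as $-\frac{4M}{\cdot}\sin2\theta(\cdot,E)$, extract the $-M\log\frac{\xi-b}{\xi_0-b}$ decay from the resonant $\sin^2 2\theta(\cdot,E)$ term, control all remaining oscillatory integrals by Lemma \ref{Keyle1}, and treat the energies $E_j\neq E$ by the almost-orthogonality argument of Lemma \ref{Keyleboud}, exactly as in the paper (which handles smoothness and support by modifying $V$ near the endpoints rather than by a built-in cutoff, and works with $\frac{4M}{1+\xi}$ after translating $b$ to $0$). One small slip worth fixing: the slowly varying part of the phase difference is $\frac{E-E_j}{2c\,t^{2/3}}$ in the original variable, so the linearizing substitution is $y=t^{1/3}$ as in Lemma \ref{Keyleboud}, not $y=(t-b)^{1/3}$, which would give frequency $\frac{3(E-E_j)}{2c}\left(\frac{t-b}{t}\right)^{2/3}$ rather than a constant --- the uniformity in $b$ at this point is exactly as delicate in your sketch as in the paper's own reduction ``assume $b=0$''.
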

  \begin{proof}
  By changing $\xi$ to $\xi-b$, we assume $b=0$.
  We consider the non-linear differential equation for $\xi>0$,
 \begin{equation}\label{Gnonlinear}
   \frac{d \theta(\xi,E,\xi_0,\theta_0)}{d\xi}=1-(-\frac{5}{36\xi^2}-\frac{E}{c\xi^{\frac{2}{3}}}-\frac{4M}{  1+\xi}\sin2\theta)\sin^2\theta,
 \end{equation}
 where $C$ is a large constant that will be chosen later. Solving  \eqref{Gnonlinear} on $[\xi_0,\infty)$ with
 initial condition $\frac{ \theta^{\prime}(\xi_0)}{\theta(\xi_0)}=\tan \theta_0$, we get a unique solution.
 Let
 \begin{equation*}
   V(\xi)=-\frac{4M}{  1+\xi}\sin2\theta(\xi,E,\xi_0,\theta_0).
 \end{equation*}
 Let $Q(\xi,E)$ be given by \eqref{GPQ1mar13}.

 We will prove that $V$ satisfies  Theorem \ref{Twocase} under some modifications.

 The solution $\phi(\xi,E)$ of \eqref{Gschximar13} satisfies
 \begin{eqnarray}
    \log R(\xi,E)- \log R(\xi_0,E) &=& -\int_{\xi_0}^{\xi} \frac{2M}{ 1+x}\sin^22\theta(x,E) dx+ \int_{\xi_0}^{\xi} (- \frac{5}{72x^2}-\frac{E}{2cx^{\frac{2}{3}}})\sin2\theta(x,E) dx\nonumber \\
    &=& -\int_{\xi_0}^{\xi}\frac{M}{   1+x}dx+ \int_{\xi_0}^{\xi} \frac{M}{ 1+x}\cos4\theta(x,E)dx\nonumber \\
     && +\int_{\xi_0}^{\xi} ( -\frac{5}{72x^2}-\frac{E}{2cx^{\frac{2}{3}}})\sin2\theta(x,E) dx.\label{GPrufRmar13}
 \end{eqnarray}

 By \eqref{Gtheta1} and \eqref{GPrufeTmar13}, one has
 \begin{equation*}
   \int_{\xi_0}^{\xi} \frac{ 1}{ 1+x}\cos4\theta(x,E)dx=\frac{O(1)}{\xi_0}, \int_{\xi_0}^{\xi} (- \frac{5}{36x^2}-\frac{E}{cx^{\frac{2}{3}}})\sin2\theta(x,E) dx=\frac{O(1)}{\xi_0^{\frac{1}{3}}}.
 \end{equation*}
 By \eqref{GPrufRmar13}, we prove \eqref{thm142} and \eqref{thm143}.

 Let us move to the proof of \eqref{thm144}.
 The solution $\phi(\xi,E_j)$   satisfies
 \begin{eqnarray}
    \log R(\xi,E_j)- \log R(\xi_0,E_j) &=& \frac{1}{2}\int_{\xi_0}^{\xi} (-\frac{5}{36x^2}-\frac{E_j}{cx^{\frac{2}{3}}}-\frac{4M}{  1+x}\sin2\theta(x,E))
 \sin2\theta(x,E_j) dx\nonumber \\
    &=&  - \int_{\xi_0}^{\xi} \frac{2M}{ 1+x} \sin2\theta(x,E)
 \sin2\theta(x,E_j) dx\nonumber \\
  && +\frac{1}{2}\int_{\xi_0}^{\xi} (- \frac{5}{36x^2}-\frac{E_j}{cx^{\frac{2}{3}}})\sin2\theta(x,E_j) dx\label{GPrufRmar131}
 \end{eqnarray}
and
\begin{equation}\label{GPrufeTmar131}
  \frac{d\theta(\xi,E_j)}{d\xi}=1-Q(\xi,E_j)\sin^2\theta(\xi,E_j).
\end{equation}

 By \eqref{Gtheta1} and \eqref{GPrufeTmar131}, one has
 \begin{equation*}
     \int_{\xi_0}^{\xi} ( -\frac{5}{36x^2}-\frac{E_j}{cx^{\frac{2}{3}}})\sin2\theta(x,E_j) dx=\frac{O(1)}{\xi_0^{\frac{1}{3}}}.
 \end{equation*}
 Thus in order to prove   proof of \eqref{thm144}, we only need to prove
 \begin{equation}\label{Gkeyformar13}
    \int_{\xi_0}^{\xi} \frac{\sin2\theta(x,E)
 \sin2\theta(x,E_j) }{  1+x} dx=\frac{O(1)}{\xi_0^{\frac{1}{3}}}.
 \end{equation}
 By   Lemma \ref{Keyleboud},  \eqref{Gkeyformar13} is true since $E\notin A$.
 Thus $V$ satisfies the construction except the regularity.
 This can be done if we modify $V$ a little at two end points $\xi_0$ and $\xi_1$, and keep all the bounds.
  \end{proof}
\begin{remark}
We can also obtain the explicit  formula  for $C$ which depends on $|E|$, $|E_j|$ and $|E-E_j|$, $j=1,2,3,\cdots,N$ (see \cite{liuwkb,liu2018absence}). However, those constant $C$ only change the $L^2 $ norms by a factor, which does not influence the bounds of the  potentials.
\end{remark}
\section{ Universal gluing constructions}\label{UGC}
Let $\{E_j\}_{j=1}^N$ be any $N$ different points in $\R$.  We will use the piecewise functions to complete our constructions.
Let $B=\{E_j\}_{j=1}^N$ and $$S=100\max_{E_j\in B} \{C(E_j, B\backslash E_j)\},$$
where $C$ is given by Theorem \ref{Twocase}.

Let $T_w$ be a sequence and $J_w=1+N\sum_{j=1}^w T_j$, where  $w\in \Z^+$. Let $M>0$ and $J_0=1$.
Let  function $q(x)=0$ for $x\in[0,c]$ (that is $V(\xi)=0$ for $\xi\in[0,1]$). We can define $u$ on $[0,c]$ as following.
Let $u$  be  the  solution of $Hu=Eu$  on $[0,c]$ with the boundary condition at $0$. 
Thus we can define $\phi(\xi,E)$ for $\xi\in(0,1]$.

Now we  will   define function $V$ ($\text{supp} V\subset (1,\infty)$ ) and $\phi(\xi,E_j)$, $j=1,2,\ldots N$ on $(1,J_w)$ by induction, such that
\begin{enumerate}[1.]
\item
$\phi(\xi,E_j)$ solves  for $\xi\in (0,J_w)$
\begin{align}\label{eigenengj}
     \left( -\frac{d^2}{d\xi^2} +V(\xi)\right) \phi(\xi,E_j)
     =E_j  \phi(\xi,E_j),
\end{align}
and satisfies  boundary condition
\begin{equation}\label{1boundaryn}
    \frac{\phi^{\prime}(\frac{1}{2},E_j)}{\phi(\frac{1}{2},E_j)}=\tan\theta_j,
\end{equation}
\item
$\phi(\xi,E_j)$  for $j=1,2,\cdots,N$ and $w\geq 1$, satisfies
\begin{equation}\label{eigenjapr}
 R(J_{w},E_j) \leq (1+\frac{SM}{\sqrt[3]{J_{w-1}}})^{N} (\frac{J_{w-1}+T_{w}}{J_{w-1}})^{-M} R(J_{w-1},E_j),
\end{equation}
and  also for $\xi\in[J_{w-1},J_w]$
\begin{equation}\label{eigenj}
 R(\xi,E_j) \leq (1+\frac{SM}{\sqrt[3]{J_{w-1}}})^{N}  R(J_{w-1},E_j).
\end{equation}
\item  $V(\xi)\in C^{\infty}(0,J_w]$  and for $\xi\in [J_{w-1},J_w]$, one has
\begin{equation}\label{controlkr}
    | V(\xi) |\leq (1+\frac{(N-1)T_{w}}{J_{w-1}})\frac{4M }{\xi}.
\end{equation}
\end{enumerate}


 We proceed by an induction argument. Suppose  we completed the construction  $V(\xi)$ for step $w$. Accordingly, we can define  $\phi(\xi,E_j)$ on $(1,J_w]$ for all $j=1,2,\cdots,N$ by \eqref{eigenengj} and \eqref{1boundaryn}.
Applying Theorem  \ref{Twocase} to $\xi_0=J_w$, $\xi_1=J_w+T_{w+1}$, $b=0$, $E=E_1$, $\tan\theta_0=\frac{\phi^\prime(J_w,E_1)}{\phi(J_w,E_1)}$ and $A= B\backslash \{E_1\}$,  we can define
$V(\xi,E_1,B\backslash \{E_1\},J_w,J_{w}+T_{w+1},0,\theta_0)$  on $\xi\in (J_w, J_w+T_{w+1}]$ and also $\phi(\xi,E_1)$ on $(J_w, J_w+T_{w+1}]$.
Since the boundary condition matches at the point $ J_w$ (guaranteed by $\tan\theta_0=\frac{\phi^\prime(J_w,E_1)}{\phi(J_w,E_1)}$), $\phi(\xi,E_1)$ is well defined on $(1,J_w+T_{w+1}]$  and satisfies  \eqref{eigenengj} and \eqref{1boundaryn}.   We  define $\phi(\xi,E_j) $ on $(0,J_{w}+T_{w+1})$  by  \eqref{eigenengj} and \eqref{1boundaryn} for all  $j=2,3,\cdots,N$. Thus  $\phi(\xi,E_j)$ is well defined on $(1,J_w+T_{w+1}]$, and satisfies  \eqref{eigenengj} and \eqref{1boundaryn} for all $j=1,2,\cdots,N$.
 Moreover,  letting $\xi_1=J_w+T_{w+1}$  in Theorem  \ref{Twocase}, one has (by \eqref{thm142})
 \begin{equation}\label{Gkstep1}
   R(J_w+T_{w+1},E_1) \leq (1+\frac{SM}{\sqrt[3]{J_w}})(\frac{J_w+T_{w+1}}{J_w})^{-M} R(J_w,E_1),
 \end{equation}
 and for all $\xi\in[J_w,J_w+T_{w+1}]$, we have (by \eqref{thm143})
 \begin{equation}\label{Gkstep1new}
   R(\xi,E_1) \leq (1+\frac{SM}{\sqrt[3]{J_w}}) R(J_w,E_1).
 \end{equation}
%



 Suppose we give the definition  of $V$ and $\phi(\xi,E_j)$ for all $j$ on $(0,J_w+tT_{w+1}]$ for $t\leq N-1$. Let us give the definition on $(0,J_w+(t+1)T_{w+1}]$.

 Applying  Theorem \ref{Twocase}  to $\xi_0=J_w+tT_{w+1}$, $\xi_1=J_w+(t+1)T_{w+1}$, $b=tT_{w+1}$, $E=E_{t+1}$, $A=B\backslash E_{t+1}$ and $\tan \theta_0=\frac{\phi^\prime(J_w+tT_{w+1},E_{t+1})}{\phi(J_w+tT_{w+1},E_{t+1})}$,  we can define
 ${V}(\xi,E_{t+1}, B_{w+1}\backslash E_{t+1}, J_w+tT_{w+1},J_w+(t+1)T_{w+1},tT_{w+1},\theta_0)$ on $\xi\in (J_w+tT_{w+1}, J_w+(t+1)T_{w+1})$. Similarly,  we can define $\phi(\xi,E_j) $ on $(0,J_{w}+(t+1)T_{w+1}]$  for all  $j=1,2,\cdots,N$.
 Moreover,  letting $\xi_1=J_w+(t+1)T_{w+1}$ in  Theorem  \ref{Twocase}, one has
 \begin{equation}\label{Gkstept}
   R(J_w+(t+1)T_{w+1},E_{t+1}) \leq (1+\frac{SM}{\sqrt[3]{J_w}})(\frac{J_w+T_{w+1}}{J_w})^{-M} R(J_w+tT_{w+1},E_{t+1}),
 \end{equation}
 and also for $\xi\in [J_w+tT_{w+1},J_w+(t+1)T_{w+1}]$, one has
 \begin{equation}\label{Gksteptnew}
   R(\xi,E_{t+1}) \leq (1+\frac{SM}{\sqrt[3]{J_w}}) R(J_w+tT_{w+1},E_{t+1}).
 \end{equation}
 By induction, we can define  $V(\xi)$ and $\phi(\xi,E_j)$  for all $j=1,2,\cdots,N$ on $(0,J_w+NT_{w+1}]=(0,J_{w+1}]$.

 Now we should show that the definition satisfies the $w+1$ step conditions \eqref{eigenengj}-\eqref{controlkr}.

 Let us  pick up  $R(\xi,E_j)$ for  some $E_j\in B$.
 $R(\xi,E_j)$ decreases  from   point  $J_w+(j-1)T_{w+1}$ to $J_w+jT_{w+1}$, and
  may increase from any point $J_w+(m-1)T_{w+1}$ to $J_w+mT_{w+1}$, $m=1,2,\cdots,N$ and $m\neq j$.
  That is
  \begin{equation*}
    R(J_w+jT_{w+1},E_j)\leq (1+\frac{SM}{\sqrt[3]{J_w}})(\frac{J_w+T_{w+1}}{J_w})^{-M} R(J_w+(j-1)T_{w+1},E_{j}),
  \end{equation*}
  and   for $m\neq j$,
  \begin{equation*}
    R(J_w+mT_{w+1},E_j)\leq (1+\frac{SM}{\sqrt[3]{J_w}}) R(J_w+(m-1)T_{w+1},E_{j}),
  \end{equation*}
  by  Theorem  \ref{Twocase}.

   Thus for $j=1,2,\cdots,N$,
   \begin{equation*}
    R(J_{w+1},E_j)\leq (1+\frac{SM}{\sqrt[3]{J_w}})^{N} (\frac{J_w+T_{w+1}}{J_w})^{-M} R(J_{w},E_j).
   \end{equation*}
   This leads to \eqref{eigenjapr}.
By the same arguments, we have
    for $j=1,2,\cdots,N$ and $\xi\in[J_w,J_{w+1}]$,
   \begin{equation*}
    R(\xi,E_j)\leq (1+\frac{SM}{\sqrt[3]{J_w}})^{N}  R(J_{w},E_j).
   \end{equation*}
This implies (\ref{eigenj}) for $w+1$.

 By the construction of $V(\xi)$,  we have
 for $\xi\in[J_w+tT_{w+1},J_w+(t+1)T_{w+1}]$ and $0\leq t\leq N-1$,
 \begin{equation}\label{b'1}
   |V(\xi)| \leq \frac{4M}{\xi-tT_{w+1}}.
 \end{equation}
In order to prove \eqref{controlkr}, it suffices to show for all $\xi\in[J_w+tT_{w+1},J_w+(t+1)T_{w+1}]$,
\begin{equation*}
  \frac{1}{\xi-tT_{w+1}}\leq (1+\frac{(N-1)T_{w+1}}{J_w}) \frac{1}{\xi}.
\end{equation*}
It suffices (we only need check  $\xi=J_w+tT_{w+1}$) to prove
\begin{equation*}
   \frac{1}{J_w}\leq (1+\frac{(N-1)T_{w+1}}{J_w}) \frac{1}{J_w+t T_{w+1}}.
\end{equation*}
Since $t\leq N-1$, we only need to show
\begin{equation*}
   \frac{1}{J_w}\leq  (1+\frac{(N-1)T_{w+1}}{J_w}) \frac{1}{J_w+(N-1) T_{w+1}},
\end{equation*}
which is true by calculations.
\section{Proof of Theorems of \ref{Mainthm3} and \ref{Mainthm4}, and all the Corollaries}\label{Twoapp}

\begin{proof}[\bf Proof of Theorems of \ref{Mainthm3}]
The case $N=1$ has been addressed in Theorem \ref{Mainthm2}.
Suppose $N\geq 2$.

 Let  $\epsilon=\frac{1}{\sqrt{\ln N}}$ and  $M=\frac{1}{6}+\frac{1}{6\epsilon}$.
 For $w\in \Z^+,$ let $T_w=N^{(1+\epsilon)w}$  so that
 $$J_w=1+N\sum_{i=1}^w N^i=1+N\frac{N^{(1+\epsilon)w+1}-1}{N^{(1+\epsilon)}-1}.$$

 It is easy to check that
 \begin{equation*}
   \lim_{w\to \infty}\frac{J_w+T_{w+1}}{J_w}=N^{\epsilon}-\frac{1}{N}+1.
 \end{equation*}
 For large $w$, say $w\geq w_0$, one has
 \begin{equation*}
   \frac{J_w+T_{w+1}}{J_w}\geq N^{\epsilon}+\frac{1}{4}.
 \end{equation*}
 Thus by \eqref{eigenjapr},  we have  for  $w\geq w_0$,
\begin{equation*}
 R(J_{w},E_j) \leq  (1+\frac{SM}{\sqrt[3]{J_w}})^{N}(N^{\epsilon}+\frac{1}{4})^{- M} R(J_{w-1},E_j),
\end{equation*}
and then
\begin{equation*}
 R(J_{w},E_j) \leq (1+\frac{SM}{\sqrt[3]{J_w}})^{N(w-w_0)}(N^{\epsilon}+\frac{1}{4})^{- M(w-w_0)} R(J_{w_0},E_j).
\end{equation*}
By \eqref{eigenj}, we have for $\xi\in[J_{w},J_{w+1}]$,
\begin{equation}\label{eigenjnew}
 R(\xi,E_j) \leq  (1+\frac{SM}{\sqrt[3]{J_w}})^{N(w+1-w_0)}(N^{\epsilon}+\frac{1}{4})^{-M(w-w_0)} R(J_{w_0},E_j).
\end{equation}
Let $\delta_w=\frac{1}{\sqrt{w}}$.
Let $p_{\delta_w}=3-\delta_w$. Let  $q_{\delta_w}$ be such that $\frac{1}{p_{\delta_w}}+\frac{1}{q_{\delta_w}}=1$. Then $q_{\delta_w}=\frac{3}{2}+\frac{\delta_w}{4-2\delta_w}$.

By basic inequality, one has
\begin{equation}\label{Gapr142}
  \int_{J_{w+1}}^{J_{w+2}} R^2(\xi,E_j)p(\xi) d\xi\leq (\int_{J_{w+1}}^{J_{w+2}} R^{2p_{\delta_w}}(\xi,E_j) d\xi)^{\frac{1}{p_{\delta_w}}}(\int_{J_{w+1}}^{J_{w+2}}p(\xi)^{q_{\delta_w}} d\xi)^{\frac{1}{q_{\delta_w}}}.
\end{equation}
Direct computations show that
\begin{eqnarray}
  \int_{J_{w+1}}^{J_{w+2}}p(\xi)^{q_{\delta_w}} d\xi &\leq & O(1)\frac{1}{\delta_w} \frac{1}{J_{w+1}^{\frac{1}{10}\delta_w}}\nonumber\\
   &=& O(1)  \frac{\sqrt{w}} { N^{\frac{1+\epsilon}{10} \sqrt{w} }} \nonumber\\
    &=& O(1).\label{Gapr151}
\end{eqnarray}
It is easy to see
\begin{equation}\label{Gapr152}
   (1+\frac{SM}{\sqrt[3]{J_{w+1}}})^{N2p_{\delta_w}w}=O(1).
\end{equation}
By \eqref{eigenjnew} and \eqref{Gapr152}, one has
\begin{equation*}
\int_{J_{w+1}}^{J_{w+2}} R^{2p_{\delta_w}}(\xi,E_j) d\xi\;\;\;\;\;\;\;\;\;\;\;\;\;\;\;\;\;\;\;\;\;\;\;\;\;\;\;\;\;\;\;\;\;\;\;\;\;\;\;\;\;\;\;\;\;\;\;\;\;\;\;\;\;\;\;\;\;\;\;\;\;\;\;\;\;\;\;\;\;\;\;\;\;\;\;\;\;\;\;\;\;\;\;\;\;\;\;\;
\;\;\;\;\;\;\;\;\;\;\;\;\;\;\;\;\;\;\;\;\;\;\;\;\;\;\;\;\;
\end{equation*}
\begin{eqnarray}
   &\leq& (1+\frac{SM}{\sqrt[3]{J_{w+1}}})^{N2p_{\delta_w}(w+2-w_0)}(N^{\epsilon}+\frac{1}{4})^{-2p_{\delta_w} M(w+1-w_0)} R(J_{w_0},E_j)^{2p_{\delta_w}}\int_{J_{w+1}}^{J_{w+2}}  d\xi\nonumber  \\
    &\leq&  O(1)(N^{\epsilon}+\frac{1}{4})^{-2p_{\delta_w}   M(w-w_0)} R(J_{w_0},E_j)^{2p_{\delta_w}} NT_{w+1} \nonumber  \\
    &\leq&  O(1)(N^{\epsilon}+\frac{1}{4})^{-2p_{\delta_w}   Mw}   N^{(1+\epsilon)w}\nonumber \\
     &=&  O(1)(1+ \frac{1}{4N^{\epsilon}})^{-2p_{\delta_w}   Mw}  N^{-2\epsilon p_{\delta_w}   Mw} N^{(1+\epsilon)w}\nonumber \\
      &=&  O(1)(1+ \frac{1}{4N^{\epsilon}})^{-2p_{\delta_w}   Mw}N^{2\epsilon  \epsilon_w   Mw} N^{-6\epsilon   Mw} N^{(1+\epsilon)w} \nonumber\\
        &=&  O(1)(1+ \frac{1}{4N^{\epsilon}})^{-2p_{\delta_w}   Mw}N^{2\epsilon  \epsilon_w   Mw} ,\label{Gapr153}
\end{eqnarray}
where the last equality holds by the fact $6\epsilon M=1+\epsilon$.

Direct computation  shows
\begin{eqnarray}
  \sum_{w=w_0}^{\infty}\left [(1+ \frac{1}{4N^{\epsilon}})^{-2p_{\delta_w}   Mw}N^{2\epsilon  \epsilon_w   Mw}\right]^{\frac{1}{p_{\delta_w}}} &=& \sum_{w=w_0}^{\infty} \left[(1+ \frac{1}{4N^{\epsilon}})^{-2p_{\delta_w}   Mw}N^{2\epsilon  M\sqrt{w}}\right]^{\frac{1}{p_{\delta_w}}} \nonumber \\
   &\leq &  \sum_{w=w_0}^{\infty}\left[(1+ \frac{1}{4N^{\epsilon}})^{-4  Mw}N^{2\epsilon  M\sqrt{w}}\right]^{\frac{1}{4}}\nonumber \\
    &\leq &\sum_{w=w_0}^{\infty}(1+ \frac{1}{4N^{\epsilon}})^{-  Mw}N^{\frac{\epsilon}{2}  M\sqrt{w}}\nonumber \\
    &<&\infty.\label{Gapr154}
\end{eqnarray}
By \eqref{Gapr142}, \eqref{Gapr151}, \eqref{Gapr153} and \eqref{Gapr154}, we have
$R(\xi,E_j)\in L^2(\xi, p(\xi)d\xi)$
for all $j=1,2,\cdots,N$.
This implies  $E_j$ is an  eigenvalue, $j=1,2,\cdots,N$.

By the fact $\lim _{w\to\infty}\frac{T_{w+1}}{J_w}=N^{\epsilon}-\frac{1}{N}$, and
\eqref{controlkr}, one has
 \begin{eqnarray}
    \limsup _{\xi\to \infty}\xi|V(\xi)|&\leq& 4(1+(N-1)(N^{\epsilon}-\frac{1}{N}))M \nonumber\\
     &=& \frac{2}{3} (1+(N-1)(N^{\epsilon}-\frac{1}{N})) (1+\frac{1}{\epsilon})\nonumber\\
     &=&  \frac{2}{3} N^{1+\epsilon} (1+\frac{1}{\epsilon})\nonumber\\
     &\leq&  \frac{2}{3} N^{1+\epsilon} e^{\frac{1}{\epsilon}}\nonumber\\
       &=& \frac{2}{3}e^{2\sqrt{\ln N}}N.
 \end{eqnarray}
 By \eqref{Gv}, we have
\begin{equation*}
  \limsup _{\xi\to \infty}\sqrt{\xi}|q(\xi)|\leq e^{2\sqrt{\ln N}}N.
\end{equation*}
We finish the proof.
\end{proof}

Let $M=100$  and $K=100CM$ in Theorem \ref{Twocase}. We get
\begin{proposition}\label{Twocase1}
Let  $E\in \R$ and $ A=\{{E}_j\}_{j=1}^k$. Suppose $E\notin A$ and $\{E_j\}_{j=1}^k$ are distinct.
Suppose  $\theta_0\in[0,\pi]$. Let $\xi_1>\xi_0>b$.
Then there exist constants $K(E, A)$, $C(E, A)$ (independent of $b, \xi_0$ and $\xi_1$) and potential $ V(\xi,E,A,\xi_0,\xi_1,b,\theta_0)$  such that  for $\xi_0-b>K(E,A)$ the following holds:

   \begin{description}
     \item[Potential]   for $\xi_0\leq \xi \leq \xi_1$, ${\rm supp}( V)\subset(\xi_0,\xi_1)$,  $  V\in C^{\infty}(\xi_0,\xi_1)$,  and
     \begin{equation*}
        | V(\xi,E,A,\xi_0,\xi_1,b,\theta_0)|\leq \frac{C(E, A)}{\xi-b}
     \end{equation*}

     \item[Solution for $E$]Let $Q(\xi,E)$ be given by \eqref{GPQ1mar13}. Then the solution of $(-D^2+Q(\xi,E))\phi=\phi$ with boundary condition $\frac{\phi^\prime(\xi_0)}{\phi(\xi_0)}=\tan\theta_0$ satisfies
     \begin{equation*}
        R(\xi_1,E)\leq 2(\frac{\xi_1-b}{\xi_0-b})^{-100} R(\xi_0,E)
     \end{equation*}
     and  for $\xi_0<\xi<\xi_1$,
      \begin{equation*}
        R(\xi,E)\leq 2R(\xi_0,E).
     \end{equation*}
      \item[Solution for ${E}_j$] Let $Q(\xi,E_j)$ be given by \eqref{GPQ1mar13}. Then the solution of $(-D^2+ Q(\xi,E_j))\phi=\phi$ with any boundary condition at $\xi_0$ satisfies
      for $\xi_0<\xi\leq \xi_1$,
      \begin{equation*}
        R(\xi,{E}_j)\leq  2R(\xi_0,{E}_j).
     \end{equation*}
   \end{description}

  \end{proposition}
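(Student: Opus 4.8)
The plan is to deduce Proposition \ref{Twocase1} directly from Theorem \ref{Twocase} by specializing the free parameter $M$ to the fixed value $M=100$. Applying Theorem \ref{Twocase} with this choice produces a constant $C=C(E,A)$ (independent of $b,\xi_0,\xi_1$) and a potential, which I rename $V(\xi,E,A,\xi_0,\xi_1,b,\theta_0):=V(100,\xi,E,A,\xi_0,\xi_1,b,\theta_0)$. The potential bound \eqref{thm141} immediately gives $|V(\xi,E,A,\xi_0,\xi_1,b,\theta_0)|\le \frac{4M}{\xi-b}=\frac{400}{\xi-b}$, so the required estimate holds with the new constant $C(E,A)$ taken to be any value $\ge 400$; the support property, the smoothness $V\in C^\infty(\xi_0,\xi_1)$, and the hypothesis $E\notin A$ are inherited verbatim.

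Next I would fix $K(E,A)$ large enough — concretely, any $K\ge (100CM)^3=(10^4 C)^3$ works, and one may simply record $K=100CM$ with the understanding that it is enlarged if necessary — so that the condition $\xi_0-b>K(E,A)$ forces $\frac{CM}{(\xi_0-b)^{1/3}}\le 1$. Then the amplification factor $1+\frac{CM}{(\xi_0-b)^{1/3}}$ that appears in \eqref{thm142}, \eqref{thm143} and \eqref{thm144} is bounded by $2$ throughout the range under consideration.

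Finally, substituting $M=100$ and this bound on the amplification factor into the three conclusions of Theorem \ref{Twocase} yields $R(\xi_1,E)\le 2\left(\frac{\xi_1-b}{\xi_0-b}\right)^{-100}R(\xi_0,E)$ from \eqref{thm142}, $R(\xi,E)\le 2R(\xi_0,E)$ for $\xi_0<\xi<\xi_1$ from \eqref{thm143}, and $R(\xi,E_j)\le 2R(\xi_0,E_j)$ for $\xi_0<\xi\le\xi_1$ from \eqref{thm144}, which are precisely the assertions of Proposition \ref{Twocase1}. There is no substantive obstacle here: the statement is a bookkeeping reformulation of Theorem \ref{Twocase} with explicit constants, and the only point requiring a moment's care is choosing $K(E,A)$ large enough as a function of $C(E,A)$ and $M$ to absorb the $(\xi_0-b)^{-1/3}$ correction term into the harmless constant $2$.
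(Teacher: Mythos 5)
Your proposal is correct and is essentially the paper's own argument: the paper proves the proposition in one line by setting $M=100$ and $K=100CM$ in Theorem \ref{Twocase}, exactly as you do. Your additional remark that $K$ should really be taken of size $(CM)^3$ (so that $\xi_0-b>K$ forces $CM/(\xi_0-b)^{1/3}\leq 1$ and the amplification factor is at most $2$) is a careful reading of the bookkeeping that the paper glosses over, and your handling of the constant $C(E,A)\geq 4M=400$ for the potential bound is likewise fine.
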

\begin{proof}[\bf Proof of Theorem \ref{Mainthm4}]
Once we have  Proposition \ref{Twocase1}, we can prove
    Theorem    \ref{Mainthm4}   by the    arguments in \cite{ld1,jl}. We omit the details here.
  \end{proof}

\begin{proof}[\bf Proof of  all the Corollaries]
Corollaries \ref{cor1} and \ref{cor3} follow from Theorems \ref{Mainthm1} and \ref{Maintheoremapr3} respectively.
Let  $u(x,E) $ be  the  solution of
$\widetilde{H}u=Eu$  on $(-\infty,0]$ such that $u\in L^2(-\infty,0]$ (this can be guaranteed by Lemma \ref{Keyle2}).
Let $\theta_E=\frac{u^{\prime}(0,E)}{u(0,E)}$. Instead of using boundary condition $\theta $ in the previous arguments,
we use $\theta_E$. Now    Corollaries \ref{cor2}, \ref{cor4} and \ref{cor5} follow from Theorems \ref{Mainthm2}, \ref{Mainthm3} and \ref{Mainthm4}.
\end{proof}
\section{Proof of general cases}\label{General}
In this section, we  will  adapt our proof for $\alpha=1$  to  general  $\alpha\in(0,2)$.
Take $v(x)=v_{\alpha}(x)=x^{\alpha}$ in \eqref{Gsch}-\eqref{GPQ}  and let  $c_{\alpha}=(1+\frac{\alpha}{2})^{\frac{2}{2+\alpha}}$.
 In the general cases, we have
 \begin{equation}\label{GLiou1new}
  x=c_{\alpha}\xi^{\frac{2}{2+\alpha}}, \phi_{\alpha}(\xi,E)=c_{\alpha}^{\frac{\alpha}{4}}\xi^{\frac{\alpha}{2(2+\alpha)}}u(c_{\alpha}\xi^{\frac{2}{2+\alpha}}),
 \end{equation}
 \begin{equation}\label{GWei1new}
  p_{\alpha}(\xi)= \frac{1}{c_{\alpha}^{\alpha}\xi^{\frac{2\alpha}{2+\alpha}}},
\end{equation}
and
\begin{equation}\label{GPQ1oldnew}
 Q_{\alpha}(\xi,E)= -\frac{5}{4}\frac{\alpha^2}{(2+\alpha)^2}\frac{1}{\xi^2}+\frac{\alpha(\alpha-1)}{(2+\alpha)^2}\frac{1}{\xi^{2}}+\frac{q(c_{\alpha}\xi^{\frac{2}{2+\alpha}})-E}{c_{\alpha}^{\alpha}\xi^{\frac{2\alpha}{2+\alpha}}}.
\end{equation}
Let
\begin{equation}\label{Gvaprnew}
 V_{\alpha}(\xi)=\frac{q(c_{\alpha}\xi^{\frac{2}{2+\alpha}})}{c_{\alpha}^{\alpha}\xi^{\frac{2\alpha}{2+\alpha}}}.
\end{equation}
Then
\begin{eqnarray*}
   Q_{\alpha}(\xi,E) &=& -\frac{5}{4}\frac{\alpha^2}{(2+\alpha)^2}\frac{1}{\xi^2}+\frac{\alpha(\alpha-1)}{(2+\alpha)^2}\frac{1}{\xi^{2}}-
   \frac{E}{c_{\alpha}^{\alpha}\xi^{\frac{2\alpha}{2+\alpha}}}+V_{\alpha}(\xi).
\end{eqnarray*}

Suppose $u\in L^2(\R^+)$  is a solution of \eqref{Gsch} with $v(x)=x^{\alpha}$. It follows that
$\phi_{\alpha}$ satisfies
\begin{equation}\label{Gschxinew}
  -\frac{d^2\phi_{\alpha}}{d\xi^2}+Q_{\alpha}(\xi,E)\phi_{\alpha}=\phi_{\alpha}.
\end{equation}

Now,   all the quantities such as  $Q_{\alpha}(\xi,E)$ and $c_{\alpha}$ depend on $\alpha$.

In order to proceed the proof in a similar way,   two important components are essential: 1. all the estimates of oscillated integral still hold;  2. how
the $\alpha$-dependent constants are computed in the main theorems.

 It is convenient to employ a slight  different Prf\"ufer transformation. Such standard  trick  has been   used to deal with Stark operators before (p.10 in \cite{christ2003absolutely}).

Let $H_{\alpha}(\xi,E)=-
   \frac{E}{c_{\alpha}^{\alpha}\xi^{\frac{2\alpha}{2+\alpha}}}$.
 The new  Pr\"{u}fer tranformation is given by
\begin{equation}\label{GPruf1new}
 \sqrt{1-H_{\alpha}(\xi,E)}\phi_{\alpha}(\xi,E)=R_{\alpha}(\xi,E)\sin\theta_{\alpha}(\xi,E),
\end{equation}
and
\begin{equation}\label{GPrufnew}
\frac{d \phi_{\alpha}(\xi,E)}{d \xi}=R_{\alpha}(\xi,E)\cos\theta_{\alpha}(\xi,E).
\end{equation}
By \eqref{Gschxinew}, we have
\begin{equation}\label{GPrufRmar14new}
 \frac{d\log R_{\alpha}(\xi,E)}{d\xi}=\frac{1}{2}\frac{V_{\alpha}(\xi)}{\sqrt{1-H_{\alpha}(\xi,E)}}\sin2\theta_{\alpha}(\xi,E)+ O(\frac{1}{\xi^{1+\frac{2\alpha}{2+\alpha}}})
\end{equation}
and
\begin{equation}\label{GPrufeTmar14new}
  \frac{d\theta_{\alpha}(\xi,E)}{d\xi}=\sqrt{1-H_{\alpha}(\xi,E)}-\frac{V_{\alpha}(\xi)}{\sqrt{1-H_{\alpha}(\xi,E)}}\sin^2\theta_{\alpha}(\xi,E)+
  O(\frac{1}{\xi^{1+\frac{2\alpha}{2+\alpha}}}).
\end{equation}
 As in this paper $ |V_{\alpha}(\xi)|\leq \frac{h(\xi)}{1+\xi}$,  for any $h(\xi)$ with $h(\xi)\to \infty$ as $\xi\to \infty$, one has
\begin{equation*}
  V_{\alpha}(\xi)=\frac{O(1)}{\xi^{1-\frac{\alpha}{2+\alpha}}}.
\end{equation*}
Finally, \eqref{GPrufRmar14new} and \eqref{GPrufeTmar14new}
become
\begin{equation}\label{GPrufRmar14new1}
 \frac{d\log R_{\alpha}(\xi,E)}{d\xi}=\frac{1}{2} V_{\alpha}(\xi) \sin2\theta_{\alpha}(\xi,E)+ O(\frac{1}{\xi^{1+\frac{\alpha}{2+\alpha}}}),
\end{equation}
and
\begin{equation}\label{GPrufeTmar14new1}
  \frac{d\theta_{\alpha}(\xi,E)}{d\xi}=\sqrt{1-H_{\alpha}(\xi,E)}- V_{\alpha}(\xi) \sin^2\theta_{\alpha}(\xi,E)+
  O(\frac{1}{\xi^{1+\frac{\alpha}{2+\alpha}}}).
\end{equation}
Since the tail $O(\frac{1}{\xi^{1+\frac{\alpha}{2+\alpha}}})$ in \eqref{GPrufRmar14new1} and \eqref{GPrufeTmar14new1} is integrable, it does not change the estimate  at all up to a constant.  Under the new Pr\"ufer transformation and following the  arguments  of proof of $\alpha=1$,   it is not hard to verify all the estimates of oscillated integral still hold.

Now we are going to convince the readers the $\alpha$-dependent constants in the main theorems.  There are two $\alpha$-dependent constants: the power decay  rate of $\xi$  (this constant is fixed in every theorem and it is equal $1-\frac{\alpha}{2}$)  and the constants in front of $x^{1-\frac{\alpha}{2}}$.

By \eqref{GPrufRmar14new1} and the proof of $\alpha=1$, the critical case of $V_{\alpha}(\xi)$ is $\frac{O(1)}{\xi}$. By the relation \eqref{Gvaprnew}, it is easy to see that
$ \frac{O(1)}{x^{1-\frac{\alpha}{2}}}$ is the critical case for $q(x)$. It explains where the constant $1-\frac{\alpha}{2}$ is from.

Now we are in the position to explain the constants in front of $x^{1-\frac{\alpha}{2}}$. Since the constants  are different in different theorems, we need to check them one by one.

Let us check the constant $\frac{2-\alpha}{4} \pi $ in Theorems \ref{Mainthm1} and \ref{Mainthm2} first.
Here are the details.
Suppose $q(x)\approx\frac{A}{ x^{1-\frac{\alpha}{2}}}$.
By \eqref{Gvaprnew}, $ V_{\alpha}(\xi)\approx \frac{A}{ \frac{1+\alpha}{2}}\frac{1}{\xi}$. Similar to \eqref{Gmar311N} and  \eqref{Gmar311},
\begin{equation}\label{Nov111}
 R^2(\xi,E)\approx\xi^{-\frac{4A}{\pi(2+ \alpha)}}.
\end{equation}

 Since the critical case of  $R^2_{\alpha}(\xi,E)p_{\alpha}(\xi)$ is $\frac{O(1)}{\xi}$ ($\frac{1}{\xi^{1+\delta}}$  is integrable and $\frac{1}{\xi^{1-\delta}}$ is not integrable for $\delta>0$), we have that the critical case for $R^2_{\alpha}(\xi,E)$ is
 \begin{equation}\label{Nov110}
 R^2_{\alpha}(\xi,E)  \approx \frac{O(1)}{p_{\alpha}(\xi)\xi}= O(1) \xi^{-\frac{2-\alpha}{2+\alpha}} .
 \end{equation}
 By \eqref{Nov111} and \eqref{Nov110}, we have $A=\frac{2-\alpha}{4}\pi.$ We finish  checking in Theorems \ref{Mainthm1} and \ref{Mainthm2}.

  For the rest of constants in main theorems, we can do the similar check.  Actually, from  \eqref{Nov111} and \eqref{Nov110}, we can see that only ratio $\frac{A}{2-\alpha}$ matters. This fact also holds for the rest of the theorems. By the fact that the ratio $\frac{A}{2-\alpha}$  does not depend on $\alpha$ and the constants for $\alpha=1$, we can easily see the constants in all the theorems are correct.

\appendix

\section{Proof of Lemma  \ref{Keyle2}}

By symmetry, it suffices to prove the following Lemma.

\begin{lemma}\label{Keyleapp1}
Suppose $\lim _{x\to \infty}\widetilde{q}(x)=\infty$. Let us consider equation
\begin{equation}\label{Gnescapp}
  -y^{\prime\prime}+\widetilde{q}(x)y=0.
\end{equation}
Then for any $M>0$,  there is a solution of \eqref{Gnescapp} such that
\begin{equation}\label{Gnesc1app}
  |y(x)| \leq  e^{-M|x|}
\end{equation}
for large $x$.
\end{lemma}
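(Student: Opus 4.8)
The plan is to construct the desired decaying solution by a standard Liouville--Green (WKB) comparison argument, exploiting that $\widetilde q(x)\to\infty$ so that the equation is increasingly ``in the classically forbidden region.'' First I would observe that since $\widetilde q(x)\to\infty$, for any fixed $M>0$ there is $x_1$ such that $\widetilde q(x)\ge M^2+1$ for all $x\ge x_1$; after this point the equation $-y''+\widetilde q y=0$ is disconjugate and possesses a principal (recessive) solution $y_+$ which is positive, decreasing, and satisfies $y_+'/y_+=-\sqrt{\widetilde q(x)}(1+o(1))$. Concretely, one can build $y_+$ as a fixed point: writing $y_+(x)=\exp\!\big(-\int_{x_1}^x \rho(t)\,dt\big)$ and plugging into the Riccati equation $\rho'=\widetilde q-\rho^2$, one solves for $\rho$ with $\rho(x)\ge M$ for $x\ge x_1$, either via monotone iteration starting from the supersolution $\rho\equiv M$ (using $\rho'=\widetilde q-\rho^2\ge M^2+1-\rho^2>0$ whenever $\rho=M$, which keeps $\rho\ge M$) or by invoking the known existence of the principal solution for disconjugate equations.

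Second, once $\rho(x)\ge M$ on $[x_1,\infty)$ is secured, the corresponding solution $y_+$ obeys
\begin{equation*}
 y_+(x)=\exp\!\Big(-\int_{x_1}^x\rho(t)\,dt\Big)\le \exp\big(-M(x-x_1)\big)=e^{Mx_1}e^{-Mx}
\end{equation*}
for $x\ge x_1$. This is not quite the stated bound $|y(x)|\le e^{-M|x|}$, but one fixes this cheaply: apply the construction with $M$ replaced by $M+1$ (still an arbitrary positive parameter), obtaining a solution with $|y(x)|\le e^{(M+1)x_1}e^{-(M+1)x}$, and then note $e^{(M+1)x_1}e^{-(M+1)x}\le e^{-Mx}=e^{-M|x|}$ for all $x$ large enough, say $x\ge x_0$ with $x_0$ chosen so that $e^{(M+1)x_1}\le e^{x}$, i.e. $x_0=\max\{x_1,(M+1)x_1\}$. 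Rescaling the solution by a positive constant (which is still a solution) absorbs any leftover factor, so the conclusion holds for large $x$ as required by Lemma \ref{Keyleapp1}, and then Lemma \ref{Keyle2} follows by the reflection $x\mapsto -x$ as indicated.

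The main obstacle is making the Riccati/principal-solution step rigorous without circularity: one must guarantee that the forbidden-region solution $y_+$ actually exists globally on $[x_1,\infty)$ and stays positive, rather than blowing up or crossing zero. The clean way is to invoke the general theory of principal (recessive) solutions of second-order linear ODEs: for a disconjugate equation $-y''+\widetilde q y=0$ on a half-line (here disconjugacy follows from $\widetilde q\ge 0$ eventually), the principal solution exists, is unique up to positive scalar, and can be characterized as $y_+(x)=y_0(x)\int_x^\infty y_0(t)^{-2}\,dt$ for any other (dominant) solution $y_0$ with $y_0>0$; its logarithmic derivative satisfies the Riccati equation, and the comparison $\rho\ge M$ follows from $\widetilde q\ge M^2+1$ by a differential-inequality (barrier) argument on $\rho$. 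Everything else is routine estimation, so I would present the barrier argument for $\rho$ carefully and cite the standard existence of the principal solution, then close with the elementary rescaling described above.
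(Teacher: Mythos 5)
Your overall route (compare with a constant-coefficient equation in the classically forbidden region and extract the recessive/principal solution) is close in spirit to the paper's, which simply quotes Kato's comparison lemma with $y_0=e^{-2Mx}$ and $\widetilde{q}\ge 4M^2$. But the concrete mechanism you propose has a genuine flaw. With your substitution $y_+(x)=\exp\bigl(-\int_{x_1}^x\rho(t)\,dt\bigr)$, the Riccati equation is $\rho'=\rho^2-\widetilde{q}$, not $\rho'=\widetilde{q}-\rho^2$; the equation you wrote is the one satisfied by $w=y'/y$, i.e.\ by the \emph{growing} normalization. With the correct sign your barrier fails in the forward direction: at $\rho=M$ one has $\rho'=M^2-\widetilde{q}\le -1<0$, so the forward flow leaves the set $\{\rho\ge M\}$ rather than preserving it. Equivalently, for $w=y'/y$ with $w'=\widetilde{q}-w^2$, the sets $\{w\ge -M\}$ and $\{w\ge M\}$ are forward invariant, so forward integration from a generic initial condition produces the dominant solution; the decaying solution you want is exactly the forward-unstable (exceptional) branch of the Riccati flow and cannot be obtained by a forward barrier or monotone iteration started from $\rho\equiv M$.

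The same issue infects the fallback sentence that ``the comparison $\rho\ge M$ follows \dots by a differential-inequality (barrier) argument on $\rho$'': a pointwise differential inequality alone cannot give this, precisely because $\{w\le -M\}$ is not forward invariant. What does repair your proof: take the principal solution $y_+$ (it exists since $\widetilde{q}\ge 0$ eventually makes the equation disconjugate) and use its defining property $\int^\infty y_+(t)^{-2}\,dt=\infty$ to exclude escape. Indeed, if $w_+=y_+'/y_+$ ever exceeded $-M$ on $[x_1,\infty)$, then forward invariance of $\{w\ge -M\}$ together with $w'\ge 1$ on the strip $|w|\le M$ forces $w_+\ge M$ from some point on, so $y_+$ grows exponentially and $\int^\infty y_+^{-2}<\infty$, a contradiction; hence $w_+\le -M$, giving $y_+(x)\le y_+(x_1)e^{-M(x-x_1)}$, after which your closing step (run the argument with $M+1$, or absorb the constant for large $x$) is fine. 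Alternatively, the paper's shortcut avoids the Riccati subtleties entirely: applying Kato's Lemma C to $f=4M^2\le g=\widetilde{q}$ with the explicit positive solution $y_0=e^{-2Mx}$ yields a positive solution whose ratio to $e^{-2Mx}$ is non-increasing, which gives \eqref{Gnesc1app} for large $x$.
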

We give a Lemma first.
\begin{lemma}[Lemma C, \cite{kato}]\label{Keyleaap2}
Suppose $f(x)$ and $g(x)$ are continuous on $\R^+$, and $f(x)\leq g(x)$.
Consider the two differential equations
\begin{equation*}
  y^{\prime\prime}-f(x)y=0,  z^{\prime\prime}-g(x)z=0.
\end{equation*}
If the first equation has a positive solution $y_0(x)$, then the second equation has
two positive solutions $z_1(x)$ and $z_2(x)$ such that $\frac{z_1(x)}{y_0(x)}$ is non-increasing and
$\frac{z_2(x)}{y_0(x)}$ is non-deceasing.
\end{lemma}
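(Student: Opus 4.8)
The final statement is Kato's comparison lemma (Lemma~\ref{Keyleaap2}); it is classical and one could simply quote the argument from \cite{kato}, but a self-contained proof runs as follows. The plan is to recast the conclusion as a sign condition on a Wronskian-type quantity and then build the two required solutions separately --- one by a plain continuation argument, the other by a limiting boundary-value argument. For any solution $z$ of $z''-gz=0$ I set $W(x)=z'(x)y_0(x)-z(x)y_0'(x)$. Using $y_0''=fy_0$ and $z''=gz$ one computes $W'=(g-f)\,y_0z$ and $(z/y_0)'=W/y_0^{2}$. Since $y_0>0$ and $f\le g$, this shows at once that $z/y_0$ is non-increasing exactly where $W\le 0$, non-decreasing exactly where $W\ge 0$, and that $W$ is non-decreasing on every interval where $z>0$. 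I also record the elementary Sturm consequence that, because $y_0>0$ has no zero on $\R^+$ and $f\le g$, every nontrivial solution of $z''-gz=0$ has at most one zero on $\R^+$; in particular, for each $T>0$ the two-point problem $z''-gz=0$ on $[0,T]$ with $z(0)=y_0(0)$, $z(T)=0$ is uniquely solvable.

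For the non-decreasing solution, take $z_2$ with $z_2(0)=y_0(0)$ and $z_2'(0)=y_0'(0)+1$, so that $W(0)=y_0(0)>0$. While $z_2>0$ the quantity $W$ is non-decreasing, hence $W\ge y_0(0)>0$, so $z_2/y_0$ is strictly increasing; integrating from $0$ gives $z_2\ge y_0>0$, and a standard continuation argument then upgrades this to $z_2>0$ on all of $\R^+$, with $z_2/y_0$ non-decreasing there.

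For the non-increasing solution, for each $T>0$ let $\zeta_T$ be the unique solution of the boundary-value problem above; since its only zero lies at $T$, $\zeta_T>0$ on $[0,T)$. I claim $\zeta_T/y_0$ is non-increasing on $[0,T]$: its Wronskian $W_T$ is non-decreasing on $[0,T)$ (because $\zeta_T>0$ there), so if $W_T$ were positive anywhere it would stay positive up to $T$, forcing $\zeta_T/y_0>0$ at $T$ and contradicting $\zeta_T(T)=0$. Hence $0\le\zeta_T/y_0\le 1$ on $[0,T]$, so the $\zeta_T$ are uniformly bounded on compact subsets of $\R^+$, and via $\zeta_T''=g\zeta_T$ (plus an interpolation bound for $\zeta_T'$) bounded there in $C^1$ as well; letting $T\to\infty$ along a subsequence produces a solution $z_1$ of $z''-gz=0$ with $z_1(0)=y_0(0)>0$ and $z_1/y_0$ non-increasing, as a pointwise limit of non-increasing functions on expanding intervals. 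Finally $z_1>0$ on all of $\R^+$, since a non-negative solution whose ratio to $y_0$ is non-increasing and which vanishes at one point must vanish identically beyond that point, hence everywhere by uniqueness, contradicting $z_1(0)>0$. Then $z_1$ and $z_2$ are the required solutions.

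The only genuinely delicate point is this last construction: the finite boundary-value problems deliver monotonicity of $\zeta_T/y_0$ only on $[0,T]$, so one must pass to the limit $T\to\infty$ and, crucially, rule out that the limiting solution degenerates to $0$ at a finite point; everything else is routine ODE comparison and continuation. This lemma then feeds, via comparison with the constant potential $M^2$, into the proof of Lemma~\ref{Keyleapp1} and hence of Lemma~\ref{Keyle2}.
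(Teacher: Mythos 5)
Your proposal is correct, but note that the paper does not actually prove this statement: Lemma \ref{Keyleaap2} is quoted verbatim as Lemma C of \cite{kato} and used as a black box in the proof of Lemma \ref{Keyleapp1}, so there is no in-paper argument to compare against. What you supply is a valid self-contained proof, and it is essentially the standard comparison argument: your quantity $W=z'y_0-zy_0'$ satisfies $W'=(g-f)y_0z$ and $(z/y_0)'=W/y_0^2$ (equivalently, writing $z=y_0w$ gives $(y_0^2w')'=(g-f)y_0^2w$), and the two monotone ratios come from the two natural ways of exploiting this. The construction of $z_2$ (shoot with $W(0)>0$, use monotonicity of $W$ while $z_2>0$, conclude $z_2\ge y_0$ and continue) is airtight. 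The construction of $z_1$ is the genuinely delicate half, and you handle its two pressure points correctly: solvability and uniqueness of the two-point problems $\zeta_T(0)=y_0(0)$, $\zeta_T(T)=0$ follows from your Sturm observation that no nontrivial solution of $z''-gz=0$ can have two zeros (since $y_0>0$ and $-f\ge -g$), the sign argument forcing $W_T\le 0$ on $[0,T)$ is correct, the bound $0\le\zeta_T\le y_0$ gives the compactness needed to pass to a limit (one could streamline this by noting the initial slopes $\zeta_T'(0)$ are bounded and extracting a convergent subsequence of initial data, rather than invoking $C^1$ interpolation, but your route works), and the final positivity of $z_1$ is correctly reduced to uniqueness for the Cauchy problem. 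In short: the proof is right; it simply replaces a citation in the paper by a complete argument, at the modest cost of about a page, which is a reasonable trade if one wants the appendix to be self-contained.
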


\begin{proof}[\bf Proof of Lemma \ref{Keyleapp1}]
By the definition, there exists $x_0$ such that  $ \widetilde{q}(x)>4M^2+1$ for $x>x_0.$

Let $y_0=e^{-2Mx}$. Obviously,
\begin{equation*}
 y_0 ^{\prime\prime} - 4M^2y_0(x)=0.
\end{equation*}
Applying Lemma \ref{Keyleaap2} and the fact $ \widetilde{q}(x)>4M^2+1$ for $x>x_0$, one has
equation  \eqref{Gnescapp} has a positive solution $y$ such that
$\frac{y}{y_0}$ is non-increasing on $[x_0,\infty)$. This implies  \eqref{Gnesc1app}.


\end{proof}

 \section*{Acknowledgments}
 The author  wishes  to thank Peter Hislop  for some helpful conversations.
   The research was supported by  NSF DMS-1401204 and  NSF DMS-1700314.

\footnotesize

\end{document}